\documentclass[pdflatex,sn-mathphys-num]{sn-jnl}
\usepackage{graphicx}%
\usepackage{multirow}%
\usepackage{amsmath,amssymb,amsfonts}%
\usepackage{dsfont}
\usepackage{bbold}
\usepackage{amsthm}%
\usepackage{mathrsfs}%
\usepackage[title]{appendix}%
\usepackage{xcolor}%
\usepackage{textcomp}%
\usepackage{manyfoot}%
\usepackage{booktabs}%
\usepackage{algorithm}%
\usepackage{algorithmicx}%
\usepackage{algpseudocode}%
\usepackage{listings}%
\usepackage{centernot}

\theoremstyle{thmstyleone}
\newtheorem{theorem}{Theorem}
\newtheorem{proposition}[theorem]{Proposition}
\newtheorem{lemma}[theorem]{Lemma}

\newtheorem{corollary}[theorem]{Corollary}
\newtheorem{assumption}[theorem]{Assumption}
\theoremstyle{thmstyletwo}
\newtheorem*{example}{Example}

\theoremstyle{thmstylethree}
\newtheorem{definition}[theorem]{Definition}

\newcommand{\integ}[4]{\int_{#1}^{#2}#3\text{d}#4}
\newcommand{\graph}{\mathcal{G}}
\newcommand{\tder}{\frac{\text{d}}{\text{d}t}}
\newcommand{\diag}[1]{\text{diag}\big(#1\big)}

\begin{document}

\title[Graph-theoretic framework for semi-Markov networks: occupation probabilities, hitting times and weak ergodicity breaking]{Graph-theoretic framework for semi-Markov networks: transient and stationary occupation probabilities, hitting times and weak ergodicity breaking}

\author*{\fnm{Julian B.} \sur{Voits} }\email{voits@thphys.uni-heidelberg.de}

\affil{Institute for Theoretical Physics, Heidelberg University, Germany}

\affil{BioQuant-Center for Quantitative Biology, Heidelberg University, Germany}

\abstract{Semi-Markov networks provide a natural framework for describing stochastic transitions between coarse-grained states with non-exponential dwell times, with applications ranging from biophysics and neuroscience to epidemiology and finance. Here, it is shown that finite semi-Markov networks admit explicit combinatorial representations of their Laplace-transformed occupation probabilities and hitting-time distributions as sums over weighted spanning trees and forests of an augmented graph. These representations yield asymptotic results, including a generalization of the Markov Chain Tree Theorem to semi-Markov networks, in which the rate constants replaced by pseudo-rates given by the inverse mean sojourn time weighted by the exit probability through this channel, as well as the long-time scaling of hitting-time survival probabilities in the presence of infinite-mean sojourn times. They also extend recently derived combinatorial formulas for hitting-time moments to the semi-Markov setting. In particular, the mean hitting-time formula coincides with that of a Markov network with the corresponding pseudo-rates, which also yields a short proof of the constancy of Kemeny’s constant for finite-mean semi-Markov networks. Finally, we use the same framework to show that weak ergodicity breaking occurs generically for time-additive state observables that are nonconstant on the support of the limiting distribution in heavy-tailed semi-Markov networks.}

\keywords{semi-Markov networks, graph theory, hitting times, weak ergodicity breaking}

\maketitle

\section{Introduction}\label{sec1}
Markov networks describe memoryless transitions between discrete states, implying exponentially distributed dwell (or sojourn) times. This description is, e.g., natural for chemical reaction networks on a fundamental level~\cite{gillespie1992rigorous, gunawardena2012linear} but only retains validity for meta-stable coarse-grained states and a strong separation of timescales~\cite{prinz2011markov}. However, for many complex systems of interest, this assumption fails and sojourn time distributions clearly deviate from exponential laws. Examples from biophysics include ion-channel opening~\cite{goychuk2002ion}, cell-cycle progression~\cite{jolly2022cycleflow} and neural dynamics~\cite{chen2009discrete}. Moreover, semi-Markov networks also provide an important modeling framework, e.g., for epidemiology~\cite{worthington2025evaluating} or credit-risk~\cite{d2006homogeneous}.

A classical result in the theory of finite Markov networks is the Markov Chain Tree Theorem (MCTT)~\cite{anantharam1989proof,pitman2018tree}, which states that for an irreducible network, the stationary occupation probability of a state $i$ can be expressed as
\begin{align}
    \pi_i=\frac{\sum_{\mathcal{T}\in T_{[i]}}w(\mathcal{T})}{\sum_{\mathcal{T}\in T}w(\mathcal{T})},\label{eq:MCTT}
\end{align}
where $T$ denotes the set of spanning trees of the network interpreted as a directed graph $\graph=(V(\graph),E(\graph))$ and $T_{[i]}$ is the set of  spanning trees with root $i\in V(\graph)$. For continuous-time Markov networks, the weights $w(\mathcal{T})$ are defined as products over the transition rates $k_{ij}$ on the tree:
\begin{align}
    w(\mathcal{T})=\prod_{(i,j)\in E(\mathcal{T})}k_{ij}.
\end{align}
This identity has proven useful for deriving structural statements about the stationary state of finite Markov networks, for example, in the theory of random perturbations and in the mathematical description of biochemical networks~\cite{schnakenberg1976network,freidlin1998random,mirzaev2013laplacian,owen2023size,bebon2026mutual}. Related expressions also exist for stationary currents and mean hitting times and have been developed into a thermodynamic calculus for nonequilibrium steady states by Khodabandehlou and collaborators~\cite{khodabandehlou2022trees,khodabandehlou2023drazin,khodabandehlou2024vanishing,khodabandehlou2024poisson}. While such a graph-theoretic approach to Markov networks remained widely regarded as specific for the stationary case, recent work has extended the framework to transient probabilities and hitting-time densities~\cite{nam2025algebraic,voits2026emergence,voits2026graph}. A key observation to derive this extension is that the resolvent $(s\mathbb{1}-W)^{-1}$ of a Markov network permits an explicit combinatorial representation by identifying $s\mathbb{1}-W$ as the principal minor of an augmented graph as a result of the All-Minors Matrix-Tree Theorem~\cite{chaiken1982combinatorial}. This allows for an explicit expression of the Laplace transform of the occupation probability $\tilde{p}_i(s)$ of a state $i\in V(\graph)$ in terms of spanning trees and forests of the augmented graph $G^\ast=\big(V(\graph)\cup\{\ast\},E(\graph)\cup\{(n,\ast)~|~n\in V(\graph)\}\big)$, similarly to Eq.~(\ref{eq:MCTT}):
\begin{align}
    \tilde{p}_i(s)&=\sum_{j\in V(\graph)}\frac{\sum_{\mathcal{F}^{s,j\to i}_{[i,\ast]}}w(\mathcal{F})}{\sum_{\mathcal{T}^{s}_{[\ast]}}w(\mathcal{T})}p_{j},(0),\label{eq:introduction_transient_occupation_probabilities}
\end{align}
where the term in the denominator sums over the spanning trees of the augmented network with root $\ast$, and the numerator spans over two-component spanning forests with roots $i$ and $\ast$, where the vertex $j$ is included in the component containing $i$.
Analogously, the hitting-time density $\tilde{f}_{i\to N}(s)$ from state $i\in V(\graph)$ to state $N\in V(\graph)$ can be expressed as
\begin{align}
    \tilde{f}(s)&=p_{0,N}+\sum_{j=1}^{N-1}\frac{\sum_{i=1}^{N-1}k_{iN}\sum_{\mathcal{F}^{s,j\to i}_{[i,N,\ast]}}w(\mathcal{F})}{\sum_{\mathcal{F}^{s}_{[N,\ast]}}w(\mathcal{F})}p_{j}(0).\label{eq:introduction_FPT_distribution}
\end{align}

In this work, the graph-theoretic framework is extended to semi-Markov networks, i.e., to the case of general waiting-time-dependent transitions. Replacing the constant jump rates in the Markovian case by memory kernels in Laplace space, $k_{ij}\mapsto\tilde{K}_{ij}(s)$, Eqs.~(\ref{eq:introduction_transient_occupation_probabilities}) and (\ref{eq:introduction_FPT_distribution}) are shown to retain validity for semi-Markov networks (Theorem \ref{theorem:transient_semi_MCTT} and Proposition \ref{prop_fpt_distribution_semi_Markov}, respectively). In particular, for finite-mean waiting times, the stationary occupation probabilities have the same tree representation as in Eq.~(\ref{eq:MCTT}), with the transition rates replaced by pseudo-rates $\kappa_{ij}=\frac{p_{ij}}{\mathbb{E}[\tau_i]}$, where $p_{ij}$ is the probability for a transition from $i$ to $j$ and $\mathbb{E}[\tau_i]$ is the mean sojourn time in state $i$ (Theorem \ref{theorem:SMCTT}). We also show that, for infinite-mean sojourn times with power-law tails, an analogous graph-theoretic identity holds, but only the heaviest-tailed states that can serve as roots of spanning trees of the underlying graph retain positive occupation probability in the  $t\to\infty$-limit. 

The long-time scaling of the hitting-time distribution is also dominated by the heaviest tails in the presence of infinite-mean sojourn times, and in the finite-mean case, the graph-theoretic formula for the mean hitting time again coincides with the known expression for Markov networks with the pseudo-rates, while higher moments involve additional terms which vanish in the Markovian case (Theorem \ref{theorem_fpt_semi_Markov}). Moreover, the expressions for the stationary occupation probabilities and the mean hitting times allow for a simple proof that Kemeny's constant remains independent of the starting state in semi-Markov networks with finite-mean sojourn times, including an explicit graph-theoretic representation~\cite{pitman2018tree}, a result that was previously obtained by algebraic means~\cite{hunter2018kemeny}.

Finally, the graph-theoretic framework for finite semi-Markov networks is applied to weak ergodicity breaking. The concept of ergodicity is central to the foundations of statistical mechanics and the idea can be traced back to the works of Boltzmann and others in the late nineteenth century~\cite{gallavotti1995ergodicity}. Ergodicity theory then became a mathematical discipline in its own right with several applications in statistical physics and beyond~\cite{einsiedler2011ergodic}.  
In comparison, the study of weak ergodicity breaking is relatively recent: in the 1990s, it was realized that semi-Markov networks with heavy-tailed waiting times can give rise to time-averaged occupation fractions which converge to the occupation probabilities on average but not almost surely~\cite{bouchaud1992weak}. This notion was investigated more thoroughly in the context of continuous-time random walks in the 2000s by Bel, Barkai, Metzler and others, who derived non-trivial limiting distributions for general time-averaged observables in the presence of heavy-tailed waiting times~\cite{bel2005weak,rebenshtok2007distribution,he2008random}. 

A time-additive state observable
\begin{align}
    A(t)=\integ{0}{t}{g(X_{t'})}{t'},
\end{align}
where $X_t$ is the stochastic process on the semi-Markov network and $g:V(\graph)\to \mathbb{R}$ is an arbitrary function assigning weights to the different states, exhibits weak ergodicity breaking if $\frac{\mathbb{E}[A(t)]}{t}$ converges to the expectation value of $g$ in the limiting state distribution for long times but $\frac{A(t)}{t}$ does not converge almost surely~\cite{bouchaud1992weak}. Weak ergodicity breaking is shown to be a general property of finite semi-Markov networks under a mild small-$s$ scaling condition on the memory kernels if $g$ is not constant on the support of the limiting state distribution (Theorem \ref{theorem:weak_ergodicity_breaking}), recovering the known relation for random-walk-like processes, and extending it to finite directed semi-Markov networks with heterogeneous transition-time distributions.

The memoryless property provides a rich mathematical structure for studying Markov networks in terms of semigroups, spectral theory, and martingale methods~\cite{norris1998markov}, but this structure does not directly carry over to the semi-Markov case. In contrast, the graph-theoretic framework generalizes rather naturally to semi-Markov networks and allows us to prove properties of these systems on an elementary, combinatorial basis. These tools may help deepen our understanding of semi-Markov networks in the future.

\section{Semi-Markov networks and generalized master equation}

In this section, we introduce basic concepts for the theory of semi-Markov networks. The results are all standard but helpful for establishing the conventions for the remainder of this work. We start with the definition of a semi-Markov network:

\begin{definition}
    Let $\mathcal{G}=(V(\graph),E(\graph))$ be a finite directed graph, with vertex set $V(\mathcal{G})=\{1,...,N\}$ for $N\in\mathbb{N}$ and edge set $E(\graph)\subseteq \{(i,j)\in V(\graph)\times V(\graph)~|~i\neq j\}$. A \textit{state} $n\in V(\mathcal{G})$ is called \textit{absorbing} if $\forall m\in V(\mathcal{G}):(n,m)\notin E(\graph)$ and \textit{non-absorbing} otherwise. We denote the set of non-absorbing states as $V^\circ(\graph)\subseteq V(\graph)$. A finite \textit{semi-Markov network} on $\graph$ is specified by a family of finite non-negative measures $\big(Q_{ij}\big)_{(i,j)\in E(\graph)}$ on $[0,\infty)$ with $Q_{ij}(\{0\})=0$ for all $(i,j)\in E(\graph)$ such that the quantities
    \begin{align}
      p_{ij}:=Q_{ij}([0,\infty))>0
    \end{align}
    satisfy for every $i\in V^\circ(\graph)$:
    \begin{align}
        \sum_{j:(i,j)\in E(\graph)}p_{ij}=1.
    \end{align}
    The measure $Q_{ij}(dt)$ is interpreted as the joint probability that, conditional on entering state $i$, the next state is $j$ and the sojourn time lies in $dt$. 
\end{definition}
Note that the restriction $Q_{ij}(\{0\})=0$ excludes atoms at $t=0$ in order to avoid instantaneous jumps.

We use the following notation throughout:
For $i\in V^\circ(\graph)$, define the probability measure
\begin{align}
    Q_i:=\sum_{j:(i,j)\in E(\graph)}Q_{ij},
\end{align}
which describes the distribution of a \textit{sojourn time} $\tau$ in state $i$ and satisfies
\begin{align}
    Q_i([0,\infty))=\sum_{j:(i,j)\in E(\graph)}p_{ij}=1.
\end{align}
The \textit{survival probability} in $i$ is denoted as
\begin{align}
    \phi_i(t):=Q_i((t,\infty)).
\end{align}
If $Q_i$ has a density $\psi_i=\sum_{j:(i,j)\in E(\graph)}\psi_{ij}$, this reduces to
\begin{align}
    \phi_i(t)=\integ{t}{\infty}{\psi_i(t')}{t'}.
\end{align}

We also introduce the \textit{memory kernels} $K_{ij}$, defined via the following relation in Laplace space:
\begin{align}
    \tilde{K}_{ij}(s):=\frac{\tilde{Q}_{ij}(s)}{\tilde{\phi}_i(s)},
\end{align}
where
\begin{align}
\tilde{Q}_{ij}(s):=\int_{[0,\infty)}e^{-st}Q_{ij}(dt)
\end{align}
denotes the Laplace-Stieltjes transform of $Q_{ij}$. For a locally integrable function $f$, we use the notation
\begin{align}
\tilde{f}(s)=\integ{0}{\infty}{f(t)e^{-st}}{t}
\end{align}
whenever the integral exists. More details on the Laplace-Stieltjes transform are given in the Appendix.

The Laplace transform of the survival probability is given by
\begin{align}
    \tilde{\phi}_i(s)&=\integ{0}{\infty}{e^{-st}Q_i((t,\infty))}{t}\\ &=\integ{0}{\infty}{e^{-st}\integ{[0,\infty)}{}{\mathbb{1}_{\{t'>t\}}Q_i(}{t')}}{t}\\ &=\integ{[0,\infty)}{}{\integ{0}{t'}{e^{-st}}{t}~Q_i(}{t')}\\ &=\integ{[0,\infty)}{}{\frac{1-e^{-st'}}{s}~Q_i(}{t')}\\ &=\frac{1-\tilde{Q}_i(s)}{s},
\end{align}
using that $Q_i([0,\infty))=1$ and the Fubini-Tonelli theorem, which allows us to exchange the integrals to obtain the third equality because the integrands are non-negative.

For absorbing states $i\in V(\graph)\setminus V^\circ(\graph)$, we set
\begin{align}
    \phi_i(t):=1,~~\tilde{Q}_{ij}(s):=0,~~\tilde{K}_{ij}(s):=0 \text{ for all $j\in V(\graph)$}.
\end{align}

Moreover, the following two matrices will be important:
\begin{align}
    \tilde{W}_{ij}(s)&:=\begin{cases}
    -\sum_{l\in V(\graph)}\tilde{K}_{il}(s)&\text{if }i=j,\\
    \qquad\tilde{K}_{ji}(s)&\text{if }i\neq j,\label{eq:def_W_matrix}
    \end{cases}\\
    \tilde{Q}_{ij}(s)&:=\begin{cases}
    \displaystyle\int_{[0,\infty)}e^{-st}Q_{ij}(dt)&\text{if }(i,j)\in E(\graph),\\
    ~~~~0&\text{else},
    \end{cases}
\end{align}
which are of the shape of a Laplacian and an adjacency matrix of a weighted directed graph, as explained in Sec.\ref{sec:graph_theoretic_structure}. for convenience, we define
\begin{align}
    \tilde{Q}_i(s):=\int_{[0,\infty)}e^{-st}Q_i(dt)
    =\sum_{j:(i,j)\in E(\graph)}\tilde{Q}_{ij}(s).
\end{align}
The two matrices are connected as follows:
\begin{lemma}
    Let $\diag{\tilde{\phi}(s)}:=\diag{\tilde{\phi}_1(s),..., \tilde{\phi}_N(s)}$ and $\tilde{W}(s)$ and $\tilde{Q}(s)$ as defined above. Then the following identity holds:
    \begin{align}
        (s\mathbb{1}-\tilde{W}(s))\diag{\tilde{\phi}(s)}=\mathbb{1}-\tilde{Q}^T(s),
    \end{align}
    where $\mathbb{1}$ denotes the identity matrix in $N$-dimensions.
    
    \noindent Moreover, $\diag{\tilde{\phi}(s)}$ is invertible, so also
    \begin{align}
        s\mathbb{1}-\tilde{W}(s)=\big(\mathbb{1}-\tilde{Q}^T(s)\big)\diag{\tilde{\phi}(s)}^{-1}.
    \end{align}
    \label{lemma:Q_W_identity}
\end{lemma}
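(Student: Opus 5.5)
The identity is verified entrywise, comparing diagonal and off-diagonal entries of both sides for fixed $s>0$. Since $\diag{\tilde{\phi}(s)}$ is diagonal, right multiplication scales column $j$ by $\tilde{\phi}_j(s)$. The left-hand side therefore has entries $(s\delta_{ij}-\tilde{W}_{ij}(s))\tilde{\phi}_j(s)$.

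For $i\neq j$, the definition \eqref{eq:def_W_matrix} gives $-\tilde{W}_{ij}(s)\tilde{\phi}_j(s)=-\tilde{K}_{ji}(s)\tilde{\phi}_j(s)$. If $j$ is non-absorbing, this equals $-\tilde{Q}_{ji}(s)=-(\tilde{Q}^T(s))_{ij}$ by the definition of the memory kernel; this also covers $(j,i)\notin E(\graph)$, where both sides vanish. If $j$ is absorbing, both sides are zero by the convention $\tilde{K}_{ji}=\tilde{Q}_{ji}=0$.

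For the diagonal entries there are two cases.
\begin{itemize}
\item If $i$ is non-absorbing, then $(s+\sum_l\tilde{K}_{il}(s))\tilde{\phi}_i(s)=s\tilde{\phi}_i(s)+\sum_l\tilde{Q}_{il}(s)=(1-\tilde{Q}_i(s))+\tilde{Q}_i(s)=1$. This uses the formula $\tilde{\phi}_i(s)=(1-\tilde{Q}_i(s))/s$ derived above.
\item If $i$ is absorbing, then $\tilde{\phi}_i(s)=\int_0^\infty e^{-st}\,dt=1/s$ and all $\tilde{K}_{il}$ vanish, so the entry is $s\cdot\frac1s=1$.
\end{itemize}
In both cases $\tilde{Q}_{ii}=0$, since there are no self-loops, so the right-hand side diagonal entry is also $1$.

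For invertibility, it suffices that every $\tilde{\phi}_i(s)>0$. For absorbing $i$ this is immediate, since $\tilde{\phi}_i(s)=1/s$. For non-absorbing $i$, write $\tilde{\phi}_i(s)=\int_{[0,\infty)}\frac{1-e^{-st'}}{s}\,Q_i(dt')$. The integrand is strictly positive for $t'>0$, and $Q_i(\{0\})=0$ while $Q_i([0,\infty))=1$, so $Q_i$ puts positive mass on $(0,\infty)$ and the integral is strictly positive. Equivalently, $\tilde{Q}_i(s)<1$. This is the one step needing genuine care: it is where the no-atom-at-zero assumption is used, and the case $s>0$ must be fixed explicitly. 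Multiplying the first identity on the right by $\diag{\tilde{\phi}(s)}^{-1}$ then gives the second identity.
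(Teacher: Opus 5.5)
Your proof is correct and follows the paper's argument: you verify the identity entrywise (off-diagonal entries via $\tilde{K}_{ji}\tilde{\phi}_j=\tilde{Q}_{ji}$, diagonal entries via $\tilde{\phi}_i(s)=(1-\tilde{Q}_i(s))/s$, with the absorbing-state conventions handled separately), and you get invertibility from strict positivity of each $\tilde{\phi}_i(s)$. The only small difference is the positivity step. You derive it from $\tilde{\phi}_i(s)=\int_{[0,\infty)}\frac{1-e^{-st'}}{s}\,Q_i(dt')$ together with $Q_i(\{0\})=0$, whereas the paper cites $\phi_i\geq 0$ and $\phi_i(0)=1$, which tacitly relies on right-continuity of $\phi_i$; your version makes that step slightly more explicit.
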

\begin{proof}
    Using that $\forall i\in V^\circ(\graph)$:
    \begin{align}
        \tilde{\phi}_i(s)=&\frac{1}{s}\big(1-\tilde{Q}_i(s)\big),
    \end{align}
    and $\tilde{\phi}_i(s)=\frac{1}{s}$ for $i\in V(\graph)\setminus V^\circ(\graph)$, the identity can be explicitly verified: 
    \begin{align}
        \Big[(s\mathbb{1}-\tilde{W}(s))\diag{\tilde{\phi}(s)}\Big]_{ij}&=-\tilde{K}_{ji}(s)\tilde{\phi}_j(s)\\ &=-\tilde{Q}_{ji}(s),
    \end{align}
    and for $i=j\in V^\circ(\graph)$:
        \begin{align}
        \Big[(s\mathbb{1}-\tilde{W}(s))\diag{\tilde{\phi}(s)}]_{ii}&=\Big(s+\sum_{l=1}^N\tilde{K}_{il}(s)\Big)\frac{1-\tilde{Q}_i(s)}{s}\\ &=1-\tilde{Q}_i(s)+\sum_{l=1}^N\tilde{Q}_{il}(s)\\ &=1.
    \end{align}
    For absorbing states, $\tilde{Q}_i(s)=0$, $\tilde {K}_{il}(s)=0$, and $\tilde\phi_i(s)=\frac{1}{s}$ yield the same identity.

    Lastly, since $\forall i\in V(\graph)$ $\phi_i\geq0 $ and $\phi_i(0) =1 $, each Laplace transform is strictly positive, $\diag{\tilde{\phi}(s)}$ is invertible.
\end{proof}

For a given semi-Markov network on $\mathcal{G}$ and an initial distribution $(p_{0,i})_{i\in V(\graph)}\in[0,1]$ with $\sum_{i\in V(\mathcal{G})}p_{0,i}=1$, consider the discrete-time Markov process $(X^d_n)_{n\in\mathbb{N}_0}$ with transition matrix
\begin{align}
     P_{ij}:=\begin{cases}
            p_{ij}&\text{if}~(i,j)\in E(\mathcal{G})\\1&\text{if}~i=j\in V(\graph)\setminus V^\circ(\graph)\\ 0&\text{else},
        \end{cases}
\end{align}
initially distributed according to
$(p_{0,i})_{i\in V(\graph)}$. Conditional on the discrete trajectory, the sojourn times $(\tau_n)_{n\in\mathbb N_0}$ are sampled independently such that, conditional on $X_n^d=i$ and $X_{n+1}^d=j$, the distribution of $\tau_n$ is $\frac{Q_{ij}}{p_{ij}}$, i.e.,
\begin{align}
    \mathbb{P}\left(\tau_n\in B\mid X_n^d=i,X_{n+1}^d=j\right)
    =\frac{Q_{ij}(B)}{p_{ij}}
\end{align}
for every Borel set $B\subseteq[0,\infty)$. If $X_n^d$ is absorbing, we may set $\tau_n=\infty$, or equivalently stop the sequence of jumps. We set $t_0:=0$ and
\begin{align}
    t_n:=\sum_{k=0}^{n-1}\tau_k
\end{align}
for $n\geq 1$. Then, the corresponding \textit{semi-Markov process} is defined as
\begin{align}
    (X_t,\tau):=(X^d_{\max\{n\in \mathbb{N}_0 ~|~t_n~\leq~t\}},t-t_{\max\{n\in \mathbb{N}_0 ~|~t_n~\leq~t\}}),
\end{align}
where $X_t$ is the state at time $t$ and $\tau$ is the time since the last transition, with the convention that after an absorbing state is reached the process remains there forever. The \textit{occupation probability} of state $i\in V(\graph)$ is denoted as $p_i(t):=\mathbb P(X_t=i)$.

The occupation probabilities of semi-Markov processes obey a generalized master equation in Laplace space. The proof is standard, but the form is instructive for the graph-theoretic formulation developed below. 
\begin{proposition}
    Let $X_t$ be a semi-Markov process on a network on $\graph=\big(V(\graph), E(\graph)\big)$ with $\tilde{W}(s)$ as defined in Eq.~(\ref{eq:def_W_matrix}) with initial distribution $p_0=(p_{0,i})_{i\in V(\graph)}$. Then the Laplace transform of the occupation probabilities $p(t)=\big(p_i(t)\big)_{i\in V(\graph)}$ obeys the following relation:  
    \begin{align}
        (s\mathbb{1}-\tilde{W}(s))\tilde{p}(s)=p_0.\label{eq:generalized_ME_Laplace}
    \end{align}
\end{proposition}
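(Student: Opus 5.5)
By Lemma~\ref{lemma:Q_W_identity}, it suffices to show the equivalent renewal identity
\begin{align}
\big(\mathbb{1}-\tilde{Q}^T(s)\big)\,\diag{\tilde{\phi}(s)}^{-1}\tilde{p}(s)=p_0 .
\end{align}
The plan is to derive this from a decomposition of $p_i(t)$ by the last jump into $i$ before $t$. Let $\nu_i(dt)$ be the renewal (entrance) measure of state $i$: the expected number of entrances into $i$ at times in $dt$. Define $\nu_i:=\sum_{n\ge0}\mathbb{P}(X^d_n=i,\,t_n\in dt)$, so that the $n=0$ term contributes $p_{0,i}\delta_0$. We work for real $s>0$, where all transforms are finite and positive, and extend by analyticity if desired.

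\emph{Step 1 (occupation from entrances).} Since $\{X_t=i\}$ is the disjoint union over $n$ of $\{X^d_n=i,\ t_n\le t<t_{n+1}\}$, conditioning on $(X^d_0,\dots,X^d_n,t_n)$ and using that $\tau_n$ has law $Q_i$ given $X^d_n=i$, independent of the past, gives
\begin{align}
p_i(t)=\int_{[0,t]}\phi_i(t-u)\,\nu_i(du).
\end{align}
For absorbing $i$ we have $\phi_i\equiv1$, which is consistent. Tonelli then gives $\tilde{p}_i(s)=\tilde{\phi}_i(s)\tilde{\nu}_i(s)$, i.e.\ $\tilde{\nu}(s)=\diag{\tilde{\phi}(s)}^{-1}\tilde{p}(s)$.

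\emph{Step 2 (renewal equation for entrances).} Splitting the $n=0$ term from the $n\ge1$ terms and conditioning on the previous state $X^d_{n-1}=j$ and epoch $t_{n-1}$, the joint law of (next state $i$, sojourn) is $Q_{ji}$. Hence
\begin{align}
\nu_i=p_{0,i}\delta_0+\sum_{j}\nu_j * Q_{ji}.
\end{align}
Absorbing $j$ contribute nothing because $Q_{ji}=0$ for them. Taking Laplace–Stieltjes transforms, so that convolutions become products, yields $\tilde{\nu}(s)=p_0+\tilde{Q}^T(s)\tilde{\nu}(s)$. Combining this with Step~1 and Lemma~\ref{lemma:Q_W_identity} gives Eq.~(\ref{eq:generalized_ME_Laplace}).

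The main obstacle is justifying finiteness of $\tilde{\nu}_i(s)$ and the interchange of sum and transform. Tonelli handles the interchange since all terms are nonnegative. For finiteness, the first choice is the bound $\tilde{\nu}_i(s)\le \tilde{p}_i(s)/\tilde{\phi}_i(s)\le 1/(s\tilde{\phi}_i(s))<\infty$, which follows directly from Step~1 because $\tilde{\phi}_i(s)>0$. This bound avoids any need to show that the number of jumps is a.s.\ finite: the sum $\sum_n\mathbb{P}(X^d_n=i,t_n\in dt)$ transforms to $\sum_n\mathbb{E}[e^{-st_n}\mathbb{1}_{X^d_n=i}]$, which is controlled through $p_i$. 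A related subtlety is explosion, i.e.\ accumulation of jumps. It should not be an issue: the paper's definition of $X_t$ only requires the maximum index to be well-defined for the relevant $t$, and Step~1 holds on $\{X_t=i\}$ regardless.
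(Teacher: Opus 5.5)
Your proof is correct and follows the paper's argument essentially verbatim. Your entrance measure $\nu_i$ is the paper's $R_i$, and the paper likewise combines the last-entrance decomposition $p_i(t)=\int_{[0,t]}\phi_i(t-u)R_i(du)$ with the renewal equation $R_i=p_{0,i}\delta_0+\sum_j R_j*Q_{ji}$ and Lemma~\ref{lemma:Q_W_identity}. Your bound $\tilde{\nu}_i(s)\le 1/(s\tilde{\phi}_i(s))$ is a useful addition, since the paper rearranges $\tilde{R}(s)=p_0+\tilde{Q}^T(s)\tilde{R}(s)$ without checking that $\tilde{R}_i(s)$ is finite.
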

\begin{proof}
    Consider
    \begin{align}
        R_i(B):=\mathbb{E}\big[\sum_{n\geq 0}\mathbb{1}_{\{X_n^d=i,\ t_n\in B\}}\big]
    \end{align}
    for Borel sets $B\subseteq[0,\infty)$. Then
    \begin{align}
         R_i(B)=p_{0,i}\delta_0(B)+\sum_{j\in V(\graph)}(R_j*Q_{ji})(B).
    \end{align}
    In Laplace space, this yields
    \begin{align}
        \tilde{R}_i(s)=p_{0,i}+\sum_{j\in V(\graph)}\tilde{Q}_{ji}(s)\tilde{R}_j(s).
    \end{align}
    Writing $(\tilde{R}(s))_i=\tilde{R}_i(s)$, $\tilde{R}(s)$ can therefore be expressed as:
    \begin{align}
        (\mathbb{1}-\tilde{Q}^T(s))\tilde{R}(s)=p_0. \label{eq:R_Laplace_space}      
    \end{align}
    The occupation probabilities then follow as:
    \begin{align}
         p_i(t)=\int_{[0,t]}\phi_i(t-u)R_i(du),
    \end{align}
    which in Laplace space reads:
    \begin{align}
        \tilde{p}(s)=\diag{\tilde{\phi}(s)}\tilde{R}(s),
    \end{align}
    where $(\tilde{p}(s))_i=\tilde{p}_i(s)$ and $\diag{\tilde{\phi}(s)}:=\diag{\tilde{\phi}_1(s),...,\tilde{\phi}_N(s)}$. Combining this expression with Eq.~(\ref{eq:R_Laplace_space}) yields
    \begin{align}
        (\mathbb{1}-\tilde{Q}^T(s))\diag{\tilde{\phi}_1(s),...,\tilde{\phi}_N(s)}^{-1}\tilde{p}(s)=p_0,
    \end{align}
    using that $\tilde{\phi}_i(s)>0$ since $\phi_i\geq 0$ and $\phi_i(0)=1$. 
    Applying Lemma \ref{lemma:Q_W_identity}, we can alternatively express this equation as:
    \begin{align}
        (s\mathbb{1}-\tilde{W}(s))\tilde{p}(s)=p_0.
    \end{align}
\end{proof}

Under additional regularity assumptions, or more generally in the
sense of distributions, Eq.~(\ref{eq:generalized_ME_Laplace}) corresponds to an integro-differential equation:
\begin{align}
    \tder p_i(t)&=\integ{0}{t}{\sum_{j\in V(\graph)}W_{ij}(t-t')p_j(t')}{t'}\\ &=\integ{0}{t}{\sum_{j\in V(\graph)\setminus\{i\}}K_{ji}(t-t')p_j(t')}{t'}-\integ{0}{t}{\sum_{l\in V(\graph)}K_{il}(t-t')p_i(t')}{t'},
\end{align}
known as the generalized master equation. In particular, for exponentially distributed transition times, $Q_{ij}(dt)=k_{ij}\exp{(-\sum_l k_{il}t)}~\text{d}t$ for $k_{ij}>0$, the Laplace transforms read: 
\begin{align}
    \tilde{Q}_{ij}(s)=\frac{k_{ij}}{s+\sum_l k_{il}},
\end{align}
which means that the Laplace transform of the memory kernels is constant:
\begin{align}
    \tilde{K}_{ij}(s)=\frac{s\tilde{Q}_{ij}(s)}{1-\tilde{Q}_i(s)}=\frac{s\frac{k_{ij}}{s+\sum_l k_{il}}}{1-\sum_l\frac{k_{il}}{s+\sum_l k_{il}}}=k_{ij},
\end{align}
corresponding to $k_{ij}\delta(t)$ in time domain. Then, $\tilde{W}(s)$ turns into the generator of a Markov process and the usual master equation is recovered:
\begin{align}
    \tder p_i(t)&=\sum_{j\in V(\graph)\setminus\{i\}}k_{ji}p_j(t)-\sum_{l\in V(\graph)}k_{il}p_i(t).
\end{align}

\section{Graph-theoretic structure}\label{sec:graph_theoretic_structure}
The combinatorial structure of the occupation probabilities and similar quantities of interest follows from identifying $s\mathbb{1}-\tilde{W}(s)$ and $\tilde{Q}(s)$ with the Laplacian matrix and the adjacency matrix of a weighted and directed graph, respectively. We employ the following definitions and conventions:
\begin{definition}
    Let $\graph=(V(\graph),E(\graph))$. For $i,j\in V(\graph)$, a (non-self-intersecting) \textit{path from i to j}, $\mathcal{P}^{i\to j}\subseteq\graph$, is a subgraph of the shape:
    \begin{align}
        \mathcal{P}^{i\to j}=\big(\{i, l_1,...,l_{n-1},j\},\{(i, l_1), (l_1,l_2),...,(l_{n-1},j)\}\big).
    \end{align}
    This includes the trivial path $\mathcal{P}^{i\to i}=(\{i\},\emptyset )$. If such a path $\mathcal{P}^{i\to j}$ exists, we say that $i$ is \textit{connected to} $j$ (in the edge-directed sense).

    Given a set of \textit{roots} $I=\{i_1,...,i_n\}\subseteq V(\graph)$, $I\neq \emptyset$, a \textit{spanning forest} 
    \begin{align}
        \mathcal{F}_{[i_1,...,i_n]}=\big(V(\graph),E(\mathcal{F}_{[i_1,...,i_n]})\big)\subseteq\graph
    \end{align}
    is a subgraph, where for every $j\in V(\graph)$, there is exactly one path $\mathcal{P}^{j\to i}$ from $j$ to exactly one $i\in I$. The set of all spanning forests of $\graph$ with root set $I=\{i_1,...,i_n\}$ is denoted as $F_{[i_1,...,i_n]}$.
    Moreover, a spanning forest with exactly one root $i\in V(\graph)$ is called a \textit{spanning tree}. Instead of $\mathcal{F}_{[i]}$ and $F_{[i]}$, we then write $\mathcal{T}_{[i]}$ and $T_{[i]}$ to emphasize this particular structure.
\end{definition}
By convention, we set $F_{[i_1,...i_n]}=\emptyset$, if $i_l=i_m$ for some $l\neq m$.

We often need to restrict a set of spanning forests to a subset containing a path between two vertices $l,m\in V(\graph)$, which we denote as follows:
\begin{align}
    F_{[i_1,...,i_n]}^{l\to m}:=F_{[i_1,...,i_n]}\cap \{\mathcal{S}\subseteq \graph~|~\mathcal{S}\text{ contains a $l$ to $m$ path} \}.
\end{align}

An illustration of these concepts is given in Fig.~\ref{fig:graph_theory_basics}: an example of a (directed) graph is shown in Fig.~\ref{fig:graph_theory_basics}(a) and a path as a subgraph of it is shown in Fig.~\ref{fig:graph_theory_basics}(b). Fig.~\ref{fig:graph_theory_basics}(c) shows an example of a spanning forest with two roots containing this path and Fig.~\ref{fig:graph_theory_basics}(d) shows an example of a spanning tree, i.e., a spanning forest with exactly one root.
\begin{figure}[t]
    \centering
    \includegraphics[width=1\linewidth]{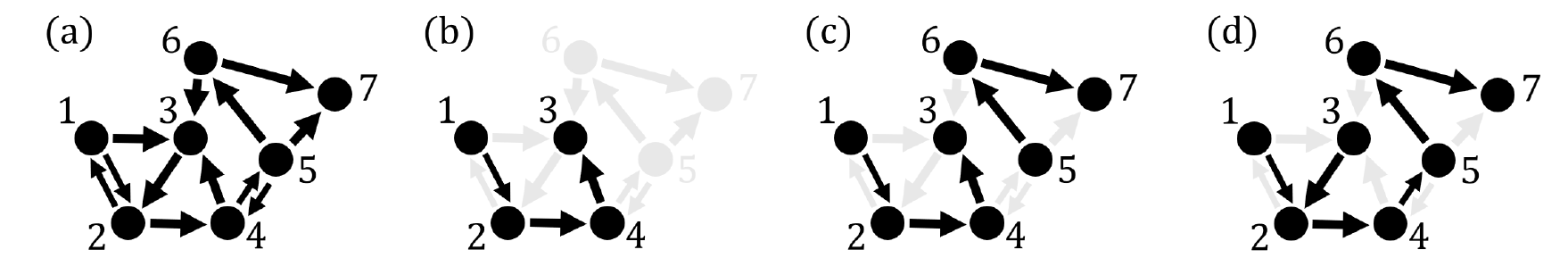}
    \caption{Illustration of graph-theoretic concepts. (a) A directed graph $\graph=(V(\graph),E(\graph))$ with $7$ vertices. (b) A path $\mathcal{P}^{1\to 3}\subset\graph$ from $1$ to $3$. (c) A spanning forest $\mathcal{F}^{1\to3}_{[3,7]}\subset\graph$ with roots $3$ and $7$ containing a path from $1$ to $3$. (d) A spanning tree $\mathcal{T}_{[7]}\subset\graph$ with root $7$.}
    \label{fig:graph_theory_basics}
\end{figure}
\begin{definition}
    Let $\graph =\big(V(\graph), E(\graph)\big)$ be a directed graph, and let
    \begin{align}
        w:E(\graph)&\to (0,\infty)\\ (i,j)&\mapsto w(i,j)
    \end{align}
    be a weight function that assigns a weight $w(i,j)>0$ to every $(i,j)\in E(\graph)$. The \textit{adjacency matrix} $A\in\mathbb{R}^{N\times N}$ of the weighted graph $(\graph,w)$ is defined as:
    \begin{align}
        A_{ij}:=\begin{cases}
            w(i,j)&\text{if }(i,j)\in E(\graph)\\~~~~0&\text{else},
        \end{cases}
    \end{align}
    and the \textit{(out-degree) Laplacian matrix} $L$ is defined as:
    \begin{align}
        L_{ij}:=\begin{cases}
            ~~ -A_{ij}&\text{if }i\neq j\\\sum_{l\in V(\graph)}A_{il}&\text{if }i=j,
        \end{cases}.
    \end{align}
\end{definition}

For a given weight function $w$, we define the weight of a subgraph $\mathcal{S}=\big(V(\mathcal{S}),E(\mathcal{S})\big)\subseteq \graph$ to be the product of the weights of its edges, i.e.,
\begin{align}
    w(\mathcal{S}):=\prod_{(i,j)\in E(\mathcal{S})}w(i,j),
\end{align}
with the convention that $w((V(\graph),\emptyset))=1$. The All-Minors Matrix-Tree Theorem allows us to express the determinants of minors of the Laplacian matrix of a weighted and directed graph in terms of sums over the weight of spanning forests. It was first proved in full generality by Chaiken~\cite{chaiken1982combinatorial}. We do not state the full theorem here but only the following special cases relevant for our purposes:

\begin{theorem}
    Let $(\graph,w)$ be a directed and weighted graph with vertex set $V(\graph)=\{1,...,N\}$. Let $L$ be its Laplacian matrix, $i,j\in V(\graph)\setminus\{N\}$ and let $L_{\{N\},\{N\}}$ and $L_{\{i,N\},\{j,N\}}$ be the $N$-th principal  minor and the minor obtained from deleting the rows $i$ and $N$ and the columns indexed by $j$ and $N$, respectively. Then the following identity holds for their determinants:
    \begin{align}
        \det{L_{\{N\},\{N\}}}&=\sum_{\mathcal{T}\in T_{[N]}}w(\mathcal{T}),\label{eq_n-th-principal-minor}\\
        \det{L_{\{i,N\},\{j,N\}}}&=(-1)^{i+j}\sum_{\mathcal{F}\in F^{j\to i}_{[i,N]}}w(\mathcal{F}).\label{eq:_n_i_n_j-minor}
    \end{align}
    \label{theorem:AMMTT}
\end{theorem}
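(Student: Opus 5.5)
The plan is to derive both identities from the Cauchy--Binet formula together with a combinatorial classification of the functional subgraphs of $\graph$. These are the subgraphs in which each non-root vertex has exactly one outgoing edge. This is the standard route to Chaiken's theorem, specialized to the two minors at hand.

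For the principal minor, write $L=D-A$ with $D=\diag{\sum_l A_{1l},\dots,\sum_l A_{Nl}}$. Delete row and column $N$ and expand $\det L_{\{N\},\{N\}}$ by multilinearity in the rows. Each row $r\neq N$ is $\sum_{l} A_{rl}(e_r-e_l)$, with $e_N$ interpreted as $0$ after deleting column $N$. Hence
\begin{align}
\det L_{\{N\},\{N\}}=\sum_{\sigma}\prod_{r\neq N}A_{r\sigma(r)}\det\big(e_r-e_{\sigma(r)}\big)_{r\neq N},
\end{align}
where $\sigma$ ranges over maps $V(\graph)\setminus\{N\}\to V(\graph)$ with $\sigma(r)\neq r$ and $(r,\sigma(r))\in E(\graph)$. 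Each $\sigma$ is a subgraph in which every vertex except $N$ has out-degree one. Such a subgraph is either a spanning tree rooted at $N$ or it contains a directed cycle avoiding $N$. In the cycle case, the rows $e_r-e_{\sigma(r)}$ over the cycle sum to zero, so the determinant vanishes. In the tree case, I order the vertices by distance to $N$ and use induction: repeatedly add to each row the row of its parent. This produces a unitriangular matrix, so the determinant equals $1$. This gives the first identity.

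For the second minor, I use the same expansion, with row $i$ and column $j$ also removed. Equivalently, I would use the standard trick of replacing row $i$ of $L_{\{N\},\{N\}}$ by $e_j$, which yields $(-1)^{i+j}\det L_{\{i,N\},\{j,N\}}$ as the cofactor. The maps $\sigma$ are now defined on $V(\graph)\setminus\{i,N\}$, so the candidate subgraphs are functional graphs with sinks $i$ and $N$. As before, configurations containing a cycle contribute zero, so only forests rooted at $\{i,N\}$ survive. For such a forest, row $i$ is $e_j$. Performing the same tree reductions, the row $e_j$ is annihilated after subtracting along the path from $j$ to its root if $j$ lies in the $N$-component, because that path ends at $e_N=0$ and so the determinant becomes $0$. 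If instead $j$ lies in the component of $i$, the reduction turns it into $e_i$, which gives determinant $1$.

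The main obstacle is the careful sign bookkeeping: verifying that the resulting contribution is exactly $(-1)^{i+j}$ times the sum over $F^{j\to i}_{[i,N]}$, with no additional path-dependent sign. I would handle this by working with the replaced-row matrix and taking signs only from the cofactor expansion. Since Theorem~\ref{theorem:AMMTT} is a special case of Chaiken's All-Minors Matrix-Tree Theorem~\cite{chaiken1982combinatorial}, an acceptable fallback is to cite that result directly and check that its sign convention reduces to $(-1)^{i+j}$ here. That check holds because the deleted row and column sets differ only by the single pair $(i,j)$.
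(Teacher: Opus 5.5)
Your proof is correct and complete, but it takes a different route from the one the paper points to. The paper defers to Chaiken, whose method expands the determinant into signed monomials indexed by directed subgraphs. It then removes the cycle-containing monomials with a sign-reversing involution that cancels them in pairs.

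You instead expand row-multilinearly in the edge vectors $e_r-e_{\sigma(r)}$. This groups together all monomials with the same underlying functional graph. Each configuration containing a cycle then vanishes on its own, because its cycle rows sum to zero. Each forest configuration has determinant exactly $1$ by triangularity, so no pairing is needed.

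For the second identity, replacing row $i$ of $L_{\{N\},\{N\}}$ by $e_j$ puts the whole sign into a single cofactor expansion. That sign is exactly $(-1)^{i+j}$ because $N$ is the last index, so deleting it does not shift the positions of $i$ and $j$. Each forest term is then $0$ or $1$, according to whether the path from $j$ ends at $N$ or at $i$.

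So the sign bookkeeping you flag as the main obstacle is already settled by your own construction: no path-dependent sign can arise, and the fallback to Chaiken is unnecessary.

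What your route buys is a short, self-contained proof of exactly the two minors the paper uses. What Chaiken's argument, as cited by the paper, buys is the full All-Minors Matrix-Tree Theorem for arbitrary sets of deleted rows and columns. In that setting the replaced-row trick no longer isolates the sign, and a general sign rule is needed.
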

\begin{proof}
    We refer to \cite{chaiken1982combinatorial} for details. The key idea is to expand the determinant and to interpret the resulting terms as weights of directed subgraphs. One then shows that precisely the terms containing a cycle cancel by sign-reversing pairing, while the acyclic terms correspond exactly to the spanning forests.
\end{proof}

For notational convenience, we also introduce the convention that instead of writing $\mathcal{T}\in T_{[n]}$ and $\mathcal{F}\in F^{j\to i}_{[i,n]}$ in the sums, we simply write the indices as $\mathcal{T}_{[n]}$ and $\mathcal{F}^{j\to i}_{[i,n]}$, understanding that the sum runs over the corresponding sets. For example, we would write $\sum_{\mathcal{F}^{j\to i}_{[i,n]}}w(\mathcal{F})$ instead of $\sum_{\mathcal{F}\in F^{j\to i}_{[i,n]}}w(\mathcal{F})$.

The matrix $s\mathbb{1}-\tilde{W}(s)$ itself is not a Laplacian, but can be identified as a principal minor of a Laplacian by adding an auxiliary vertex $\ast$ to the original graph, connecting all other vertices to it, and assigning the weight $s$ to each additional edge. We shall refer to this construction as the \textit{augmented graph} $\graph^\ast$ of $\graph$. Then, Eqs.~(\ref{eq_n-th-principal-minor}) and (\ref{eq:_n_i_n_j-minor}) allow us to give an explicit combinatorial expression for the inverse of $s\mathbb{1}-\tilde{W}(s)$:

\begin{proposition}
    Let $\graph=\big(V(\graph),E(\graph)\big)$ with $V(\graph)=\{1,...,N\}$ be a semi-Markov network with $\tilde{W}(s)$ as defined in Eq.~(\ref{eq:def_W_matrix}). Then, for every $s>0$, $\big(s\mathbb{1}-\tilde{W}(s)\big)^T$ is a principal minor of a Laplacian matrix of an \textit{augmented graph}:
    \begin{align}
        \mathcal{G}^\ast=\big(V(\graph^\ast),E(\graph^\ast)\big) :=\big(V(\graph)\cup\{\ast\},E(\mathcal{\graph})\cup\{(i,\ast)~|~i\in V(\graph)\}\big),
    \end{align}
    with the weight function defined by:
    \begin{align}
        w:E(\graph^\ast)&\to(0,\infty)\\ (i,j)&\mapsto\begin{cases}
            ~~~~s&\text{if }j=\ast\\ \tilde{W}_{ji}(s)&\text{else}.
        \end{cases}
    \end{align}
    Moreover, $s\mathbb{1}-\tilde{W}(s)$ is invertible and its inverse reads:
    \begin{align}
        \big(s\mathbb{1}-\tilde{W}(s)\big)^{-1}_{ij}=\frac{\sum_{\mathcal{F}^{s, j\to i}_{[i,\ast]}}w(\mathcal{F})}{\sum_{\mathcal{T}^{s}_{[\ast]}}w(\mathcal{T})},
    \end{align}
    where $\mathcal{T}^s$ and $\mathcal{F}^s$ indicate the fact that these are the spanning trees and forests of the augmented network $\graph^\ast$.\label{Prop:inverse_of_s-W}
\end{proposition}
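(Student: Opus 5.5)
The plan is to verify directly that $\big(s\mathbb{1}-\tilde{W}(s)\big)^T$ is the principal minor of the Laplacian of the weighted graph $(\graph^\ast,w)$. I then read off the determinant and the cofactors from Theorem~\ref{theorem:AMMTT}, and obtain the inverse from the adjugate formula. To apply Theorem~\ref{theorem:AMMTT} literally, I will label the auxiliary vertex $\ast$ as $N+1$, so that it plays the role of the distinguished last vertex.

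\emph{Step 1 (admissibility of the weights).} I first check that $w$ maps into $(0,\infty)$ for every $s>0$, as the definition of a weighted graph requires. For the new edges $(i,\ast)$ the weight is $s>0$. For $(i,j)\in E(\graph)$ with $i\neq j$ the weight is $\tilde{W}_{ji}(s)=\tilde{K}_{ij}(s)=\tilde{Q}_{ij}(s)/\tilde{\phi}_i(s)$. Here $\tilde{Q}_{ij}(s)>0$ because $Q_{ij}$ is a nonnegative measure of total mass $p_{ij}>0$ and $e^{-st}>0$. Also $\tilde{\phi}_i(s)>0$, as noted in Lemma~\ref{lemma:Q_W_identity}. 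Edges out of absorbing states do not exist in $\graph$, so no zero weights occur.

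\emph{Step 2 (identification as principal minor).} Let $L$ be the out-degree Laplacian of $(\graph^\ast,w)$. For $i,j\in V(\graph)$ with $i\neq j$, we have $L_{ij}=-w(i,j)=-\tilde{K}_{ij}(s)=-\tilde{W}_{ji}(s)$, and this is $0$ if $(i,j)\notin E(\graph)$. On the diagonal, $L_{ii}=s+\sum_{l}\tilde{K}_{il}(s)=s-\tilde{W}_{ii}(s)$. Hence $L_{\{\ast\},\{\ast\}}=\big(s\mathbb{1}-\tilde{W}(s)\big)^T$.

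Equation~(\ref{eq_n-th-principal-minor}) then gives
\begin{align}
\det\big(s\mathbb{1}-\tilde{W}(s)\big)=\det L_{\{\ast\},\{\ast\}}=\sum_{\mathcal{T}^s_{[\ast]}}w(\mathcal{T}).
\end{align}
Every term in this sum is positive. The sum is nonempty, since the star tree with edge set $\{(i,\ast)\mid i\in V(\graph)\}$ has weight $s^N>0$. So the determinant is strictly positive and $s\mathbb{1}-\tilde{W}(s)$ is invertible.

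\emph{Step 3 (the inverse).} Set $M=s\mathbb{1}-\tilde{W}(s)$. By Cramer's rule,
\begin{align}
(M^{-1})_{ij}=\big((M^T)^{-1}\big)_{ji}=\frac{(-1)^{i+j}\det (M^T)_{\{i\},\{j\}}}{\det M^T}.
\end{align}
Here $(M^T)_{\{i\},\{j\}}$ denotes $M^T$ with row $i$ and column $j$ deleted. Since $M^T=L_{\{\ast\},\{\ast\}}$, this minor equals $L_{\{i,\ast\},\{j,\ast\}}$. Applying Eq.~(\ref{eq:_n_i_n_j-minor}) with $N$ replaced by $\ast$ gives
\begin{align}
\det L_{\{i,\ast\},\{j,\ast\}}=(-1)^{i+j}\sum_{\mathcal{F}^{s,j\to i}_{[i,\ast]}}w(\mathcal{F}).
\end{align}
The two factors $(-1)^{i+j}$ cancel, which yields the claimed formula. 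The diagonal case $i=j$ is included, because $F^{i\to i}_{[i,\ast]}$ contains the trivial path and thus equals $F_{[i,\ast]}$.

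The main obstacle I expect is bookkeeping rather than conceptual. It consists of keeping track of the transpose, so that the cofactor of $M^T$ at position $(i,j)$ corresponds to $(M^{-1})_{ij}$ and the orientation $j\to i$ of the required path comes out correctly. It also requires confirming that the sign convention $(-1)^{i+j}$ in Theorem~\ref{theorem:AMMTT} is unaffected by relabeling $\ast$ as the last index $N+1$, which holds because the deleted row and column $\ast$ come after all indices $i,j\le N$.
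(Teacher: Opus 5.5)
Your proposal is correct and follows essentially the same route as the paper. You identify $L_{\{\ast\},\{\ast\}}$ with $\big(s\mathbb{1}-\tilde{W}(s)\big)^T$, use Theorem~\ref{theorem:AMMTT} with the star tree of weight $s^N$ to get a strictly positive determinant, and combine Cramer's rule with the two-root forest formula for $\det L_{\{i,\ast\},\{j,\ast\}}$. Your explicit checks that all edge weights lie in $(0,\infty)$ and that the sign convention survives labeling $\ast$ as $N+1$ are small additions the paper leaves implicit.
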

The construction of the augmented graph $G^\ast$ is illustrated in Fig.~\ref{fig:augmented_graph}.
\begin{figure}[t]
    \centering
    \includegraphics[width=0.8\linewidth]{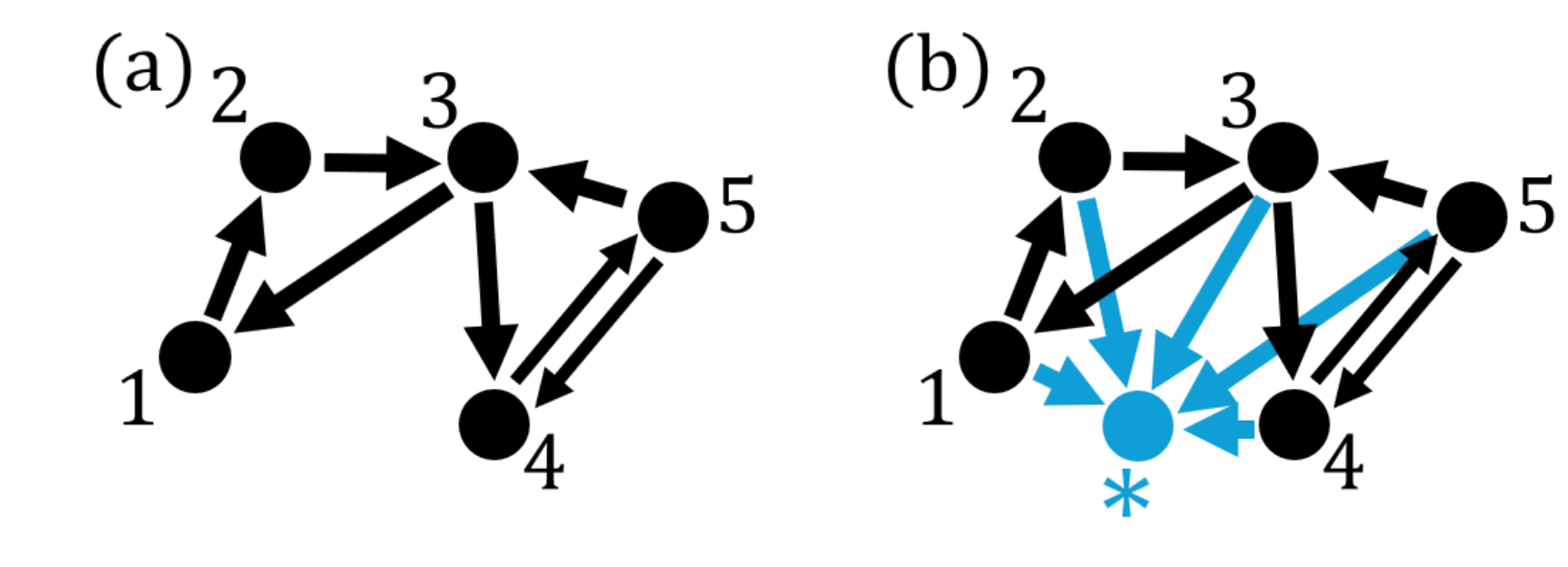}
    \caption{Illustration of the construction of the augmented graph. (a) A base graph $\graph =(V(\graph),E(\graph))$ with $V(\graph)=\{1,2,3,4,5\}$ and $E(\graph)=\{(1,2),(2,3),(3,1),(3,4),(4,5),(5,3), (5,4)\}$. (b) The augmented graph $\graph^\ast=(V(\graph^\ast),E(\graph^\ast))$ obtained from $\graph$ by adding an extra vertex $\ast$ and connecting all the other vertices in $V(\graph)$ to it, i.e., $V(\graph^\ast)=V(\graph)\cup\{\ast\}$ and $E(\graph^\ast)=E(\graph)\cup\{(i,\ast)~|~i\in V(\graph)\}$.}
    \label{fig:augmented_graph}
\end{figure}

\begin{proof}
    The augmented network is obtained by adding a vertex $\ast$ to $V(\graph)$ and connecting all other vertices to it. The weights for the edges $(i,j)\in E(\graph)$ are $\tilde{W}_{ji}(s)=\tilde{K}_{ij}(s)$ and $w(i,\ast)=s$ for $i\in V(\graph)$. Hence, its Laplacian matrix reads for $n,m\in V(\graph^\ast)$:
    \begin{align}
        L_{nm}=\begin{cases}
            ~~~-\tilde{W}_{mn}(s)& \text{for }n,m\in V(\graph)~n\neq m,\\
            \qquad-s& \text{for }n\in V(\graph), m=\ast,\\
            \qquad ~~0& \text{for }n=\ast, m\in V(\graph^\ast),\\ 
            s+\sum_{l\in V(\graph)\setminus\{n\}}\tilde{W}_{ln}(s)& \text{for }m=n\neq\ast.
        \end{cases}
    \end{align}
    Therefore, the $\ast$-principal minor of $L$ reads:
    \begin{align}
        \big(L_{\{\ast\},\{\ast\}}\big)_{nm}&=\begin{cases}
            ~~~-\tilde{W}_{mn}(s)& \text{for }n,m\in V(\graph)~n\neq m,\\
            s+\sum_{l\in V(\graph)\setminus\{n\}}\tilde{W}_{ln}(s)& \text{for }m=n\neq\ast.
        \end{cases}\\ &=(s\mathbb{1}-\tilde{W}(s))_{mn}\\ &=\big((s\mathbb{1}-\tilde{W}(s))^T\big)_{nm}.
    \end{align}
    So we see that for $s>0$:
    \begin{align}
        \det(s\mathbb{1}-\tilde{W}(s))=\det L_{\{\ast\},\{\ast\}}=\sum_{\mathcal{T}^{s}_{[\ast]}}w(\mathcal{T})\geq s^N>0,
    \end{align}
    since the augmented graph always contains the spanning tree where each original vertex points directly to $\ast$ and all the other weights are non-negative. We can apply Theorem \ref{theorem:AMMTT} to obtain the inverse of $s\mathbb{1}-\tilde{W}(s)$:
    \begin{align}
        \big(s\mathbb{1}-\tilde{W}(s)\big)^{-1}_{ij}&=\frac{(-1)^{i+j}\det\big(\big(s\mathbb{1}-\tilde{W}(s)\big)_{\{j\}\{i\}}\big)}{\det\big(s\mathbb{1}-\tilde{W}(s)\big)}\\ &=\frac{(-1)^{i+j}\det L_{\{i,\ast\}\{j,\ast\}}}{\det L_{\{\ast\},\{\ast\}}}\\ &=\frac{\sum_{\mathcal{F}^{s,j\to i}_{[i,\ast]}}w(\mathcal{F})}{\sum_{\mathcal{T}^{s}_{[\ast]}}w(\mathcal{T})}.
    \end{align}
\end{proof}

Note that removing the auxiliary vertex again turns a tree with root $\ast$ in $\graph^\ast$ into a forest in $\graph$, where the roots are the vertices that previously had an outgoing edge to $\ast$. Therefore, the sums over the spanning trees and the spanning forests with two trees in $\graph^\ast$ may also be written in terms of spanning forests of $\graph$:
\begin{align}
    \big(s\mathbb{1}-\tilde{W}(s)\big)^{-1}_{ij}&=\frac{\sum_{\mathcal{F}^{s,j\to i}_{[i,\ast]}}w(\mathcal{F})}{\sum_{\mathcal{T}^{s}_{[\ast]}}w(\mathcal{T})}\\ &=\frac{\sum_{n=0}^{N-1}s^n\sum_{\{l_1,...l_n\}\subseteq V(\graph)\setminus\{i,j\}}\sum_{\mathcal{F}^{j\to i}_{[l_1,...,l_n,i]}}w(\mathcal{F})}{\sum_{n=1}^Ns^{n}\sum_{\{l_1,...l_n\}\subseteq V(\graph)}\sum_{\mathcal{F}_{[l_1,...l_n]}}w(\mathcal{F})}.
\end{align}

As a corollary of the previous Proposition, we get:
\begin{theorem}\label{theorem:transient_semi_MCTT}
    Let $X_{t\geq 0}$ be a semi-Markov process on a network $\graph$ with initial distribution $\big(p_0\big)_{j\in V(\graph)}$. The Laplace transform of the occupation probability $\tilde{p}_i(s)$ for state $i\in V(\graph)$ is given by:
    \begin{align}
        \tilde{p}_i(s)=\sum_{j\in V(\graph)}\frac{\sum_{\mathcal{F}^{s,j\to i}_{[i,\ast]}}w(\mathcal{F})}{\sum_{\mathcal{T}^{s}_{[\ast]}}w(\mathcal{T})}p_{0,j}\label{eq:transient_semi_MCTT}.
    \end{align}
\end{theorem}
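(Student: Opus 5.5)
The plan is to combine the generalized master equation in Laplace space with the combinatorial inverse from Proposition \ref{Prop:inverse_of_s-W}. No new combinatorics is needed; the work is in checking that the two earlier results fit together for every $s>0$.

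First, fix $s>0$ and check that all Laplace transforms involved are finite. Each $Q_{ij}$ is a finite non-negative measure, so $\tilde{Q}_{ij}(s)$ is finite. Each $\phi_i$ is bounded by $1$, so $\tilde{\phi}_i(s)\le 1/s$, and it is strictly positive as in Lemma \ref{lemma:Q_W_identity}. Hence $\tilde{K}_{ij}(s)=\tilde{Q}_{ij}(s)/\tilde{\phi}_i(s)$ is finite and non-negative. Likewise, $0\le p_i(t)\le 1$ gives a finite $\tilde{p}_i(s)$.

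Next, I invoke the generalized master equation (\ref{eq:generalized_ME_Laplace}), $(s\mathbb{1}-\tilde{W}(s))\tilde{p}(s)=p_0$. By Proposition \ref{Prop:inverse_of_s-W}, $s\mathbb{1}-\tilde{W}(s)$ is invertible for $s>0$, because its determinant is the spanning-tree sum of the augmented graph $\graph^\ast$, which is at least $s^N>0$. Multiplying by the inverse gives
\begin{align}
\tilde{p}(s)=\big(s\mathbb{1}-\tilde{W}(s)\big)^{-1}p_0 .
\end{align}
Taking the $i$-th component and substituting the explicit entries $\big(s\mathbb{1}-\tilde{W}(s)\big)^{-1}_{ij}$ from Proposition \ref{Prop:inverse_of_s-W} yields Eq.~(\ref{eq:transient_semi_MCTT}) directly.

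The one point needing care is the weights, which I expect to be the only minor obstacle. The weight function in Proposition \ref{Prop:inverse_of_s-W} is required to map into $(0,\infty)$. An edge $(i,j)\in E(\graph)$ could have $\tilde{K}_{ij}(s)=0$ only if $\tilde{Q}_{ij}(s)=0$. That is impossible, since $Q_{ij}$ has total mass $p_{ij}>0$ and $e^{-st}>0$. So all weights are strictly positive and the Matrix-Tree identities of Theorem \ref{theorem:AMMTT} apply verbatim.

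Edges leaving absorbing states do not exist by definition. Their zero kernel entries are therefore consistent with the Laplacian structure, and the corresponding rows reduce to $s$ on the diagonal, which matches the identity in Lemma \ref{lemma:Q_W_identity}. With this checked, the statement follows as a corollary valid for every $s>0$.
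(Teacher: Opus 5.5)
Your proposal is correct and follows the paper's proof exactly. The paper obtains the statement the same way: it inverts $s\mathbb{1}-\tilde{W}(s)$ in the generalized master equation $(s\mathbb{1}-\tilde{W}(s))\tilde{p}(s)=p_0$ using the spanning-forest formula of Proposition~\ref{Prop:inverse_of_s-W}, then reads off the $i$-th component. Your extra checks are sound and make explicit what the paper leaves implicit: finiteness of the transforms for $s>0$, and strict positivity of $\tilde{K}_{ij}(s)$ on edges so that the weight function indeed maps into $(0,\infty)$.
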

\begin{proof}
    This is an immediate consequence of applying Proposition \ref{Prop:inverse_of_s-W} to Eq.~(\ref{eq:generalized_ME_Laplace}). 
\end{proof}

Eq.~(\ref{eq:transient_semi_MCTT}) can be seen as the transient, semi-Markov analogue of the classical Markov Chain Tree Theorem: stationary trees are replaced by $s$-dependent forest weights in Laplace space. Next, we will see that the stationary result follows from the $s\to 0$ limit, generalizing the Markov Chain Tree Theorem to semi-Markov networks.

\section{Stationary solution}
The final value theorem for Laplace transforms states that if the limit $\lim_{t\to \infty}f(t)$ and $\lim_{s\to 0}s \tilde{f}(s)$ exist, they are related as follows~\cite{doetsch2012introduction}:
\begin{align}
   \lim_{t\to \infty}f(t)=\lim_{s\to 0} s \tilde{f}(s),
\end{align}
see also the Appendix for further details on the asymptotic relation between the function in time domain and in Laplace space.

We want to apply this to the occupation probabilities $\tilde{p}_i(s)$ to find the limiting state distribution in time domain. The limit in Laplace space can be calculated using Eq.~(\ref{eq:transient_semi_MCTT}), but identifying the small-$s$ limit with a time-domain limit requires the latter to exist. We postpone the discussion to later in this section, and for now simply assume that this is the case.  
Moreover, the definition of semi-Markov networks requires that the transition-time distributions satisfy $\Tilde{Q}_{ij}(0)=p_{ij}$ but does not determine the small-$s$ scaling. To study the asymptotic properties, we restrict the discussion to transition-time distributions of the following shape:
\begin{assumption}[Small-$s$ regime of sojourn time distributions]\label{assumption:small_s_sojourn_times}
    For every non-absorbing state $i\in V^\circ(\graph)$, assume that the Laplace-Stieltjes transforms $\tilde{Q}_{i}(s)$ permit a small-$s$ expansion as follows:
    \begin{align}
         \Tilde{Q}_{i}(s)=1-a_{i}s^{\alpha_{i}}+o(s^{\alpha_{i}}),
    \end{align}
    where $a_{i}>0$ and $\alpha_{i}\in (0,1]$. For an absorbing state $i$, we may set $\alpha_i=0$.
\end{assumption}
The case $\alpha_{i}=1$ corresponds to sojourn times with finite mean. More precisely, for a sojourn time $\tau_{i}\sim Q_i$ in state $i\in V(\graph)$, we have
\begin{align}
    \mathbb{E}[\tau_{i}]<\infty\Leftrightarrow \Tilde{Q}_i(s)=1-\mathbb{E}[\tau_{i}]s+o(s).
\end{align}
In the case $0<\alpha_{i}<1$, the first moment is infinite. This regime describes power-law transition-time distributions. 

For the memory kernels, the small-$s$ regime then reads:
\begin{align}
    \tilde{K}_{ij}(s)=\frac{s\Tilde{Q}_{ij}(s)}{1-\Tilde{Q}_{i}(s)}=\frac{p_{ij}}{a_i}s^{1-\alpha_i}+o(s^{1-\alpha_i}),
\end{align}
since $\Tilde{Q}_{ij}(0)=p_{ij}$.

Moreover, it is convenient to introduce the \textit{pseudo-rates} $\kappa_{ij}:=\frac{p_{ij}}{a_i}$, which in the case of a Markov network reduce to the rates $k_{ij}$ of the exponentially distributed jumping times, and the weight of a subgraph $\mathcal{S}\subseteq \graph$ with respect to the pseudo-rates:
\begin{align}
    w_\kappa(\mathcal{S}):=\prod_{(i,j)\in E(\mathcal{S})}\kappa_{ij}.
\end{align}
We also write 
\begin{align}
    \alpha^\ast:=\min_{j\in\{1,...,N\}} \{\alpha_j~|~\exists~ \mathcal{T}\subseteq \graph:\mathcal{T}\text{ is a spanning tree of $\graph$ with root }j\}
\end{align}
and define the corresponding bottleneck vertices 
\begin{align}
    V_{\alpha^\ast}:=\{i\in V(\graph)~|~\alpha_i=\alpha^\ast~\land~\exists~ \mathcal{T}\subseteq \graph:\mathcal{T}\text{ is a spanning tree of $\graph$ with root }i\}.
\end{align}

This now allows us to formulate a generalization of the MCTT to semi-Markov networks of this shape:

\begin{theorem}\label{theorem:SMCTT}
    Let $X_{t\geq 0}$ be a semi-Markov process on a network on $\graph$ with at least one spanning tree  and sojourn times that satisfy Assumption \ref{assumption:small_s_sojourn_times}. If the limiting state distribution $\pi :=\lim_{t\to \infty}p(t)$ exists for every initial condition, it is unique in the sense that it is independent of the initial distribution, and for $i\in V(\graph)$, it is given by:
    \begin{align}
    \pi_i=\begin{cases}
        \frac{\sum_{\mathcal{T}_{[i]}}w_\kappa(\mathcal{T})}{\sum_{l\in V_{\alpha^\ast}}\sum _{\mathcal{T}_{[l]}}w_\kappa(\mathcal{T})}&\text{if}~\alpha_i=\alpha^\ast,\\ \qquad~~0 &\text{else}.
    \end{cases}\label{eq:stationary_solution}
\end{align}
\end{theorem}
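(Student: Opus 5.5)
Since the limit $\pi=\lim_{t\to\infty}p(t)$ is assumed to exist, the final value theorem gives $\pi_i=\lim_{s\to0}s\tilde p_i(s)$. I would therefore work entirely in Laplace space with the forest representation of Theorem~\ref{theorem:transient_semi_MCTT}, rewritten in terms of spanning forests of $\graph$ via the $s$-expansion stated after Proposition~\ref{Prop:inverse_of_s-W}. This gives
\begin{align}
s\tilde p_i(s)=\sum_{j}p_{0,j}\frac{\sum_{n=0}^{N-1}s^{n+1}\sum_{\{l_1,\dots,l_n\}}\sum_{\mathcal{F}^{j\to i}_{[l_1,\dots,l_n,i]}}w(\mathcal{F})}{\sum_{n=1}^{N}s^{n}\sum_{\{l_1,\dots,l_n\}}\sum_{\mathcal{F}_{[l_1,\dots,l_n]}}w(\mathcal{F})},
\end{align}
where the weights are $w(k,m)=\tilde K_{km}(s)=\kappa_{km}s^{1-\alpha_k}(1+o(1))$.

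The next step is to find the leading small-$s$ order of every forest weight. In a spanning forest with root set $R$, each non-root vertex $k$ has exactly one outgoing edge, and roots have none. Hence
\begin{align}
w(\mathcal{F})=w_\kappa(\mathcal{F})\,s^{\sum_{k\notin R}(1-\alpha_k)}(1+o(1)).
\end{align}
Multiplying by $s^{|R|}$, the term carries the power $s^{\,N-\sum_{k\notin R}\alpha_k}$. Writing $A:=\sum_k\alpha_k$, the exponent equals $N-A+\sum_{k\in R}\alpha_k$. The denominator therefore behaves like $s^{N-A}\sum_R s^{\sum_{k\in R}\alpha_k}\sum_{\mathcal{F}_{[R]}}w_\kappa(\mathcal{F})$. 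All $\alpha_k$ are positive for non-absorbing $k$; an absorbing $k$ has $\alpha_k=0$ and must be a root. I would treat absorbing states with some care. Such a state contributes $s^{0}$ and no kernel factor; with $\alpha_i=0$ it can only be the unique tree root, giving $\alpha^\ast=0$, and the same bookkeeping holds. The key combinatorial claim is that the minimal exponent $\sum_{k\in R}\alpha_k$ over all $R$ with nonempty $F_{[R]}$ is $\alpha^\ast$. It is attained exactly by the singletons $R=\{l\}$ with $l\in V_{\alpha^\ast}$. Indeed, any nonempty root set $R$ with a forest satisfies $\sum_{k\in R}\alpha_k\ge\min_{k\in R}\alpha_k$, with strict inequality if $|R|\ge2$, since all $\alpha$'s are positive. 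A single root that carries a tree has $\alpha\ge\alpha^\ast$ by definition.

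The obstacle is the case $|R|\ge2$. There a forest may have roots $k$ that are not tree roots of $\graph$ with $\alpha_k<\alpha^\ast$, so the sum of the $\alpha$'s could be smaller than $\alpha^\ast$. This is exactly what I would check first. The argument I would try: if $\graph$ has a spanning tree rooted at some $r$, then every vertex reaches $r$. The tree roots form the terminal strongly connected class $C$, and every $k\notin C$ reaches $C$. Given a forest with root set $R$ containing some $k\notin C$, I would pick the forest's component structure and observe that $R$ must intersect $C$. Otherwise the vertices of $C$ would have to point out of $C$, impossible since $C$ is closed. Thus $R\ni c$ for some $c\in C$, and $\sum_{k\in R}\alpha_k\ge\alpha_c+\alpha_k>\alpha^\ast$. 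So the minimal order in the denominator is $s^{N-A+\alpha^\ast}$ with coefficient $\sum_{l\in V_{\alpha^\ast}}\sum_{\mathcal{T}_{[l]}}w_\kappa(\mathcal{T})>0$.

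The same analysis applies to the numerator for fixed $j$, whose root sets contain $i$. If $\alpha_i=\alpha^\ast$ and $i$ carries a tree, the singleton $R=\{i\}$ gives the order $s^{N-A+\alpha^\ast}$. Here $F^{j\to i}_{[i]}=T_{[i]}$, since every vertex reaches the unique root. If $i\notin V_{\alpha^\ast}$, every root set containing $i$ has exponent strictly larger than $\alpha^\ast$, by the argument above: either $i\notin C$ forces a second root in $C$, or $i\in C$ with $\alpha_i>\alpha^\ast$. Taking ratios, $s\tilde p_i(s)$ converges to the stated expression, independent of $j$. Since $\sum_j p_{0,j}=1$, the limit is independent of the initial distribution, which gives uniqueness.
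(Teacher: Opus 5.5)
Your proposal is correct and takes the same route as the paper: the final value theorem applied to the forest representation, then power counting each forest with root set $R$ as $w_\kappa(\mathcal{F})\,s^{N-\sum_k\alpha_k+\sum_{k\in R}\alpha_k}$ and reading off the leading terms. Your observation that every root set must meet the closed terminal class $C$ of tree roots is a useful addition. It justifies a step the paper leaves implicit: multi-root forests, and the numerator for non-tree-roots $i$ with $\alpha_i<\alpha^\ast$, are genuinely subleading to $s^{\alpha^\ast}$.
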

\begin{proof}
    Observe that $\graph$ can contain at most one absorbing vertex $n\in V(\graph)$ because it contains a spanning tree. If this is the case, all spanning trees have to be rooted at $n$, so Eq.~(\ref{eq:transient_semi_MCTT}) implies that:
    \begin{align}
        \pi_i=\lim_{s\to 0}s\tilde{p}_i(s)=\sum_{j\in V(\graph)}\frac{\sum_{\mathcal{T}_{[i]}}w(\mathcal{T})}{\sum_{\mathcal{T}_{[n]}}w(\mathcal{T})}p_{0,j}=\delta_{in}.
    \end{align}
    Since $\alpha_n=0$ and $\alpha_i>0$ for $i\in V(\graph)\setminus\{n\}$, Eq.~(\ref{eq:stationary_solution}) holds in this case.
    
    Now consider the case that there are no absorbing vertices and observe that Assumption \ref{assumption:small_s_sojourn_times} implies the following small-$s$ scaling for a spanning forest $\mathcal{F}$ of $\graph$ with roots $\{i_1,...,i_n\}\subseteq V(\graph)$: 
    \begin{align}
    w(\mathcal{F})=w_\kappa(\mathcal{F})s^{|E(\mathcal{F})|-\sum_{k\in V(\graph)}\alpha_k+\sum_{k=1}^n\alpha_{i_k}}+o(s^{|E(\mathcal{F})|-\sum_{k\in V(\graph)}\alpha_k+\sum_{k=1}^n\alpha_{i_k}}).
    \end{align}
    Therefore, 
    \begin{align}
        \sum_{\mathcal{F}^{s,j\to i}_{[i,\ast]}}w(\mathcal{F})&=\sum_{\mathcal{T}_{[i]}}w_\kappa(\mathcal{T})s^{N-1-\sum_{k\in V(\graph)}\alpha_k+\alpha_{i}}+o(s^{N-1-\sum_{k\in V(\graph)}\alpha_k+\alpha_{i}}),\\\sum_{\mathcal{T}^{s}_{[\ast]}}w(\mathcal{T})&=\sum_{l:\alpha_l=\alpha^\ast}\sum _{\mathcal{T}_{[l]}}w_\kappa(\mathcal{T})s^{N-\sum_{k\in V(\graph)}\alpha_k+\alpha^\ast}+o(s^{N-\sum_{k\in V(\graph)}\alpha_k+\alpha^\ast}).
    \end{align}
    Note that the leading-order term of the sum over the spanning forests is independent of $j$.
    
    By assumption, the limiting state distribution $\pi_i$ for $i\in V(\graph)$ satisfies $\pi_i=\lim_{s\to 0}s\tilde{p}_i(s)$, and Eq.~(\ref{eq:transient_semi_MCTT}) implies:
    \begin{align}
        \pi_i&=\lim_{s\to 0}s\tilde{p}_i(s)\\&=\sum_{j\in V(\graph)}p_{0,j}\lim_{s\to0}\frac{\sum_{\mathcal{T}_{[i]}}w_\kappa(\mathcal{T})s^{\alpha_{i}}}{\sum_{l:\alpha_l=\alpha^\ast}\sum _{\mathcal{T}_{[l]}}w_\kappa(\mathcal{T})s^{\alpha^\ast}}\\ &=\begin{cases}
        \frac{\sum_{\mathcal{T}_{[i]}}w_\kappa(\mathcal{T})}{\sum_{l:\alpha_l=\alpha^\ast}\sum _{\mathcal{T}_{[l]}}w_\kappa(\mathcal{T})}&\text{if}~\alpha_i=\alpha^\ast,\\ \qquad~~0 &\text{else},
    \end{cases}
    \end{align}
    since $\sum_{j\in V(\graph)}p_{0,j}=1$. In particular, the limit is independent of the choice of the initial distribution $p_0$, i.e., the limiting distribution is uniquely determined by the network.
\end{proof}
In the case of Markov networks, or more generally, in the case that all sojourn times have finite mean, $\forall i\in V(\graph):\alpha_i=1$. Then the limiting state distribution is the stationary distribution and we recover its shape from the Markov Chain Tree Theorem:
\begin{align}
    \pi_i=\frac{\sum_{\mathcal{T}_{[i]}}w_\kappa(\mathcal{T})}{\sum_{l\in V(\graph)}\sum _{\mathcal{T}_{[l]}}w_\kappa(\mathcal{T})}\label{eq:stationary_distribution_finite_mean_sojourn},
\end{align}
where $w_\kappa(\mathcal{T})=w(\mathcal{T})$ in the Markovian case.
Also note that for every spanning tree $\mathcal{T}_{[i]}\subseteq \graph$ with root $i\in V(\graph)$, we have
\begin{align}
    w_\kappa(\mathcal{T}_{[i]})=\frac{\mathbb{E}[\tau_i]}{\prod_{j\in V(\graph)}\mathbb{E}[\tau_j]}\prod_{(n,m)\in E(\mathcal{T}_{[i]})}p_{nm}.
\end{align}
Hence, using the classical Markov Chain Tree Theorem on the Markov chain with transition matrix $P$ to express the stationary distributions $(\nu_i)_{i\in V(\graph)}$ in terms of spanning trees,
\begin{align}
    \nu_i=\frac{\sum_{\mathcal{T}_{[i]}}\prod_{(n,m)\in E(\mathcal{T})}p_{nm}}{\sum_{l\in V(\graph)}\sum_{\mathcal{T}_{[l]}}\prod_{(n,m)\in E(\mathcal{T})}p_{nm}},
\end{align}
we recover the well-known expression for the stationary occupation probability for a semi-Markov network with finite-mean sojourn times:
\begin{align}
    \pi_i&=\frac{\sum_{\mathcal{T}_{[i]}}w_\kappa(\mathcal{T})}{\sum_{l\in V(\graph)}\sum _{\mathcal{T}_{[l]}}w_\kappa(\mathcal{T})}\\&=\frac{\mathbb{E}[\tau_i]\sum_{\mathcal{T}_{[i]}}\prod_{(n,m)\in E(\mathcal{T})}p_{nm}}{\sum_{l\in V(\graph)}\mathbb{E}[\tau_l]\sum _{\mathcal{T}_{[l]}}\prod_{(n,m)\in E(\mathcal{T})}p_{nm}}\\&=\frac{\mathbb{E}[\tau_i]\nu_i}{\sum_{l\in V(\graph)}\mathbb{E}[\tau_l]\nu_l}.
\end{align}

As stated previously, the existence of $\lim_{t\to\infty}p(t)$ is an additional assumption that has to be checked in the general case. But for a semi-Markov network with finite-mean sojourn times, it turns out that convergence to a stationary distribution is guaranteed if the transition-time distributions are nonlattice~\cite{athreya1978limit1,athreya1978limit}. Specifically, a probability measure $Q$ on $[0,\infty)$ is called lattice if there exist $a>0$ and $b\geq0$ such that
\begin{align}
    Q\big(\{b+na~|~n\in\mathbb{N}_0\}\big)=1.
\end{align}  
Intuitively, this means that transitions can happen only at discrete, equidistant points in time. Allowing for such transitions, it is easy to construct networks where the occupation probabilities do not settle into a stationary state but exhibit oscillatory behavior, such as in the following minimal example:

\begin{example}
    Consider the graph with two vertices $\graph=\big(\{1,2\},\{(1,2),(2,1)\}\big)$ and the semi-Markov network with $\delta$-like transition-time law:
    \begin{align}
        Q_{12}=\delta_\tau,~~Q_{21}=\delta_\tau.
    \end{align}
    Then the transitions are deterministic and it is clear that the occupation probabilities read:
    \begin{align}
        p_1(t)=1-p_2(t)=\begin{cases}
            p_{0,1}~~\text{if }2k\tau \leq t<(2k+1)\tau \text{ for }k\in \mathbb{N}_0,\\ p_{0,2}~~\text{if }(2k+1)\tau \leq t<2(k+1)\tau \text{ for }k\in \mathbb{N}_0. 
        \end{cases}
    \end{align}
    In particular, the limit $t\to \infty$ does not exist but the process alternates indefinitely between the two states. This is reflected by purely imaginary zeros of $\det(s\mathbb{1}-\tilde{W}(s))$. Note that $\tilde{K}_{12}(s)=\tilde{K}_{21}(s)=\frac{s}{e^{\tau s}-1}$, so:
    \begin{align}
         &0=\det\big(s\mathbb{1}-\tilde{W}(s)\big)=\sum_{\mathcal{T}^{s}_{[\ast]}}w(\mathcal{T}) =s^2\big(1+\frac{2}{e^{\tau s}-1}\big) ~\Leftrightarrow ~s=0~~\lor~~e^{\tau s}=-1.
    \end{align}
    Thus, $s=i \frac{2k+1}{\tau}\pi$ for $k\in \mathbb{Z}$ are also zeros of the determinant.
\end{example}

But ruling out lattice sojourn time distributions is sufficient to guarantee convergence to a stationary solution in semi-Markov networks with finite-mean sojourn times. In fact, the even weaker requirement that the return-time distribution to one state in the unique recurrent class be nonlattice would be sufficient~\cite{athreya1978limit,athreya1978limit1}.

\begin{corollary}\label{cor:no_lattice_corollary}
    Let $X_{t\geq 0}$ be a semi-Markov process on a network on $\graph$ with finite-mean sojourn times, and let none of its sojourn time distributions be lattice. Let $\graph$ contain a spanning tree. Then the stationary distribution $\pi=\lim_{t\to\infty}p(t)$ exists, is unique, and is given by:
    \begin{align}
        \pi_i=\frac{\sum_{\mathcal{T}_{[i]}}w_\kappa(\mathcal{T})}{\sum_{l\in V(\graph)}\sum _{\mathcal{T}_{[l]}}w_\kappa(\mathcal{T})}
    \end{align}
    for all $i\in V(\graph)$.
\end{corollary}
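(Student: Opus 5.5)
The plan is to reduce the corollary to Theorem \ref{theorem:SMCTT}. The only missing hypothesis there is the existence of $\lim_{t\to\infty}p(t)$ for every initial distribution. Once that limit exists, the finite-mean assumption gives $\alpha_i=1$ for all non-absorbing $i$. Theorem \ref{theorem:SMCTT} then yields uniqueness and the stated tree formula, with $V_{\alpha^\ast}$ equal to the set of all roots of spanning trees. Roots that admit no spanning tree contribute an empty sum, so the formula over all $l\in V(\graph)$ agrees.

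First I dispose of the absorbing case. If $\graph$ contains an absorbing vertex $n$, it is unique and is the root of every spanning tree. Every vertex is then connected to $n$, so the embedded chain with transition matrix $P$ is absorbed in $n$ after finitely many steps almost surely. Since each sojourn time is almost surely finite, $p_n(t)\to 1$ by monotone convergence. This matches the formula, because only $\sum_{\mathcal{T}_{[n]}}$ is nonempty.

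In the case without absorbing states, the existence of a spanning tree implies that the embedded chain $X^d$ has a unique closed communicating class $C$, namely the set of roots of spanning trees. All vertices outside $C$ are transient for $X^d$. Starting from any $p_0$, the chain enters $C$ after an almost surely finite number of steps, hence after an almost surely finite time. Within $C$, the process is an irreducible semi-Markov process with finite mean sojourn times. I then fix a state $r\in C$ and consider the successive return times to $r$. These form a renewal process, possibly delayed by the first hitting time of $r$.

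The inter-renewal mean is finite: it is a sum over a finite number of visits, with finite expected visit counts for the finite positive recurrent chain restricted to $C$, of finite-mean sojourns. The main step is then to apply the limit theorem for semi-Markov / Markov renewal processes of Athreya et al.~\cite{athreya1978limit1,athreya1978limit}. Equivalently, I apply the key renewal theorem to $t\mapsto \mathbb{P}(X_t=i,\ t<\text{return time to } r)$, which is directly Riemann integrable since it is bounded by the tail of an integrable return time. This gives the existence of $\lim_{t\to\infty}p_i(t)$, provided the return-time distribution to $r$ is nonlattice.

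The main obstacle is verifying this nonlattice condition from the hypothesis that no individual sojourn-time distribution is lattice. I would argue as follows. Take one excursion path $r\to\cdots\to r$ of positive probability. Conditioned on that path, the return time is a convolution that includes the law $Q_{r\,j}/p_{rj}$. A convolution of a nonlattice measure with anything is nonlattice, because the characteristic function factorizes and $|\hat Q(\theta)|<1$ for $\theta\neq0$. The full return-time law is a mixture of such convolutions, and a mixture containing a nonlattice component with positive weight is nonlattice. One subtlety needs care: the hypothesis concerns $Q_i$, whereas the conditional laws $Q_{ij}/p_{ij}$ could individually be lattice. To handle this, I would condition only on the exit state sequence after the first sojourn, so that the first factor is exactly $Q_r$, which is nonlattice by hypothesis. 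With existence established, Theorem \ref{theorem:SMCTT} finishes the proof.
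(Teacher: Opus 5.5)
Your overall route is the paper's: prove that $\lim_{t\to\infty}p(t)$ exists, then read off the limit from Theorem~\ref{theorem:SMCTT} (the paper simply cites \cite{athreya1978limit1,athreya1978limit} for existence). Your absorbing case and reduction to the closed class $C$ are fine. The gap is the nonlattice step, and your proposed fix does not work.

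Write the return time as $\sigma=\tau_0+S$, where $S$ is the time needed to reach $r$ from $X^d_1$. The pair $(\tau_0,X^d_1)$ has joint law $(Q_{rj})_j$, so $\tau_0$ and $X^d_1$ are in general dependent, and $S$ depends on $X^d_1$. Consequently the law of $\sigma$ is $\sum_j Q_{rj}\ast H_j$, with $H_j$ the law of the hitting time of $r$ from $j$ and $H_r=\delta_0$. This is in general not of the form $Q_r\ast\nu$, nor a mixture of such convolutions. Conditioning on later states either fixes $X^d_1$, which turns the first factor back into $Q_{rj}/p_{rj}$, or leaves $\tau_0$ correlated with the subsequent sojourns through $X^d_1$. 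Your trick only works if some $r\in C$ has $Q_{rj}=p_{rj}Q_r$ for all $j$ (for instance, a single outgoing edge).

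In fact no argument can close this gap from nonlatticeness of the $Q_i$ alone, because the statement fails under that reading. Take the complete directed graph on $\{1,2,3\}$ with $p_{xy}=\tfrac12$. Put $\psi(1)=0$, $\psi(2)=\sqrt2$, $\psi(3)=\sqrt3$, and let $Q_{xy}$ put mass $\tfrac14$ on each of the two smallest points of $(\psi(y)-\psi(x)+\mathbb{Z})\cap(0,\infty)$. Each $Q_x$ has atoms differing by $1$ and by an irrational number, so it is nonlattice, and all means are finite. But telescoping gives $t_n\equiv\psi(X^d_n)-\psi(X^d_0)\pmod 1$. So, starting from $1$, state $1$ is entered only at integer times, and a jump $1\to2$ (resp.\ $1\to3$) occurs only at times $\equiv\sqrt2$ (resp.\ $\sqrt3$) modulo $1$. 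Hence $p_1(m+\tfrac12)-p_1(m+\tfrac1{10})=-\tfrac14\bigl(u_m+u_{m-1}\bigr)$, where $u_k$ is the probability of entering $1$ at time $k$. The return time to $1$ is integer-valued, aperiodic (it takes the values $1,2,3$ along $1\to2\to1$) and has finite mean $\mu_1$. The discrete renewal theorem therefore gives $u_k\to1/\mu_1$, so this difference tends to $-1/(2\mu_1)\neq0$ and $\lim_{t\to\infty}p_1(t)$ does not exist.

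What is actually needed is the hypothesis the paper's one-line proof invokes: nonlattice \emph{transition-time} laws $Q_{ij}/p_{ij}$, or directly a nonlattice return time to some state of $C$. Under that hypothesis, plain conditioning on the whole excursion path, which you set aside, already makes the conditional return time a convolution of nonlattice laws, and the rest of your outline goes through. One further detail for the key renewal theorem: direct Riemann integrability of $t\mapsto\mathbb{P}(X_t=i,\,t<\sigma)$ needs local Riemann integrability besides the domination by $\mathbb{P}(\sigma>t)$. This holds because the function is a difference of two bounded monotone functions.
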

\begin{proof}
    This follows immediately from applying Theorem \ref{theorem:SMCTT} and the convergence of $\lim_{t\to\infty}p(t)$ for finite-mean sojourn times and nonlattice transition-time distributions.
\end{proof}

\section{Hitting times}
While the graph-theoretic identities for stationary occupation probabilities of Markov chains are a classical result which has been known for about 60 years~\cite{hill1966studies, schnakenberg1976network}, the extension of the framework to hitting times, also known as first-passage times, has received less attention. In 2007, Chebotarev derived a graph-theoretic identity for the mean hitting times~\cite{chebotarev2007graph}, and it was only recently that the formulas were generalized to higher order moments and hitting-time densities in Laplace space, mainly by Nam and Gunawardena~\cite{nam2023linear, nam2025algebraic, voits2026emergence}.

We now show that with the previously developed graph-theoretic framework, these results can be further generalized to hitting times in semi-Markov networks. To that end, we first define the hitting time to a state $n\in V(\graph)$:
\begin{definition}
    Let $X_t$ be a semi-Markov process on a network on $\graph=\big(\{1,...,N\}, E(\graph)\big)$. For $n\in V(\graph)$, the \textit{hitting time to state} $n$ is a random variable defined as:
    \begin{align}
        T_n:=\inf\{t\geq 0~|~X_t=n\}.
    \end{align}
    We denote the cumulative distribution function of the hitting time as $F_{n}(t):=\mathbb{P}(T_n\leq t)$.
\end{definition}
Without loss of generality, we will set the state of interest to $n=N$ and denote the hitting time as $T:=T_N$ and the cumulative distribution function as $F(t):=F_{N}(t)$. Also, as only the first time that $N$ is reached matters, $N$ can be assumed to be an absorbing state.

We define the Laplace-Stieltjes transform of the hitting-time
distribution by
\begin{align}
    \tilde{f}(s)=\integ{[0,\infty)}{}{e^{-st}F(}{t)}=s\tilde{F}(s).\label{eq:FPT_distribution}
\end{align}
In particular, $f(t)=\tder F(t)$ wherever the derivative exists.
The hitting-time distribution can be related to the occupation probabilities $p(t)$:
\begin{proposition}\label{prop_fpt_distribution_semi_Markov}
    Let $X_t$ be a semi-Markov process on a network on $\graph=\big(\{1,...,N\}, E(\graph)\big)$, where $N$ is an absorbing state, with initial distribution $p_0$. Then
    \begin{align}
        F(t)&=p_N(t)=1-\sum_{j=1}^{N-1}p_j(t),
        \\\tilde{f}(s)&=p_{0,N}+\sum_{j=1}^{N-1}p_{0,j}\frac{\sum_{i=1}^{N-1}\Tilde{K}_{iN}(s)\sum_{\mathcal{F}^{s,j\to i}_{[i,N,\ast]}}w(\mathcal{F})}{\sum_{\mathcal{F}^{s}_{[N,\ast]}}w(\mathcal{F})}\label{eq:FPT_distribution_graph_theory_1},
    \end{align}
    where the sums are taken over the augmented graph $\graph^\ast$ as before.
\end{proposition}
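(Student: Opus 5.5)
Since $N$ is absorbing, the process stays in $N$ once it gets there, so $\{T\le t\}=\{X_t=N\}$. This gives $F(t)=p_N(t)=1-\sum_{j=1}^{N-1}p_j(t)$, using that the occupation probabilities sum to one for every $t$. I take non-explosion for granted here, since $Q_{ij}(\{0\})=0$ and the network is finite. Passing to Laplace--Stieltjes transforms gives $\tilde f(s)=s\tilde p_N(s)$. So the task reduces to computing $s\tilde p_N(s)$ from the generalized master equation, Eq.~(\ref{eq:generalized_ME_Laplace}), and writing it in graph-theoretic form.

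For the absorbing state we have $\tilde K_{Nl}=0$, so $\tilde W_{NN}(s)=0$ and the $N$-th row of Eq.~(\ref{eq:generalized_ME_Laplace}) reads
\begin{align}
s\tilde p_N(s)-\sum_{i=1}^{N-1}\tilde K_{iN}(s)\tilde p_i(s)=p_{0,N}.
\end{align}
Hence $\tilde f(s)=p_{0,N}+\sum_{i=1}^{N-1}\tilde K_{iN}(s)\tilde p_i(s)$. The remaining step is to express $\tilde p_i(s)$ for $i\le N-1$. Because $\tilde W_{iN}=\tilde K_{Ni}=0$, the rows $1,\dots,N-1$ form a closed system
\begin{align}
(s\mathbb{1}-\tilde W(s))_{\{N\},\{N\}}\,\tilde p_{\{N\}}(s)=p_{0,\{N\}}.
\end{align}
The plan is to invert this reduced matrix with the All-Minors Matrix-Tree Theorem, as in Proposition~\ref{Prop:inverse_of_s-W}.

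To do that, I consider the augmented graph $\graph^\ast$ with the same weights as in Proposition~\ref{Prop:inverse_of_s-W}. Its Laplacian is $L$, and $(s\mathbb{1}-\tilde W(s))^T=L_{\{\ast\},\{\ast\}}$. Deleting the row and column of $N$ in addition shows that the transpose of the reduced matrix is $L_{\{N,\ast\},\{N,\ast\}}$. I then apply Theorem~\ref{theorem:AMMTT} with two deleted roots, $\{N,\ast\}$. Only the single-root version is stated above, but the general all-minors statement of~\cite{chaiken1982combinatorial} gives the analogue; alternatively, merging $N$ and $\ast$ into one vertex reduces it to the stated case. This yields
\begin{align}
\det L_{\{N,\ast\},\{N,\ast\}}=\sum_{\mathcal F^s_{[N,\ast]}}w(\mathcal F),
\end{align}
and the cofactors give $(-1)^{i+j}$ times $\sum_{\mathcal F^{s,j\to i}_{[i,N,\ast]}}w(\mathcal F)$. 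The denominator is at least $s^{N-1}>0$, because the forest in which every $l\le N-1$ points directly to $\ast$ is one of its terms. The reduced matrix is therefore invertible, and
\begin{align}
\tilde p_i(s)=\sum_{j=1}^{N-1}\frac{\sum_{\mathcal F^{s,j\to i}_{[i,N,\ast]}}w(\mathcal F)}{\sum_{\mathcal F^{s}_{[N,\ast]}}w(\mathcal F)}\,p_{0,j}.
\end{align}
Substituting this into $\tilde f(s)$ and exchanging the finite sums gives Eq.~(\ref{eq:FPT_distribution_graph_theory_1}).

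The main obstacle is this three-root minor identity. I would handle it by contracting $N$ and $\ast$ into a single vertex $\ast'$. The edge $(l,N)$ of weight $\tilde K_{lN}$ and the edge $(l,\ast)$ of weight $s$ combine into parallel edges, or equivalently one edge of weight $s+\tilde K_{lN}$, which reproduces exactly the diagonal of the reduced matrix. Spanning forests of the contracted graph rooted at $\{i,\ast'\}$ then correspond bijectively to spanning forests of $\graph^\ast$ rooted at $\{i,N,\ast\}$, obtained by choosing for each edge into $\ast'$ whether it came from $N$ or from $\ast$. Since $N$ has no outgoing edges in $\graph^\ast$ other than $(N,\ast)$, and that edge is deleted, the weights match. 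This is where care is needed, and it then lets me apply Theorem~\ref{theorem:AMMTT} verbatim.
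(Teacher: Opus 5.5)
Your proof is correct, but it takes a different route from the paper's.

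\textbf{The paper's route.} It starts from $\tilde f(s)=1-s\sum_{i=1}^{N-1}\tilde p_i(s)$ and inserts the full resolvent formula of Theorem~\ref{theorem:transient_semi_MCTT}, with trees of $\graph^\ast$ rooted at $\ast$ and forests rooted at $\{i,\ast\}$. From there the argument is purely combinatorial:
\begin{itemize}
\item it splits $\sum_{\mathcal{T}^{s}_{[\ast]}}w(\mathcal{T})$ according to the last vertex on the path of $j$ before $\ast$;
\item it splits the $N$-term further by the last edge $(i,N)$, which produces the factors $\tilde K_{iN}(s)$ and the forests rooted at $\{i,N,\ast\}$;
\item it cancels a factor $s$ via $\sum_{\mathcal{T}^{s}_{[\ast]}}w(\mathcal{T})=s\sum_{\mathcal{F}^{s}_{[N,\ast]}}w(\mathcal{F})$, since $(N,\ast)$ is the only outgoing edge of $N$.
\end{itemize}

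\textbf{Your route.} You obtain the factors $\tilde K_{iN}(s)$ algebraically, as a flux-into-$N$ identity read off from the $N$-th row of Eq.~(\ref{eq:generalized_ME_Laplace}). You then note that the rows $1,\dots,N-1$ close because $\tilde W_{iN}=0$, and you invert this reduced system directly. Your contraction of $N$ and $\ast$ is exactly the graph $\graph^{N^\ast}$, which the paper introduces only as a remark after its proof. Expanding the edge weights $s+\tilde K_{lN}(s)$ is precisely the bijection between the two forest representations. It is sound because $N$ and $\ast$ are both roots, so neither contributes an outgoing edge.

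\textbf{What each buys.} The paper reuses the stated two-root matrix-tree theorem and the already established occupation formula, with no new minor identity. The price is path-decomposition identities that it states without a detailed bijection. Your approach avoids that decomposition entirely and delivers the contracted-graph formula directly. It also gives, as a byproduct, a forest representation of $\tilde p_i(s)$ for $i<N$ with roots $\{i,N,\ast\}$. The price is the contraction/bijection step, or appeal to the general all-minors theorem, which you handle correctly.
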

\begin{proof}
    The first identity follows immediately from the fact that $N$ is absorbing, as once $X_t$ reaches $N$, it stays in this state, so $p_N(t)=\mathbb{P}(T\leq t)=F(t)$.

    The Laplace transform of $F(t)$ therefore reads:
    \begin{align}
        \tilde{F}(s)=\frac{1}{s}-\sum_{j=1}^{N-1}\tilde{p}_j(s),
    \end{align}
    and applying Eq.~(\ref{eq:FPT_distribution}) and Eq.~(\ref{eq:transient_semi_MCTT}) then yields:
    \begin{align}
    \Tilde{f}(s)&=s\tilde{F}(s)\\&=1-s\sum_{i=1}^{N-1}\Tilde{p}_i(s)&\\&=1-\sum_{j=1}^Np_{0,j}\frac{s\sum_{i=1}^{N-1}\sum_{\mathcal{F}^{s,j\to i}_{[i,\ast]}}w(\mathcal{F})}{\sum_{\mathcal{T}^{s}_{[\ast]}}w(\mathcal{T})}\\ &=\sum_{j=1}^Np_{0,j}\frac{\sum_{\mathcal{T}^{s}_{[\ast]}}w(\mathcal{T})-s\sum_{i=1}^{N-1}\sum_{\mathcal{F}^{s,j\to i}_{[i,\ast]}}w(\mathcal{F})}{\sum_{\mathcal{T}^{s}_{[\ast]}}w(\mathcal{T})}.
    \end{align}
    Observe that for every $j\in V(\graph)\setminus\{N\}$, the sum over the spanning trees of the augmented graph can be decomposed as follows:
    \begin{align}
        \sum_{\mathcal{T}^{s}_{[\ast]}}w(\mathcal{T})&=s\sum_{i=1}^{N-1}\sum_{\mathcal{F}^{s,j\to i}_{[i,\ast]}}w(\mathcal{F})+s\sum_{\mathcal{F}^{s,j\to N}_{[N,\ast]}}w(\mathcal{F})\\&=s\sum_{i=1}^{N-1}\sum_{\mathcal{F}^{s,j\to i}_{[i,\ast]}}w(\mathcal{F})+s\sum_{i=1}^{N-1}\Tilde{K}_{iN}(s)\sum_{\mathcal{F}^{s,j\to i}_{[i,N,\ast]}}w(\mathcal{F}).
    \end{align}
    Thus, the Laplace transform of the distribution can be expressed as:
    \begin{align}
        \Tilde{f}(s)&=p_{0,N}+\sum_{j=1}^{N-1}p_{0,j}\frac{\sum_{i=1}^{N-1}\Tilde{K}_{iN}(s)\sum_{\mathcal{F}^{s,j\to i}_{[i,N,\ast]}}w(\mathcal{F})s}{\sum_{\mathcal{T}^{s}_{[\ast]}}w(\mathcal{T})}\\ &=p_{0,N}+\sum_{j=1}^{N-1}p_{0,j}\frac{\sum_{i=1}^{N-1}\Tilde{K}_{iN}(s)\sum_{\mathcal{F}^{s,j\to i}_{[i,N,\ast]}}w(\mathcal{F})}{\sum_{\mathcal{F}^{s}_{[N,\ast]}}w(\mathcal{F})},
    \end{align}
    using that the only outgoing edge of $N$ in the augmented graph is $(N,\ast)$, which therefore contributes a factor $s$ to all terms in the sum over the spanning trees.
\end{proof}

Note that $N$ and $\ast$ can now also be contracted into a single vertex. So, instead of considering the augmented graph $\graph^\ast$, we may alternatively define the graph $ \graph^{N^\ast}=(V(\graph^{N^\ast}),E(\graph^{N^\ast}))$, where
\begin{align}
    V(G^{N\ast})&:=\{1,\ldots,N-1,N^\ast\},\\E(G^{N\ast})&:=\{(i,j)\in E(G)~|~ i,j\neq N\}\cup\{(i,N^\ast)~|~ i=1,\ldots,N-1\}.
\end{align}
The weights of the contracted graph for $(i,j)\in E(\graph^{N\ast})$ are defined by
\begin{align}
    w(i,j)=\begin{cases}
        \tilde{K}_{ij}(s)&\text{if } i,j\in V(G^{N^\ast})\setminus\{N^\ast\}\\ s+\tilde {K}_{iN}(s)&\text{if }i\in V(G^{N^\ast})\setminus\{N^\ast\},~j=N^\ast
    \end{cases},
\end{align}
where we set $\tilde{K}_{iN}(s)=0$ if $(i,N)\notin E(G)$.

Summing over the spanning forests of $\graph^{N^\ast}$ instead of $\graph^\ast$, the Laplace transform of the distribution then reads:
\begin{align}
    \Tilde{f}(s)&=p_{0,N}+\sum_{j=1}^{N-1}p_{0,j}\frac{\sum_{i=1}^{N-1}\Tilde{K}_{iN}(s)\sum_{\mathcal{F}^{s,j\to i}_{[i,N^\ast]}}w(\mathcal{F})}{\sum_{\mathcal{T}^{s}_{[N^\ast]}}w(\mathcal{T})}.
\end{align}
This shows that the graph-theoretic expression of the Laplace transform of the hitting-time distribution is again analogous to the result found for Markov networks~\cite{voits2026emergence}, replacing the constant transition rates $k_{ij}$ in the Markovian case by Laplace-transformed memory kernels $\tilde{K}_{ij}(s)$.  

Since the moments of the sojourn times can be infinite, so can the moments of the hitting time. To generalize the graph-theoretic formulas for the moments, we therefore need to restrict the class of semi-Markov networks. For that, we first need the following lemma:

\begin{lemma}\label{lemma:finite_K}
    Consider a semi-Markov network on the graph $\graph=(\{1,...,N\},E(\graph))$. Let $(i,j)\in E(\graph)$, and
    \begin{align}
        \integ{0}{\infty}{t^nQ_i(}{t})<\infty
    \end{align}
    for an $n\in \mathbb{N}$, i.e., the $n$-th moment of the sojourn time in state $i$ is finite. Then $\lim_{s\to 0}\frac{\text{d}^m}{\text{d}s^m}\tilde{K}_{ij}(s)$ is finite for all $m\in\{0,...,n-1\}$ and:
    \begin{align}
        \lim_{s\to 0}s\frac{\text{d}^n}{\text{d}s^n}\tilde{K}_{ij}(s)=0.
    \end{align}
\end{lemma}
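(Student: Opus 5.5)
The plan is to write the memory kernel as a quotient of two Laplace transforms, each of which is controlled by the moments of $Q_i$, and then apply the quotient (Leibniz) rule. For an absorbing state $i$ there is nothing to prove, since $\tilde{K}_{ij}\equiv 0$. So let $i\in V^\circ(\graph)$. From the definition and Lemma~\ref{lemma:Q_W_identity},
\begin{align}
    \tilde{K}_{ij}(s)=\frac{\tilde{Q}_{ij}(s)}{\tilde{\phi}_i(s)},\qquad \tilde{\phi}_i(s)=\integ{0}{\infty}{e^{-st}\phi_i(t)}{t}.
\end{align}
I will show that numerator and denominator have finite derivatives at $s\to 0$ up to the needed orders, and that the denominator stays bounded away from zero.

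First I handle the moment bounds. For $s>0$, differentiation under the integral sign is justified by dominated convergence, because $t^m e^{-st}$ is bounded on $[0,\infty)$. This gives
\begin{align}
    \tilde{Q}_{ij}^{(m)}(s)=(-1)^m\int_{[0,\infty)} t^m e^{-st}Q_{ij}(dt),\qquad \tilde{\phi}_i^{(m)}(s)=(-1)^m\integ{0}{\infty}{t^m e^{-st}\phi_i(t)}{t}.
\end{align}
Since $Q_{ij}\le Q_i$ and the $n$-th moment of $Q_i$ is finite, monotone convergence shows that $\tilde{Q}_{ij}^{(m)}(s)$ has a finite limit for all $m\le n$. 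By Fubini–Tonelli, as in the computation of $\tilde{\phi}_i$ above,
\begin{align}
    \integ{0}{\infty}{t^m\phi_i(t)}{t}=\frac{\mathbb{E}[\tau_i^{m+1}]}{m+1}.
\end{align}
Hence $\tilde{\phi}_i^{(m)}(s)$ has a finite limit for all $m\le n-1$. Moreover $\tilde{\phi}_i(0)=\mathbb{E}[\tau_i]>0$, because $Q_i(\{0\})=0$. Therefore $1/\tilde{\phi}_i$ is smooth near $0^+$, and by Faà di Bruno its derivatives up to order $n-1$ have finite limits. The Leibniz rule applied to $\tilde{Q}_{ij}\cdot(1/\tilde{\phi}_i)$ then gives finiteness of $\lim_{s\to0}\tilde{K}_{ij}^{(m)}(s)$ for $m\le n-1$.

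Next I treat the $n$-th derivative. Expanding $\tilde{K}_{ij}^{(n)}$ by the Leibniz and Faà di Bruno formulas, every term is a product of quantities with finite limits, with one exception: terms containing $\tilde{\phi}_i^{(n)}(s)$. Such a term appears exactly once, linearly, as $-\tilde{Q}_{ij}\tilde{\phi}_i^{(n)}/\tilde{\phi}_i^2$. After multiplying by $s$, all the finite terms vanish in the limit, so the claim reduces to
\begin{align}
    \lim_{s\to0} s\integ{0}{\infty}{t^n e^{-st}\phi_i(t)}{t}=0.
\end{align}
This is the main step. I would rewrite the integrand as $(st)e^{-st}\,t^{n-1}\phi_i(t)$. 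Since $xe^{-x}\le e^{-1}$, it is dominated by $e^{-1}t^{n-1}\phi_i(t)$, which is integrable by the moment identity with $m=n-1$. The integrand also tends to $0$ pointwise as $s\to0$. Dominated convergence then finishes the proof.
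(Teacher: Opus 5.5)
Your proof is correct. Its skeleton matches the paper's: the quotient $\tilde K_{ij}=\tilde Q_{ij}/\tilde\phi_i$, finite limits of $\tilde Q_{ij}^{(m)}$ for $m\le n$ and of $\tilde\phi_i^{(m)}$ for $m\le n-1$ via the Fubini--Tonelli moment identity, $\tilde\phi_i(0)>0$, and the reduction of the last claim to $s\tilde\phi_i^{(n)}(s)\to0$.

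The genuine difference is in that last step. The paper never estimates $\int_0^\infty t^n e^{-st}\phi_i(t)\,\text{d}t$ directly. It differentiates the exact relation $s\tilde\phi_i(s)=1-\tilde Q_i(s)$ a total of $n$ times, giving $s\tilde\phi_i^{(n)}(s)=-\big(n\tilde\phi_i^{(n-1)}(s)+\tilde Q_i^{(n)}(s)\big)$ for every $s>0$. It then notes that the two limits on the right cancel, since $n\lim_{s\to0}\tilde\phi_i^{(n-1)}(s)=(-1)^{n-1}\mathbb{E}[\tau_i^n]=-\lim_{s\to0}\tilde Q_i^{(n)}(s)$. That argument is purely algebraic and reuses limits already computed, with no further interchange of limit and integral.

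Your dominated-convergence argument with $xe^{-x}\le e^{-1}$ is the same device the paper uses later in Lemma~\ref{lemma:memory_kernels_small_s}. It is equally short and never needs $\tilde Q_i^{(n)}$. It also shows directly that the conclusion rests on integrability of $t^{n-1}\phi_i$, rather than on an exact cancellation of two nonzero limits.

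Your Leibniz/Fa\`a di Bruno bookkeeping also adds something. It shows that $\tilde\phi_i^{(n)}$ enters $\tilde K_{ij}^{(n)}$ only through the single term $-\tilde Q_{ij}\tilde\phi_i^{(n)}/\tilde\phi_i^2$, whose coefficient has a finite limit. The paper merely asserts that the final claim reduces to this.

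A minor point: your remark about absorbing $i$ is vacuous, since $(i,j)\in E(\graph)$ already forces $i\in V^\circ(\graph)$.
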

\begin{proof}
    A finite $n$-th moment implies that also all moments of order $m\leq n$ are finite, and since $Q_{ij}\leq Q_i$, we also have 
    \begin{align}
        \integ{0}{\infty}{t^mQ_{ij}(}{t})<\infty,
    \end{align}
    for $m\leq n$.

    For the Laplace transforms, this means that $\forall m\in\{1,...,n\}$:
    \begin{align}
        \lim_{s\to 0}\frac{\text{d}^m\Tilde{Q}_{ij}}{\text{d}s^m}(s)&=(-1)^m\integ{0}{\infty}{t^mQ_{ij}(}{t})<\infty\\ \lim_{s\to 0}\frac{\text{d}^m\Tilde{Q}_{i}}{\text{d}s^m}(s)&=(-1)^m\integ{0}{\infty}{t^mQ_{i}(}{t})<\infty.
    \end{align}
    Since $Q_i\geq 0$, the following holds for $m\leq n-1$:
    \begin{align}
      \lim_{s\to 0}\frac{\text{d}^{m}\tilde{\phi}_i}{\text{d}s^m}(s)&=(-1)^m\integ{0}{\infty}{t^m\phi_i(t)}{t}\\&=(-1)^m\integ{0}{\infty}{t^m\integ{0}{\infty}{\mathbb{1}_{\{t'>t\}}Q_i(}{t'})}{t}\\&=(-1)^m\integ{0}{\infty}{\Big(\integ{0}{t'}{t^m}{t}\Big)Q_i(}{t')}\\&=\frac{(-1)^m}{m+1}\integ{0}{\infty}{t'^{m+1}Q_i(}{t')}<\infty\label{eq:relation_phi_psi_moments}.
    \end{align}
    Therefore, since $\tilde{\phi}_i(0)>0$ and $\tilde{K}_{ij}(s)=\frac{\Tilde{Q}_{ij}(s)}{\tilde{\phi}_i(s)}$, for all $m\leq n-1$, $\lim_{s\to 0}\frac{\text{d}^m}{\text{d}s^m}\tilde{K}_{ij}(s)$ exists and is finite.

    For the final claim, it remains to show that $\lim_{s\to 0}s\frac{\text{d}^n\tilde{\phi}_i}{\text{d}s^n}(s)=0$. Note that
    \begin{align}
        s\frac{\text{d}^n\tilde{\phi}_i}{\text{d}s^n}(s)&=-\big(n\frac{\text{d}^{n-1}\tilde{\phi}_i}{\text{d}s^{n-1}}(s)+\frac{\text{d}^{n}\Tilde{Q}_i}{\text{d}s^{n}}(s)\big),
    \end{align}
    which follows by induction as
    \begin{align}
        s\frac{\text{d}\tilde{\phi}_i}{\text{d}s}(s)&=-\big(\tilde{\phi}_i(s)+\frac{\text{d}\Tilde{Q}_i}{\text{d}s}(s)\big),
    \end{align}
    and
    \begin{align}
        s\frac{\text{d}^{n+1}\tilde{\phi}_i}{\text{d}s^{n+1}}(s)&=\frac{\text{d}}{\text{d}s}\Big(s\frac{\text{d}^{n}\tilde{\phi}_i}{\text{d}s^{n}}(s)\Big)-\frac{\text{d}^{n}\tilde{\phi}_i}{\text{d}s^{n}}(s)\\ &=-\big((n+1)\frac{\text{d}^{n}\tilde{\phi}_i}{\text{d}s^{n}}(s)+\frac{\text{d}^{n+1}\Tilde{Q}_i}{\text{d}s^{n+1}}(s)\big),
    \end{align}
    using the induction hypothesis for the second equality.
    By Eq.~(\ref{eq:relation_phi_psi_moments}), we have:
    \begin{align}
        n\lim_{s\to0}\frac{\text{d}^{n-1}\tilde{\phi}_i}{\text{d}s^{n-1}}(s)=-\lim_{s\to0}\frac{\text{d}^{n}\Tilde{Q}_i}{\text{d}s^{n}}(s)\Rightarrow \lim_{s\to 0}s\frac{\text{d}^n\tilde{\phi}_i}{\text{d}s^n}(s)=0.
    \end{align}
    Therefore, we conclude that $\lim_{s\to 0}s\frac{\text{d}^n}{\text{d}s^n}\tilde{K}_{ij}(s)=0$.
\end{proof}

This now allows us to generalize the graph-theoretic formulas for the moments of the hitting time known for Markov networks to semi-Markov networks:

\begin{theorem}\label{theorem_fpt_semi_Markov}
    Let $X_t$ be a semi-Markov process on a network on $\graph=\big(V(\graph), E(\graph)\big)$, where $V(\graph)=\{1,...,N\}$, with initial distribution $p_0$. Let $\graph$ contain at least one spanning tree rooted at $N$, and let $T$ be the hitting time to state $N$. If the sojourn times for all states $i\in V(\graph)\setminus\{N\}$ have finite moments up to order $n\in \mathbb{N}$, then the following holds:
    \begin{align}
        \mathbb{E}[T^k]&=(-1)^{k-1}k\sum_{j=1}^Np_{0,j}\frac{\text{d}^{k-1}}{\text{d}s^{k-1}}\frac{\sum_{i=1}^{N-1}\sum_{\mathcal{F}^{s,j\to i}_{[i,N,\ast]}}w(\mathcal{F})}{\sum_{\mathcal{F}^{s}_{[N,\ast]}}w(\mathcal{F})}\Big|_{s=0}<\infty,\label{eq:moments_fpt}
    \end{align}
    for every $k\in\{1,...,n\}$, and for the mean in particular:
    \begin{align}
        \mathbb{E}[T]=\sum_{j=1}^{N}p_{0,j}\frac{\sum_{i=1}^{N-1}\sum_{\mathcal{F}^{j\to i}_{[i,N]}}w_\kappa(\mathcal{F})}{\sum_{\mathcal{T}_{[N]}}w_\kappa(\mathcal{T})}.\label{eq:MFPT}
    \end{align}
    \begin{proof}
        Using Eq.~(\ref{eq:FPT_distribution_graph_theory_1}), we can express the moments as follows:
        \begin{align}
             \mathbb{E}[T^k]=&(-1)^k\lim_{s\to 0}\frac{\text{d}^k\tilde{f}}{\text{d}s^k}(s)\\=&(-1)^k\lim_{s\to 0}\frac{\text{d}^k}{\text{d}s^k}\Big(1-p_{0,N}-\sum_{j=1}^{N-1}p_{0,j}\frac{s\sum_{i=1}^{N-1}\sum_{\mathcal{F}^{s,j\to i}_{[i,N,\ast]}}w(\mathcal{F})}{\sum_{\mathcal{F}^{s}_{[N,\ast]}}w(\mathcal{F})}\Big)\\ =&(-1)^{k-1}\lim_{s\to 0}\sum_{j=1}^{N-1}p_{0,j}\Big(k\frac{\text{d}^{k-1}}{\text{d}s^{k-1}}\frac{\sum_{i=1}^{N-1}\sum_{\mathcal{F}^{s,j\to i}_{[i,N,\ast]}}w(\mathcal{F})}{\sum_{\mathcal{F}^{s}_{[N,\ast]}}w(\mathcal{F})}\\&+s\frac{\text{d}^{k}}{\text{d}s^k}\frac{\sum_{i=1}^{N-1}\sum_{\mathcal{F}^{s,j\to i}_{[i,N,\ast]}}w(\mathcal{F})}{\sum_{\mathcal{F}^{s}_{[N,\ast]}}w(\mathcal{F})}\Big)\label{eq:FPT_moments_intermediate_step}.
        \end{align}
        The second summand contains terms of at most $s\frac{\text{d}^{n}\tilde{K}_{ij}}{\text{d}s^n}(s)$, which all vanish by Lemma \ref{lemma:finite_K} for $s\to 0$. Since by assumption, there is a spanning tree $\mathcal{T}$ rooted in $N$, and it has weight $w_\kappa(\mathcal{T})>0$ because the sojourn times have finite first moments, i.e., finite means, we have:
        \begin{align}
            \lim_{s\to0}\sum_{\mathcal{F}^{s}_{[N,\ast]}}w(\mathcal{F})=\sum_{\mathcal{T}_{[N]}}w_\kappa(\mathcal{T})>0.
        \end{align}
        Hence, the second summand in Eq.~(\ref{eq:FPT_moments_intermediate_step}) vanishes and 
        \begin{align}
            \mathbb{E}[T^k]&=(-1)^{k-1}k\sum_{j=1}^{N-1}p_{0,j}\frac{\text{d}^{k-1}}{\text{d}s^{k-1}}\frac{\sum_{i=1}^{N-1}\sum_{\mathcal{F}^{s,j\to i}_{[i,N,\ast]}}w(\mathcal{F})}{\sum_{\mathcal{F}^{s}_{[N,\ast]}}w(\mathcal{F})}\Big|_{s=0},
        \end{align}
        which by Lemma \ref{lemma:finite_K} is finite.

        The formula for the mean follows immediately by setting $k=1$:
        \begin{align}
            \mathbb{E}[T]=\sum_{j=1}^{N}p_{0,j}\frac{\sum_{i=1}^{N-1}\sum_{\mathcal{F}^{j\to i}_{[i,N]}}w_\kappa(\mathcal{F})}{\sum_{\mathcal{T}_{[N]}}w_\kappa(\mathcal{T})}.
        \end{align}
    \end{proof}
\end{theorem}

This theorem tells us that the formula for the mean hitting time (Eq.~(\ref{eq:MFPT})) is exactly the same as for a Markov network with the pseudo-rates $\kappa_{ij}$ as transition rates. Semi-Markov corrections appear only in the higher moments. 

Extending the ideas from \cite{pitman2018tree}, this identical shape allows us to readily prove that Kemeny's constant is also a constant in semi-Markov networks:  
\begin{proposition}[Kemeny's constant for semi-Markov networks]
    Consider a semi-Markov network on $\graph=\big(V(\graph), E(\graph)\big)$, where $V(\graph)=\{1,...,N\}$, with finite-mean nonlattice sojourn time distributions. Let $\graph$ contain a spanning tree rooted at every $i\in V(\graph)$. Let $T_{i\to j}$ be the hitting time to state $j$ for a process starting at $i$, i.e., $X_0=i$. Then, there exists $K\in(0,\infty)$ (\textit{Kemeny's constant}) such that $\forall i\in V(\graph)$:
    \begin{align}
        \sum_{j=1,j\neq i}^N\pi_j\mathbb{E}[T_{i\to j}]=K.
    \end{align}
    Moreover, $K$ is given by:
    \begin{align}
        K=\frac{\sum_{1\leq j<l\leq N}\sum_{\mathcal{F}_{[l,j]}}w_\kappa(\mathcal{F})}{\sum_{l=1}^N\sum_{\mathcal{T}_{[l]}}w_\kappa(\mathcal{T})}.
    \end{align}
\end{proposition}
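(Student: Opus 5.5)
The plan is to reduce everything to tree/forest sums with respect to the pseudo-rates $\kappa$, combining Corollary~\ref{cor:no_lattice_corollary} for $\pi_j$ with Eq.~(\ref{eq:MFPT}) for $\mathbb{E}[T_{i\to j}]$. Since $\graph$ has a spanning tree rooted at every vertex, it is strongly connected. Hence there are no absorbing states, all $\alpha_i=1$, and the stationary distribution is
\[
\pi_j=\frac{\sum_{\mathcal{T}_{[j]}}w_\kappa(\mathcal{T})}{Z},\qquad Z:=\sum_{l=1}^N\sum_{\mathcal{T}_{[l]}}w_\kappa(\mathcal{T})>0 .
\]
For the hitting time to $j$, I make $j$ absorbing; this does not change the law of $T_{i\to j}$. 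Relabelling $j$ as $N$ and applying Theorem~\ref{theorem_fpt_semi_Markov} with $p_0=\delta_i$, where the finite-mean hypothesis gives the case $n=1$, yields
\[
\mathbb{E}[T_{i\to j}]=\frac{\sum_{m\neq j}\sum_{\mathcal{F}^{i\to m}_{[m,j]}}w_\kappa(\mathcal{F})}{\sum_{\mathcal{T}_{[j]}}w_\kappa(\mathcal{T})}.
\]
One point needs checking here. Making $j$ absorbing deletes its outgoing edges, but no spanning tree rooted at $j$ and no forest with $j$ as a root uses these edges. So the $\kappa$-weights in both numerator and denominator are those of the original graph.

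Multiplying, the factor $\sum_{\mathcal{T}_{[j]}}w_\kappa(\mathcal{T})$ cancels:
\[
\sum_{j\neq i}\pi_j\mathbb{E}[T_{i\to j}]=\frac{1}{Z}\sum_{j\neq i}\sum_{m\neq j}\sum_{\mathcal{F}^{i\to m}_{[m,j]}}w_\kappa(\mathcal{F}).
\]
The key combinatorial step is to reinterpret this double sum. Fix an unordered pair $\{m,j\}$ with $m\neq j$. Every two-rooted spanning forest $\mathcal{F}\in F_{[m,j]}$ places $i$ in exactly one of the two components.
\begin{itemize}
\item If $i\in\{m,j\}$, take $i=m$ (the term with $j=i$ is excluded). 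The forest is then counted exactly once, in the ordered term with roots $(m,j)=(i,j)$, because $i$ trivially reaches itself.
\item If $i\notin\{m,j\}$, the forest appears in the ordered term $(m,j)$ when $i\to m$ and in the term $(j,m)$ when $i\to j$. Exactly one of these holds.
\end{itemize}
So each forest in $F_{[m,j]}$ is counted exactly once in both cases. The numerator therefore equals $\sum_{1\le j<l\le N}\sum_{\mathcal{F}_{[l,j]}}w_\kappa(\mathcal{F})$, which does not depend on $i$. This proves the constancy and gives the stated formula for $K$.

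Finally, $K>0$ because this sum is nonempty. For $N\ge 2$, deleting one edge from a spanning tree rooted at $l$ produces a two-rooted forest, and all $\kappa_{ij}>0$. Also $K<\infty$ because $Z>0$.

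The main obstacle I expect is the bookkeeping in the counting step: the cases $j=i$ versus $m=i$, together with the fact that Eq.~(\ref{eq:MFPT}) sums over forests $F^{i\to m}_{[m,j]}$ with $m=i$ included, so that every pair is counted exactly once. A second point to check is that Theorem~\ref{theorem_fpt_semi_Markov} applies to the modified network in which $j$ is absorbing. The nonlattice assumption enters only through Corollary~\ref{cor:no_lattice_corollary}, to identify $\pi$ as the limiting distribution.
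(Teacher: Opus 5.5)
Your proposal is correct and follows essentially the same route as the paper. Both arguments combine the pseudo-rate tree formula for $\pi_j$ with Eq.~(\ref{eq:MFPT}), cancel the factor $\sum_{\mathcal{T}_{[j]}}w_\kappa(\mathcal{T})$, and use that $i$ lies in exactly one component of each two-rooted forest. The differences are minor: you handle the case $i\in\{m,j\}$ directly rather than adding the vanishing $j=i$ term, and you explicitly verify two points the paper leaves implicit, namely that making $j$ absorbing leaves the relevant forest weights unchanged and that $K\in(0,\infty)$.
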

\begin{proof}
    Since the mean sojourn times of the network are all finite and their distributions are nonlattice, by Corollary \ref{cor:no_lattice_corollary}, we can use Eq.~(\ref{eq:stationary_distribution_finite_mean_sojourn}) to express the stationary distribution, which is unique by Theorem \ref{theorem:SMCTT}, and Eq.~(\ref{eq:MFPT}) for the mean hitting times to compute:
    \begin{align}
        \sum_{j=1,j\neq i}^N\pi_j\mathbb{E}[T_{i\to j}]&=\frac{\sum_{j=1,j\neq i}^N\sum_{l=1}^{N}\sum_{\mathcal{F}^{i\to l}_{[l,j]}}w_\kappa(\mathcal{F})}{\sum_{l=1}^N\sum_{\mathcal{T}_{[l]}}w_\kappa(\mathcal{T})}\\&=\frac{\sum_{j=1}^N\sum_{l=1}^{N}\sum_{\mathcal{F}^{i\to l}_{[l,j]}}w_\kappa(\mathcal{F})}{\sum_{l=1}^N\sum_{\mathcal{T}_{[l]}}w_\kappa(\mathcal{T})}\\&=\frac{\sum_{1\leq j<l\leq N}\sum_{\mathcal{F}_{[l,j]}}w_\kappa(\mathcal{F})}{\sum_{l=1}^N\sum_{\mathcal{T}_{[l]}}w_\kappa(\mathcal{T})},
    \end{align}
    where in the second line, we used that $\sum_{\mathcal{F}^{i\to l}_{[l,i]}}w_\kappa (\mathcal{F})=0$ for all $l$, since $i$ cannot be a root and connected to $l\neq i$ at the same time, and the third line follows because the sum is taken over all forests with two roots and $i$ is always connected to exactly one of them. This makes the expression independent of $i$, and we conclude that:
    \begin{align}
        K=\frac{\sum_{1\leq j<l\leq N}\sum_{\mathcal{F}_{[l,j]}}w_\kappa(\mathcal{F})}{\sum_{l=1}^N\sum_{\mathcal{T}_{[l]}}w_\kappa(\mathcal{T})}.
    \end{align}
\end{proof}

An algebraic proof of the constancy of Kemeny's constant in semi-Markov networks can be found in \cite{hunter2018kemeny}.

Lastly, Eq.~(\ref{eq:FPT_distribution_graph_theory_1}) also allows us to study the shape of the hitting-time tail in the presence of infinite-mean sojourn times. To this end, we again need to restrict the class of possible sojourn time distributions to specific small-$s$ regimes as in Assumption \ref{assumption:small_s_sojourn_times}.
In this case, we also need a more restrictive notion than $\alpha^\ast$, because only the vertices reachable from the initial distribution $p_0$ contribute and also because $N$ was chosen to be an absorbing vertex. We therefore define:
\begin{align}
    \alpha_{p_0}(\graph):=\min_{i=1,...,N-1}\{\alpha_i~|~\exists j\in V(\graph):p_{0,j}>0\text{ and }j \text{ is connected to }i \text{ in }\graph\}.
\end{align}
Note that the definition excludes $N$ because we would always have $\alpha_{p_0}(\graph)=0$ if $N$ were included. Then the following asymptotic result holds:
\begin{theorem}\label{Theorem:FPT_asymptotics}
    Let $X_{t\geq 0}$ be a semi-Markov process on a network $\graph$ with at least one spanning tree with root $N\in V(\graph)$ with initial condition $p_0$ such that $p_{0,N}<1$ and sojourn times that satisfy Assumption \ref{assumption:small_s_sojourn_times}. Then the small-$s$ behavior of the hitting-time distribution in Laplace space is given by:
    \begin{align}
     \tilde{f}(s)=1-\sum_{j=1}^Np_{0,j}\frac{\sum_{i:\alpha_i=\alpha_{p_0}(\graph)}\sum_{\mathcal{F}^{ j\to i}_{[i,N]}}w_\kappa(\mathcal{F})}{\sum_{\mathcal{T}_{[N]}}w_\kappa(\mathcal{T})}s^{\alpha_{p_0}(\graph)}+o(s^{\alpha_{p_0}(\graph)})\label{eq:FPT_distribution_Laplace_asymptotics}.
     \end{align}
    If $\alpha_{p_0}(\graph)<1$, this yields the following asymptotic scaling in time domain for the survival probability:
    \begin{align}
        \mathbb{P}(T>t)\sim \frac{\sum_{j=1}^Np_{0,j}\sum_{i:\alpha_i=\alpha_{p_0}(\graph)}\sum_{\mathcal{F}^{ j\to i}_{[i,N]}}w_\kappa(\mathcal{F})}{\Gamma(1-\alpha_{p_0}(\graph))\sum_{\mathcal{T}_{[N]}}w_\kappa(\mathcal{T})}t^{-\alpha_{p_0}(\graph)}.
    \end{align}  
\end{theorem}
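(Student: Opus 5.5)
The plan is to start from the complementary form of Proposition \ref{prop_fpt_distribution_semi_Markov} used inside its proof. There it was shown that
\begin{align}
1-\tilde f(s)=\sum_{j=1}^{N-1}p_{0,j}\frac{s\sum_{i=1}^{N-1}\sum_{\mathcal{F}^{s,j\to i}_{[i,N,\ast]}}w(\mathcal{F})}{\sum_{\mathcal{F}^{s}_{[N,\ast]}}w(\mathcal{F})},
\end{align}
with all forests taken in the augmented graph. Each term should then be expanded to leading order in $s$, exactly as in the proof of Theorem \ref{theorem:SMCTT}. Forests of $\graph^\ast$ rooted at $\{N,\ast\}$ (respectively $\{i,N,\ast\}$) become, after deleting $\ast$, forests of $\graph$ with roots $N$ together with the set $R$ of vertices pointing to $\ast$. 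Each such forest carries a factor $s^{|R|}$, and its remaining edges are memory kernels. By Assumption \ref{assumption:small_s_sojourn_times}, a kernel $\tilde K_{lm}(s)$ scales like $\kappa_{lm}s^{1-\alpha_l}$, and every non-root vertex other than $N$ contributes exactly one outgoing edge.

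For the denominator, I would compare a forest with roots $\{N\}\cup R$ against the pure trees $\mathcal{T}_{[N]}$. Relative to the common factor $s^{\sum_{l\neq N}(1-\alpha_l)}$, a root $l\in R$ contributes $s$ instead of $s^{1-\alpha_l}$, i.e.\ an extra $s^{\alpha_l}$ with $\alpha_l>0$. Hence the denominator equals $s^{\sum_{l\neq N}(1-\alpha_l)}\big(\sum_{\mathcal{T}_{[N]}}w_\kappa(\mathcal{T})+o(1)\big)$. The leading coefficient is strictly positive because a spanning tree rooted at $N$ exists.

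For the numerator with fixed $j$, the term $s\cdot w(\mathcal{F})$ for a forest with roots $\{i,N\}\cup R$, where $j$ is in the component of $i$, carries relative power $s^{\alpha_i+\sum_{l\in R}\alpha_l}$. The minimal exponent is therefore attained at $R=\emptyset$ and is $\min\{\alpha_i : F^{j\to i}_{[i,N]}\neq\emptyset\}$. I need to check that this minimum, taken over $j$ with $p_{0,j}>0$, equals $\alpha_{p_0}(\graph)$. One direction: if $j$ is connected to $i\neq N$ and a tree rooted at $N$ exists, a forest in $F^{j\to i}_{[i,N]}$ can be built. Take the path $j\to i$, delete the tree edge out of $i$, and re-attach the remaining vertices along the tree, being careful that the path vertices keep only their path edges. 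The other direction is trivial, since such a forest contains a path from $j$ to $i$. Summing over $j$ with weights $p_{0,j}$ then gives Eq.~(\ref{eq:FPT_distribution_Laplace_asymptotics}). The condition $p_{0,N}<1$ guarantees that some $j\neq N$ contributes, and the terms with $j=N$ vanish since $F^{N\to i}_{[i,N]}=\emptyset$ for $i\neq N$.

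The time-domain statement follows from $\int_0^\infty e^{-st}\mathbb{P}(T>t)\,\text{d}t=(1-\tilde f(s))/s\sim C s^{\alpha_{p_0}(\graph)-1}$. Since $\mathbb{P}(T>t)$ is monotone, the Tauberian theorem (Karamata, stated in the Appendix) yields $\mathbb{P}(T>t)\sim C\,t^{-\alpha_{p_0}(\graph)}/\Gamma(1-\alpha_{p_0}(\graph))$ for $\alpha_{p_0}(\graph)<1$.

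The main obstacle I expect is the error bookkeeping. The $o(\cdot)$ corrections in each kernel must be uniform enough that products and finite sums keep $o$ of the leading order. This holds because there are finitely many forests and each factor is $\kappa s^{1-\alpha}(1+o(1))$. A second point needing care is the combinatorial claim that reachability from $j$ to $i$ suffices to produce a forest in $F^{j\to i}_{[i,N]}$.
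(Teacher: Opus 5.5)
Your proposal is correct and follows essentially the same route as the paper. You expand the forest sums in $1-\tilde f(s)=s\sum_j p_{0,j}(\cdots)/(\cdots)$ to leading order via $\tilde K_{lm}(s)=\kappa_{lm}s^{1-\alpha_l}(1+o(1))$, keep the $s^{\alpha_{p_0}(\graph)}$ term, and apply Karamata's theorem to $(1-\tilde f(s))/s$. Your explicit check that $F^{j\to i}_{[i,N]}\neq\emptyset$ exactly when $j$ is connected to $i$ (given a tree rooted at $N$) fills in a step the paper leaves implicit: it guarantees a positive leading coefficient and shows that no $i$ with $\alpha_i<\alpha_{p_0}(\graph)$ contributes.
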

\begin{proof}
    We again use that Assumption \ref{assumption:small_s_sojourn_times} yields the following small-$s$ scaling for a spanning forest $\mathcal{F}$ of $\graph$ with roots $\{i_1,...,i_n\}\subseteq V(\graph)$: 
    \begin{align}
    w(\mathcal{F})=w_\kappa(\mathcal{F})s^{|E(\mathcal{F})|-\sum_{k\in V(\graph)}\alpha_k+\sum_{k=1}^n\alpha_{i_k}}+o(s^{|E(\mathcal{F})|-\sum_{k\in V(\graph)}\alpha_k+\sum_{k=1}^n\alpha_{i_k}}).
    \end{align}
    Since $N$ is an absorbing state, we have:
    \begin{align}
        \sum_{\mathcal{F}^{s,j\to i}_{[i,N,\ast]}}w(\mathcal{F})&=\sum_{\mathcal{F}^{s,j\to i}_{[i,N]}}w_\kappa(\mathcal{F})s^{N-2-\sum_{k\in V(\graph)}\alpha_k+\alpha_{i}}+o(s^{N-2-\sum_{k\in V(\graph)}\alpha_k+\alpha_{i}}),\\\sum_{\mathcal{F}^{s}_{[N,\ast]}}w(\mathcal{F})&=\sum _{\mathcal{T}_{[N]}}w_\kappa(\mathcal{T})s^{N-1-\sum_{k\in V(\graph)}\alpha_k}+o(s^{N-1-\sum_{k\in V(\graph)}\alpha_k}).
    \end{align}
    Hence, using Eq.~(\ref{eq:FPT_distribution_graph_theory_1}), we find the following small-$s$ regime for $\tilde{f}(s)$:
    \begin{align}
     \tilde{f}(s)&=1-s\sum_{j=1}^Np_{0,j}\frac{\sum_{i=1}^{N-1}\sum_{\mathcal{F}^{s,j\to i}_{[i,N,\ast]}}w(\mathcal{F})}{\sum_{\mathcal{F}^{s}_{[N,\ast]}}w(\mathcal{F})}\\ &=1-\sum_{j=1}^Np_{0,j}\frac{\sum_{i:\alpha_i=\alpha_{p_0}(\graph)}\sum_{\mathcal{F}^{ j\to i}_{[i,N]}}w_\kappa(\mathcal{F})}{\sum_{\mathcal{T}_{[N]}}w_\kappa(\mathcal{T})}s^{\alpha_{p_0}(\graph)}+o(s^{\alpha_{p_0}(\graph)}).
     \end{align}
    For the result in time domain, we observe that:
    \begin{align}
        \integ{0}{\infty}{\mathbb{P}(T>t)e^{-st}}{t}&=\integ{0}{\infty}{(1-F(t))e^{-st}}{t}\\&=\frac{1-\tilde{f}(s)}{s}\\ &=\sum_{j=1}^Np_{0,j}\frac{\sum_{i:\alpha_i=\alpha_{p_0}(\graph)}\sum_{\mathcal{F}^{ j\to i}_{[i,N]}}w_\kappa(\mathcal{F})}{\sum_{\mathcal{T}_{[N]}}w_\kappa(\mathcal{T})}s^{\alpha_{p_0}(\graph)-1}+o(s^{\alpha_{p_0}(\graph)-1}).
    \end{align}
    The asymptotic behavior of $\mathbb{P}(T>t)$ follows from Karamata's Tauberian theorem for Laplace-Stieltjes transforms~\cite{feller1991introduction}, for the case $\alpha_{p_0}(\graph)<1$:
    \begin{align}
        \mathbb{P}(T>t)\sim \frac{\sum_{j=1}^Np_{0,j}\sum_{i:\alpha_i=\alpha_{p_0}(\graph)}\sum_{\mathcal{F}^{ j\to i}_{[i,N]}}w_\kappa(\mathcal{F})}{\Gamma(1-\alpha_{p_0}(\graph))\sum_{\mathcal{T}_{[N]}}w_\kappa(\mathcal{T})}t^{-\alpha_{p_0}(\graph)}.
    \end{align}
\end{proof}

The previous theorem has an intuitive explanation: in the presence of infinite-mean sojourn times, the tail of the hitting-time distribution is dominated by the slowest sojourn time(s). Note that in the case $\alpha_{p_0}(\graph)=1$, Eq.~(\ref{eq:FPT_distribution_Laplace_asymptotics}) is not sufficient to characterize the asymptotic behavior as it gives only the expansion of the Laplace–Stieltjes transform to first order.

\section{Weak ergodicity breaking}
In the last section, we studied the graph-theoretic structure of hitting times in semi-Markov networks, which, from a physics perspective, describe the time the process takes to reach a certain target state. Time-additive observables constitute another class of observables which are often central in statistical physics and related fields, e.g., in the theory of stochastic thermodynamics~\cite{seifert2025stochastic}. In the following, we apply the graph-theoretic formalism to time-additive observables, which we first define:
\begin{definition}
    Let $X_t$ be a semi-Markov process on a network $\graph=(V(\graph), E(\graph))$ with the underlying discrete-time Markov process $(X^d_n)_{n\in\mathbb{N}_0}$. We define the \textit{number of transitions along an edge }$(i,j)\in E(\graph)$ as:
    \begin{align}
        N_{ij}(t):=\big|\{n\in\mathbb{N}~|~t_n\leq t,~{X^d_{n-1}=i\land X^d_n=j} \}\big|.
    \end{align}
    A \textit{time-additive observable} is a random variable of the shape:
    \begin{align}
    A(t)=\integ{0}{t}{g(X_{t'})}{t'}+\sum_{(i,j)\in E(\graph)}h(i,j)N_{ij}(t),
    \end{align}
    where 
    \begin{align*}
        g:V(\graph)&\to \mathbb{R}\\
        i&\mapsto g_i
    \end{align*}
    defines the \textit{state weights} and
    \begin{align*}
        h:E(\graph)&\to \mathbb{R}\\
        (i,j)&\mapsto h_{ij}
    \end{align*}
    defines the \textit{edge weights} of $A(t)$. If $h=0$, we call $A(t)$ a \textit{time-additive state observable}, and if $g=0$, we call $A(t)$ a \textit{time-additive transition observable}. 
\end{definition}

There are several important examples of time-additive observables, including the following:
\begin{itemize}
    \item $g_i=\delta_{in},~ h=0$ describes the \textit{occupation time of state} $n\in V(\graph)$, i.e., the time the process has spent in state $n$ up to time $t$.
    \item $g=0,~h_{ij}=1$ for all $(i,j)\in E(\graph)$ is the total jump count of the process.
    \item For a Markov process with transition rates $(k_{ij})_{(i,j)\in E(\graph)}$ and bidirectional transitions, $g=0,~h_{ij}=\log\frac{k_{ij}}{k_{ji}}$ defines the medium entropy production.
\end{itemize}

An object of central interest will be the moment generating function of a time-additive observable, which we define as
\begin{align}
    G_i(t,\lambda):=\mathbb{E}[ e^{\lambda A(t)}~|~X_0=i].
\end{align}
In the Markovian case, 
$G_i(t,\lambda)$ satisfies the usual Feynman-Kac equation generated by a tilted generator, while in the semi-Markov case, one obtains a renewal equation, which may be viewed as a generalization of it:
\begin{proposition}
     Let $A(t)$ be a time-additive observable of a semi-Markov network on $\graph=(V(\graph), E(\graph))$. Define $\phi_i^{(\lambda)}(t):=\phi_i(t)e^{\lambda g_it}$ for $i\in V(\graph)$ and $Q_{ij}^{(\lambda)}(\text{d}t):=e^{\lambda\left(g_i t+h_{i,j}\right)}Q_{ij}(\text{d}t)$ for $(i,j)\in E(\graph)$. Then, for $\tilde{G}(\lambda,s)=(\tilde{G}_i(\lambda,s))_{i\in V(\graph)}$ and every $ s>0$, there exists $\epsilon_s>0$ such that for every $\lambda\in (-\epsilon_s,\epsilon_s)$, the following identity holds:
     \begin{align}
         \big(\mathbb{1}-\tilde{Q}^{(\lambda)}(s)\big)\tilde{G}(\lambda,s)=\tilde{\phi}^{(\lambda)}(s).\label{eq:generalized_Feynmann_Kac}
     \end{align}
\end{proposition}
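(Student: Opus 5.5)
The identity will follow from a first-jump (renewal) decomposition of $G_i(t,\lambda)$, carried out in the time domain and then Laplace transformed.

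Condition on $X_0=i$ and on the first sojourn time $\tau_0$ together with the next state $X^d_1$. On the event $\{\tau_0>t\}$, which has probability $\phi_i(t)$, the process sits in $i$ throughout $[0,t]$. There $A(t)=g_it$, and this part contributes $\phi_i(t)e^{\lambda g_it}=\phi_i^{(\lambda)}(t)$. On the event $\{\tau_0=u\le t,\ X^d_1=j\}$, additivity gives
\begin{align}
A(t)=g_iu+h_{ij}+A'(t-u),
\end{align}
where $A'$ is the observable of the process restarted at time $u$ from $j$. By construction of the semi-Markov process, the post-jump trajectory is independent of $(\tau_0,X_0)$ given $X^d_1=j$, and it has the law of the process started in $j$. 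Hence
\begin{align}
G_i(t,\lambda)=\phi_i^{(\lambda)}(t)+\sum_{j}\int_{[0,t]}e^{\lambda(g_iu+h_{ij})}G_j(t-u,\lambda)\,Q_{ij}(du),
\end{align}
which is a Markov renewal equation. For absorbing $i$, $\phi_i=1$ and $Q_{ij}=0$, so $G_i=e^{\lambda g_it}$ and the equation is consistent.

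Next I take Laplace transforms in $t$. The convolution of the measure $Q^{(\lambda)}_{ij}$ with $G_j$ becomes the product $\tilde Q^{(\lambda)}_{ij}(s)\tilde G_j(\lambda,s)$. The resulting linear system is $\tilde G=\tilde\phi^{(\lambda)}+\tilde Q^{(\lambda)}\tilde G$, which is exactly Eq.~(\ref{eq:generalized_Feynmann_Kac}). Note that here $\tilde Q^{(\lambda)}$ appears without transpose, as expected for a backward equation.

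The main obstacle is justifying finiteness of all the transforms and the exchange of integrals, since $e^{\lambda A(t)}$ can grow.

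\emph{Growth of $G$.} I bound $|A(t)|\le g_{\max}t+h_{\max}N(t)$, where $N(t)$ is the total jump count and $g_{\max}=\max_i|g_i|$, $h_{\max}=\max_{(i,j)}|h_{ij}|$. Then
\begin{align}
G_i\le e^{|\lambda|g_{\max}t}\,\mathbb{E}\big[e^{|\lambda|h_{\max}N(t)}\big].
\end{align}
The problem reduces to an exponential moment of the number of renewals.

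\emph{Smallness of $\tilde Q^{(\lambda)}$.} For fixed $s>0$, choose $\epsilon_s$ so that $|\lambda|g_{\max}<s/2$. Then $\tilde Q^{(\lambda)}_{ij}(s)\le e^{|\lambda|h_{\max}}\tilde Q_{ij}(s-|\lambda|g_{\max})$, which is finite. Because $Q_{ij}$ has no atom at $0$, we have $\tilde Q_i(s')<1$ for every $s'>0$. Continuity in $\lambda$ then lets me shrink $\epsilon_s$ so that
\begin{align}
\max_i\sum_j\tilde Q^{(|\lambda|)}_{ij}(s)\big|_{g\to|g|,\,h\to|h|}<1.
\end{align}

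\emph{Convergence via the Neumann series.} Iterating the renewal equation $n$ times writes $\tilde G$ as the Neumann series $\sum_n(\tilde Q^{(\lambda)})^n\tilde\phi^{(\lambda)}$, with the $n$-th term corresponding to trajectories with exactly $n$ jumps. This series converges absolutely because the dominating matrix (built from $|g|$ and $|h|$) has norm less than $1$. This simultaneously proves $\tilde G_i(\lambda,s)<\infty$ and, by Tonelli/Fubini applied to the dominating series, justifies the termwise Laplace transform. So the identity holds for all $\lambda\in(-\epsilon_s,\epsilon_s)$.
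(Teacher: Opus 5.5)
Your proposal is correct and follows essentially the same route as the paper. Both arguments decompose $G_i(t,\lambda)$ at the first jump into the survival term $\phi_i^{(\lambda)}(t)$ plus a convolution of $G_j$ against $Q^{(\lambda)}_{ij}$, Laplace transform, and obtain finiteness for small $|\lambda|$ from $\sum_j\tilde{Q}_{ij}(s)<1$. Your expansion of $\tilde{G}$ as the Neumann series $\sum_n(\tilde{Q}^{(\lambda)})^n\tilde{\phi}^{(\lambda)}$ over the number of jumps sharpens the paper's one-line finiteness claim. Row sums below one only make $\mathbb{1}-\tilde{Q}^{(\lambda)}(s)$ invertible; on their own they do not rule out $\tilde{G}_i(\lambda,s)=\infty$, and your series does.
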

\begin{proof}
    For a fixed $s>0$, $\tilde{Q}_{ij}^{(\lambda)}(s)=e^{\lambda h_{ij}}\tilde{Q}_{ij}(s-\lambda g_i)$ and $\tilde{\phi}_{i}^{(\lambda)}(s)=\tilde{\phi}_{i}(s-\lambda g_i)$ are finite for all $\lambda$ in a sufficiently small neighborhood around $0$, and we can calculate $\tilde{G}_i(\lambda,s)$ by splitting the expectation value into a contribution before the first jump and after the first jump:
    \begin{align}
        G_i(t,\lambda)&=\mathbb{E}[e^{\lambda A(t)}~|~X_0=i]\\ &=\mathbb{P}(t<t_1~|~X_0=i)\mathbb{E}[e^{\lambda A(t)}~|~X_0=i, t<t_1]\\&+\mathbb{P}(t\geq t_1~|~X_0=i)\mathbb{E}[e^{\lambda A(t)}~|~X_0=i, t\geq t_1].
    \end{align}
    The first term is simply the survival probability up to time $t$ times the contribution to $A$ accumulated up to that time:
    \begin{align}
        \mathbb{P}(t<t_1~|~X_0=i)\mathbb{E}[e^{\lambda A(t)}~|~X_0=i, t<t_1]&=\phi_i(t)e^{\lambda g_it}\\ &=\phi_i^{(\lambda)}(t).
    \end{align}
    The second term describes the case that at least one transition away from the initial state $i$ occurs, so all possible transition times $t_1\leq t$ have to be considered, which contribute $e^{\lambda(g_it_1+h_{ij})}$ to $A$, depending on the next state $j\in V(\graph)$. Afterwards, the contribution is identical to a process starting in $j$ at time $t_1$, so we may express this term as:
    \begin{align}
        \mathbb{P}(t\geq t_1)\mathbb{E}[e^{\lambda A(t)}~|~X_0=i, t\geq t_1]&=\integ{[0,t]}{}{\sum_{j:(i,j)\in E(\graph)}e^{\lambda(g_it'+h_{ij})}\mathbb{E}[e^{\lambda A(t-t')}~|~X_0=j]~Q_{ij}(}{t')}\\ &=\sum_{j:(i,j)\in E(\graph)}\integ{[0,t]}{}{e^{\lambda(g_it'+h_{ij})}G_j(t-t',\lambda)Q_{ij}(}{t')}\\ &=\sum_{j:(i,j)\in E(\graph)}\integ{[0,t]}{}{G_j(t-t',\lambda)Q^{(\lambda)}_{ij}(}{t')}.
    \end{align}
    Combining these two summands and taking the Laplace transform of $G(t,\lambda)$ yields:
    \begin{align}
        \tilde{G}_i(s,\lambda)=\tilde{\phi}_i^{(\lambda)}(s)+\sum_{j:(i,j)\in E(\graph)}\tilde{Q}^{(\lambda)}_{ij}(s)\tilde{G}_j(s,\lambda).
    \end{align}
    Written in matrix notation, we see that:
    \begin{align}
         \big(\mathbb{1}-\tilde{Q}^{(\lambda)}(s)\big)\tilde{G}(\lambda,s)=\tilde{\phi}^{(\lambda)}(s).
     \end{align}
    Moreover, since $\sum_{j}\tilde{Q}_{ij}(s)<1$ if $i$ is not an absorbing state, also $\sum_{j}\tilde{Q}^{(\lambda)}_{ij}(s)<1$ for sufficiently small $|\lambda|$, so also $\tilde{G}(\lambda,s)$ is finite around $\lambda =0$.
\end{proof}

An expression for the $k$-th moment of $A(t)$ can be obtained by evaluating the $k$-th derivative of $G$ at $\lambda=0$. As the involved terms get increasingly complicated, we here only give the result for the mean:
\begin{proposition}\label{Prop:time_additive_observable_mean}
    Let $A(t)$ be a time-additive observable of a semi-Markov network on $\graph=(V(\graph), E(\graph))$. Then, the Laplace transform of the expectation value of $A(t)$ is given by:
    \begin{align}
        \widetilde{\mathbb{E}[A]}(s)=\frac{1}{s}\sum_{j\in V(\graph)}\tilde{p}_{j}(s)\big(\sum_{l:(j,l)\in E(\graph)}h_{jl}\tilde{K}_{jl}(s)+g_j\big),
    \end{align}
    and whenever the inverse Laplace transforms $K_{ij}$ exist as ordinary
    functions, the expression in Laplace space corresponds to
    \begin{align}
        \mathbb{E}[A(t)]=\integ{0}{t}{\big(\sum_{i\in V(\graph)}g_ip_i(t')+\sum_{(i,j)\in E(\graph)}h_{ij}\integ{0}{t'}{K_{ij}(t'-t'')p_i(t'')}{t''}\big)}{t'}.
    \end{align}
\end{proposition}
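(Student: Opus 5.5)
Two routes are available: differentiate the renewal equation (\ref{eq:generalized_Feynmann_Kac}) at $\lambda=0$, or decompose $A(t)$ directly. The direct decomposition is more transparent, so I will use it and keep the $\lambda$-derivative as a consistency check. By linearity,
\begin{align}
\mathbb{E}[A(t)]=\integ{0}{t}{\sum_{i\in V(\graph)}g_i\,p_i(t')}{t'}+\sum_{(i,j)\in E(\graph)}h_{ij}\,\mathbb{E}[N_{ij}(t)],
\end{align}
where Fubini is used for the state part (since $|g|$ is bounded and $t$ is finite). The state part has Laplace transform $\frac{1}{s}\sum_j g_j\tilde p_j(s)$, by the standard rule that integration corresponds to division by $s$. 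What remains is to compute the Laplace transform of $\mathbb{E}[N_{ij}(t)]$.

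For this I reuse the renewal measure $R_i(B)=\mathbb{E}[\sum_{n\ge0}\mathbb{1}_{\{X^d_n=i,\,t_n\in B\}}]$ from the proof of the generalized master equation. A jump along $(i,j)$ at time $t_{n+1}\le t$ requires a visit to $i$ at time $t_n$ followed by a sojourn with law $Q_{ij}$. By the conditional independence of sojourn times, this gives
\begin{align}
\mathbb{E}[N_{ij}(t)]=(R_i*Q_{ij})([0,t]).
\end{align}
The Laplace--Stieltjes transform of $R_i*Q_{ij}$ is $\tilde R_i(s)\tilde Q_{ij}(s)$. Passing to the cumulative function divides by $s$, so the Laplace transform of $\mathbb{E}[N_{ij}(t)]$ is $\frac1s\tilde R_i(s)\tilde Q_{ij}(s)$.

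Next I use $\tilde p(s)=\mathrm{diag}(\tilde\phi(s))\tilde R(s)$, which holds because $\tilde\phi_i>0$. This gives $\tilde R_i=\tilde p_i/\tilde\phi_i$, hence $\tilde R_i\tilde Q_{ij}=\tilde p_i\tilde K_{ij}$ by the definition of the memory kernel. Summing over edges and relabelling $i\to j$, $j\to l$ yields the stated formula
\begin{align}
\widetilde{\mathbb{E}[A]}(s)=\frac1s\sum_j\tilde p_j(s)\Big(\sum_l h_{jl}\tilde K_{jl}(s)+g_j\Big).
\end{align}
For absorbing $j$, $\tilde K_{jl}=0$, which is consistent because no jumps leave $j$.

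The main obstacle is justifying $\mathbb{E}[N_{ij}(t)]<\infty$ and the interchange of expectation and sums. The renewal measure must be locally finite, i.e.\ there must be no explosion in finite time. Here I would argue as follows.
\begin{itemize}
\item By the conditional independence of sojourn times, $\tilde R_i(s)=\sum_n\mathbb{E}[e^{-st_n}\mathbb{1}_{\{X^d_n=i\}}]$ is dominated by the $i$-th component of $\sum_n (\tilde Q^T(s))^n p_0$.
\item This series converges for $s>0$, because $\mathbb{1}-\tilde Q^T(s)$ is invertible with a nonnegative Neumann series: every row sum of $\tilde Q(s)$ is strictly below $1$, using $Q_{ij}(\{0\})=0$.
\item Since $R_i([0,t])\le e^{st}\tilde R_i(s)<\infty$, the renewal measure is locally finite, and Tonelli then justifies all exchanges.
\end{itemize}

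For the time-domain statement, assume the kernels $K_{ij}$ exist as functions. Then $\frac1s\tilde p_i\tilde K_{ij}$ inverts to $\integ{0}{t}{(K_{ij}*p_i)(t')}{t'}$ by the convolution theorem, and the state part inverts to $\integ{0}{t}{\sum_i g_i p_i(t')}{t'}$, which together give the stated expression.
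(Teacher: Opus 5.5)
Your proof is correct, but it takes a genuinely different route from the paper.

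\textbf{The paper's route.} It argues backward, via the first jump. It differentiates the tilted renewal equation, Eq.~(\ref{eq:generalized_Feynmann_Kac}), at $\lambda=0$, using $\tilde{G}_i(\lambda,s)|_{\lambda=0}=1/s$ and $\mathbb{1}-\tilde{Q}(s)=\diag{\tilde{\phi}(s)}\big(s\mathbb{1}-\tilde{W}^T(s)\big)$. The source term simplifies to $\frac{1}{s}\big(g_i\tilde{\phi}_i(s)+\sum_j h_{ij}\tilde{Q}_{ij}(s)\big)$. It then inverts with the spanning-forest resolvent of Proposition~\ref{Prop:inverse_of_s-W}, and only at the end recognizes $\tilde{p}_j(s)$ through Theorem~\ref{theorem:transient_semi_MCTT}.

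\textbf{Your route.} You argue forward, via the last visit before a jump. The identity $\mathbb{E}[N_{ij}(t)]=(R_i*Q_{ij})([0,t])$, together with $\tilde{R}_i=\tilde{p}_i/\tilde{\phi}_i$, shows directly that $K_{ij}*p_i$ is the expected rate of $(i,j)$-jumps. This gives a clean probabilistic reading of the memory kernel.

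\textbf{What yours buys.} It is more elementary in three ways. It needs only first moments of the jump counts, not the moment generating function and differentiation in $\lambda$ under the expectation and the Laplace transform (which the paper uses without detailed justification). It never touches the combinatorial representation. And your Neumann-series bound supplies the local finiteness of $R$ (no explosion), which the paper's derivation of the generalized master equation tacitly assumes.

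\textbf{What the paper's buys.} It extends systematically. The same expansion at second order in $\lambda$ gives $\mathbb{E}[A(t)^2]$ in the proof of Theorem~\ref{theorem:weak_ergodicity_breaking}, feeding in the first-order result in its forest form. A forward argument for the second moment would instead need the two-time law of $X_t$. For a semi-Markov process that law does not factor into one-time quantities, because it depends on the age of the current sojourn.
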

\begin{proof}
    We consider the Laplace transform of the moment generating function $\tilde{G}_i(s,\lambda)$ of $A(t)$ for a process starting in state $i\in V(\graph)$ to calculate the first moment as its first derivative at $\lambda=0$.    
    Applying the derivative with respect to $\lambda$ to Eq.~(\ref{eq:generalized_Feynmann_Kac}) yields:
    \begin{align}
        &\big(-\frac{\partial}{\partial\lambda}\tilde{Q}^{(\lambda)}(s)\big)\tilde{G}(\lambda,s)+\big(\mathbb{1}-\tilde{Q}^{(\lambda)}(s)\big)\frac{\partial}{\partial\lambda}\tilde{G}(\lambda,s)=\frac{\partial}{\partial\lambda}\tilde{\phi}^{(\lambda)}(s)\\ \Rightarrow&\big(\mathbb{1}-\tilde{Q}^{(\lambda)}(s)\big)\frac{\partial}{\partial\lambda}\tilde{G}(\lambda,s)=\frac{\partial}{\partial\lambda}\tilde{\phi}^{(\lambda)}(s)+\big(\frac{\partial}{\partial\lambda}\tilde{Q}^{(\lambda)}(s)\big)\tilde{G}(\lambda,s).
    \end{align}
    Next, we evaluate this expression at $\lambda = 0$, using that $\tilde{G}_i(s,0)=\frac{1}{s}$ for all $i\in V(\graph)$ and express $(\mathbb{1}-\tilde{Q}^{(\lambda)}(s))\big|_{\lambda=0}=(\mathbb{1}-\tilde{Q}(s))$ in terms of $\tilde{W}(s)$ according to Lemma \ref{lemma:Q_W_identity} to get:
    \begin{align}
        \diag{\tilde{\phi}(s)}\big(s\mathbb{1}-\tilde{W}^T(s)\big)\frac{\partial}{\partial\lambda}\tilde{G}(\lambda,s)\Big|_{\lambda=0}=\frac{\partial}{\partial\lambda}\big(\tilde{\phi}^{(\lambda)}(s)+\frac{1}{s}\tilde{Q}^{(\lambda)}(s)\mathbf{1}\big)\Big|_{\lambda=0},
    \end{align}
    where $\mathbf{1}_i=1$ for all $i\in V(\graph)$. For $s>0$, $\diag{\tilde{\phi}(s)}\big(s\mathbb{1}-\tilde{W}^T(s)\big)$ can be inverted according to Proposition \ref{Prop:inverse_of_s-W}:
    \begin{align}
        \frac{\partial}{\partial\lambda}\tilde{G}(\lambda,s)\Big|_{\lambda=0}=\big(s\mathbb{1}-\tilde{W}^T(s)\big)^{-1}\diag{\tilde{\phi}(s)}^{-1}\frac{\partial}{\partial\lambda}\big(\tilde{\phi}^{(\lambda)}(s)+\frac{1}{s}\tilde{Q}^{(\lambda)}(s)\mathbf{1}\big)\Big|_{\lambda=0}.
    \end{align}
    Evaluating the terms on the right-hand side at $\lambda=0$, we find:  
    \begin{align}
        \frac{\partial}{\partial\lambda}\tilde{Q}^{(\lambda)}_{ij}(s)\Big|_{\lambda=0}&=h_{ij}\Tilde{Q}_{ij}(s)-g_i\frac{\text{d}}{\text{d}s}\Tilde{Q}_{ij}(s),\\ \frac{\partial}{\partial\lambda}\tilde{\phi}_i^{(\lambda)}(s)\Big|_{\lambda=0}&=g_i\integ{0}{\infty}{\phi_i(t)te^{-st}}{t}\\&=-g_i\frac{\text{d}}{\text{d}s}\tilde{\phi}_i(s)\\ &=-g_i\frac{\text{d}}{\text{d}s}\frac{1-\Tilde{Q}_i(s)}{s}\\ &=g_i\big(\frac{\tilde{\phi}_i(s)}{s}+\frac{1}{s}\frac{\text{d}}{\text{d}s}\Tilde{Q}_i(s)\big).
    \end{align}
    Therefore,
    \begin{align}
        \frac{\partial}{\partial\lambda}\big(\tilde{\phi}^{(\lambda)}(s)+\frac{1}{s}\tilde{Q}^{(\lambda)}(s)\mathbf{1}\big)_i\big|_{\lambda=0}=\frac{1}{s}\big(g_i\tilde{\phi}_i(s)+\sum_jh_{ij}\Tilde{Q}_{ij}(s)\big),
    \end{align}
    and we can apply Proposition \ref{Prop:inverse_of_s-W} to express the inverse 
    \begin{align}
        \partial_\lambda\tilde{G}_i^{(\lambda)}(s)\Big|_{\lambda=0}&=\frac{1}{s}\sum_j\big(s\mathbb{1}-\tilde{W}(s)\big)^{-1}_{ji}\big(\frac{\sum_{l:(j,l)\in E(\graph)}h_{jl}\Tilde{Q}_{jl}(s)+g_j\tilde{\phi}_j(s)}{\tilde{\phi}_j(s)}\big)\\ &=\frac{1}{s}\sum_{j\in V(\graph)}\frac{\sum_{\mathcal{F}^{s,i\to j}_{[j,\ast]}}w(\mathcal{F})}{\sum_{\mathcal{T}^{s}_{[\ast]}}w(\mathcal{T})}\big(\sum_{l:(j,l)\in E(\graph)}h_{jl}\tilde{K}_{jl}(s)+g_j\big).
    \end{align}
    The expectation value of $A(t)$ is given by:
    \begin{align}
        \mathbb{E}[A(t)]=\sum_{i\in V(\graph)}p_{0,i}\frac{\partial}{\partial \lambda}G_i(t,\lambda)\Big|_{\lambda=0},
    \end{align}
    which in Laplace space reads:
    \begin{align}
        \widetilde{\mathbb{E}[A]}(s)=\frac{1}{s}\sum_{i,j\in V(\graph)}p_{0,i}\frac{\sum_{\mathcal{F}^{s,i\to j}_{[j,\ast]}}w(\mathcal{F})}{\sum_{\mathcal{T}^{s}_{[\ast]}}w(\mathcal{T})}\big(\sum_{l:(j,l)\in E(\graph)}h_{jl}\tilde{K}_{jl}(s)+g_j\big).
    \end{align}
    Comparison to Eq.~(\ref{eq:transient_semi_MCTT}) allows us to identify the Laplace transforms of the occupation probabilities in this expression:
    \begin{align}
        \widetilde{\mathbb{E}[A]}(s)=\frac{1}{s}\sum_{j\in V(\graph)}\tilde{p}_{j}(s)\big(\sum_{l:(j,l)\in E(\graph)}h_{jl}\tilde{K}_{jl}(s)+g_j\big),
    \end{align}
    which in time domain yields
    \begin{align}
        \mathbb{E}[A(t)]=\integ{0}{t}{\big(\sum_{i\in V(\graph)}g_ip_i(t')+\sum_{(i,j)\in E(\graph)}h_{ij}\integ{0}{t'}{K_{ij}(t'-t'')p_i(t'')}{t''}\big)}{t'},
    \end{align}
    if the inverse Laplace transforms $K_{ij}$ exist as ordinary functions.
\end{proof}

Finally, we turn our attention to the question of ergodicity, which is the property of a dynamical system that time averages along a single trajectory converge to ensemble averages. More precisely, a time-additive state observable $A(t)$ on a semi-Markov network with a unique limiting state distribution $\pi$ is called \textit{ergodic}, if:
\begin{align}
    \frac{A(t)}{t}~\overset{t\to \infty}{\longrightarrow}~\sum_{i\in V(\graph)}g_i \pi_i~~\text{a.s.}
\end{align}

In the following, we will show that weak ergodicity breaking occurs for finite semi-Markov networks with heavy-tailed sojourn times also in the case of generally heterogeneous memory kernels if $g$ is not constant on the support of $\pi$. First, we need to formally introduce the concept of weak ergodicity breaking:

\begin{definition}
    Let $A(t)$ be a time-additive state observable of a semi-Markov network on $\graph=(V(\graph), E(\graph))$ with unique limiting state distribution $\pi$. We say that $A(t)$ exhibits \textit{weak ergodicity breaking} if:
    \begin{align}
        \lim_{t\to\infty}\frac{\mathbb{E}[A(t)]}{t}=\sum_{i\in V(\graph)}g_i \pi_i\label{eq:WEB_mean_condition}
    \end{align}
    but:
    \begin{align}
        \frac{A(t)}{t}~\overset{t\to \infty}{\centernot{\longrightarrow}}~\sum_{i\in V(\graph)}g_i \pi_i~~\text{a.s.}\label{eq:WEB_as_condition}
    \end{align}
\end{definition}

In other words, $A(t)$ exhibits weak ergodicity breaking if $\frac{A(t)}{t}$ remains non-deterministic in the $t\to \infty$-limit.

Eq.~(\ref{eq:WEB_mean_condition}) is always true for a time-additive observable if the limiting state distribution is unique as we shall see next:
\begin{proposition}\label{prop:time_average_mean_converges}
     Let $A(t)$ be a time-additive state observable of a semi-Markov network on $\graph=(V(\graph), E(\graph))$ with unique limiting state distribution $\pi$. Then
    \begin{align}
        \lim_{t\to\infty}\frac{\mathbb{E}[A(t)]}{t}=\sum_{i\in V(\graph)}g_i \pi_i.\label{eq:convergence_mean}
    \end{align}
\end{proposition}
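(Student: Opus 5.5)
The plan is to work directly in the time domain, where the claim reduces to a Ces\`aro-type averaging statement. For a time-additive state observable we have $h=0$. Fubini's theorem applies because $g$ is bounded on the finite state space, so
\begin{align}
    \mathbb{E}[A(t)]=\integ{0}{t}{\mathbb{E}[g(X_{t'})]}{t'}=\integ{0}{t}{\sum_{i\in V(\graph)}g_ip_i(t')}{t'}.
\end{align}
The same expression follows from Proposition \ref{Prop:time_additive_observable_mean} with $h=0$, where $\widetilde{\mathbb{E}[A]}(s)=\frac{1}{s}\sum_j g_j\tilde{p}_j(s)$. Hence
\begin{align}
    \frac{\mathbb{E}[A(t)]}{t}=\sum_{i\in V(\graph)}g_i\,\frac{1}{t}\integ{0}{t}{p_i(t')}{t'},
\end{align}
and since the sum is finite, it suffices to show $\frac{1}{t}\int_0^t p_i(t')\,\text{d}t'\to\pi_i$ for every $i\in V(\graph)$.

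I interpret the hypothesis ``unique limiting state distribution $\pi$'' as in Theorem \ref{theorem:SMCTT}, i.e. $\pi=\lim_{t\to\infty}p(t)$ exists. The key step is then the elementary Ces\`aro argument. Given $\varepsilon>0$, choose $T_\varepsilon$ such that $|p_i(t')-\pi_i|<\varepsilon$ for $t'>T_\varepsilon$. Then
\begin{align}
    \Big|\frac{1}{t}\integ{0}{t}{p_i(t')}{t'}-\pi_i\Big|\leq \frac{1}{t}\integ{0}{T_\varepsilon}{|p_i(t')-\pi_i|}{t'}+\varepsilon\leq\frac{T_\varepsilon}{t}+\varepsilon,
\end{align}
using $0\leq p_i,\pi_i\leq 1$. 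Taking $t\to\infty$ and then $\varepsilon\to0$ gives the claim, and summing over $i$ with weights $g_i$ yields Eq.~(\ref{eq:convergence_mean}).

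The only real obstacle is interpretive. If the intended hypothesis were merely that the Laplace-space limit $\lim_{s\to0}s\tilde{p}_i(s)=\pi_i$ exists, as computed in Theorem \ref{theorem:SMCTT}, without pointwise convergence of $p(t)$, then I would switch to a Tauberian argument instead. The function $P_i(t)=\int_0^t p_i$ is nondecreasing with Laplace--Stieltjes transform $\tilde{p}_i(s)\sim\pi_i s^{-1}$ as $s\to 0$. Karamata's Tauberian theorem (as already used in Theorem \ref{Theorem:FPT_asymptotics}) then gives $P_i(t)\sim \pi_i t$ when $\pi_i>0$, and $P_i(t)=o(t)$ when $\pi_i=0$, which is again exactly the needed Ces\`aro limit. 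This variant has the advantage of also covering lattice cases such as the Example, where $p(t)$ oscillates but its time average still converges. I would mention it as a remark.
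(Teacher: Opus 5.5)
Your proposal is correct and follows essentially the same route as the paper. Both reduce $\mathbb{E}[A(t)]/t$ to $\sum_i g_i\frac{1}{t}\int_0^t p_i(t')\,\text{d}t'$ (the paper via Proposition~\ref{Prop:time_additive_observable_mean}, you more directly via Fubini) and then use the same $\varepsilon$--$T_\varepsilon$ Ces\`aro estimate. Your Tauberian remark is also a valid extension to the case where only $\lim_{s\to0}s\tilde p_i(s)$ exists, but for $\pi_i=0$ you should not cite Karamata, whose statement requires $c>0$; use instead the elementary bound $P_i(t)\le e\,\tilde p_i(1/t)$, which gives $P_i(t)=o(t)$ directly.
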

\begin{proof}
    By Proposition \ref{Prop:time_additive_observable_mean}, the mean of $A(t)$ is given by:
    \begin{align}
        \mathbb{E}[A(t)]=\sum_{i\in V(\graph)}g_i\integ{0}{t}{p_i(t')}{t'},
    \end{align}
    so it suffices to show that $\frac{1}{t}\integ{0}{t}{p_i(t')}{t'}\overset{t\to\infty}{\longrightarrow}\pi_i$. Since $\lim_{t\to \infty}p_i(t)=\pi_i$, we have:
    \begin{align}
        \forall \epsilon >0:\exists t_\epsilon\geq 0:\forall t\geq t_\epsilon:|\pi_i-p_i(t)|<\epsilon.
    \end{align}
    So choose an arbitrary $\epsilon>0$ and the corresponding $t_\epsilon$. Then:
    \begin{align}
        \Big|\frac{1}{t}\integ{0}{t}{p_i(t')}{t'}-\pi_i\Big|&=\Big|\frac{1}{t}\integ{0}{t_\epsilon}{p_i(t')}{t'}+\frac{1}{t}\integ{t_\epsilon}{t}{p_i(t')}{t'}-\pi_i\Big|\\&\leq\Big|\frac{1}{t}\integ{0}{t_\epsilon}{p_i(t')}{t'}\Big|+\Big|\frac{1}{t}\integ{t_\epsilon}{t}{p_i(t')}{t'}-\pi_i\Big|\\&\leq\frac{t_\epsilon}{t}+\Big|\frac{1}{t}\integ{t_\epsilon}{t}{p_i(t')}{t'}-\pi_i\Big|\\&\leq\frac{t_\epsilon}{t}+\Big|\frac{t-t_\epsilon}{t}\epsilon+\frac{t_\epsilon}{t}\pi_i\Big|\\ &\leq \big(1-\frac{t_\epsilon}{t}\big)\epsilon+2\frac{t_\epsilon}{t},
    \end{align}
    using that $p_i(t)\in (\pi_i-\epsilon,\pi_i+\epsilon)$ for $t\geq t_\epsilon$ to bound the integral. Hence, we see that:
    \begin{align}
        \limsup_{t\to\infty}\Big|\frac{1}{t}\integ{0}{t}{p_i(t')}{t'}-\pi_i\Big|\leq\epsilon,
    \end{align}
    and since $\epsilon>0$ was arbitrary, we conclude that:
    \begin{align}
        \lim_{t\to\infty}\frac{1}{t}\integ{0}{t}{p_i(t')}{t'}=\pi_i.
    \end{align}
\end{proof}

So, if the limiting state distribution is unique, ergodicity can only be weakly broken depending on whether the second condition (Eq.~(\ref{eq:WEB_as_condition})) holds or not. 

Before we come to the final theorem which characterizes weak ergodicity breaking in semi-Markov networks, we show the following lemma:

\begin{lemma}\label{lemma:memory_kernels_small_s}
    Let $\big(\tilde{K}_{ij}(s)\big)_{(i,j)\in E(\graph)}$ be the memory kernels of a semi-Markov network on $\graph =(V(\graph), E(\graph))$. Then
    \begin{align}
        \lim_{s\to 0}\big(\frac{\text{d}}{\text{d}\ln s}\ln\tilde{K}_{ij}(s)-\frac{\text{d}}{\text{d}\ln s}\ln\tilde{K}_{il}(s)\big)=0,
    \end{align}
    for all $i\in V(\graph)$ and $(i,j),(i,l)\in E(\graph)$. If the limit $\lim_{s\to 0}\frac{\text{d}}{\text{d}\ln s}\ln\tilde{K}_{ij}(s)$ exists, it is independent of $j$ and given by:
    \begin{align}
        \lim_{s\to 0}\frac{\text{d}}{\text{d}\ln s}\ln\tilde{K}_{ij}(s)=1+\lim_{s\to 0}\frac{\frac{\text{d}}{\text{d}s}\Tilde{Q}_i(s)}{\tilde{\phi}_i(s)}=:1-\alpha_i\in [0,1].
    \end{align}
    Moreover, if the limit exists for all $i\in V(\graph)$ and the network has a unique limiting state distribution $\pi$, then the following holds for all $n,m\in V(\graph)$:
    \begin{align}
        \pi_n,\pi_m>0\Rightarrow \alpha_n=\alpha_m.
    \end{align}
\end{lemma}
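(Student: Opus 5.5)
The plan is to work directly with the representation $\tilde{K}_{ij}(s)=\tilde{Q}_{ij}(s)/\tilde{\phi}_i(s)$. Taking logarithmic derivatives gives
\begin{align}
    \frac{\text{d}}{\text{d}\ln s}\ln\tilde{K}_{ij}(s)=s\frac{\frac{\text{d}}{\text{d}s}\tilde{Q}_{ij}(s)}{\tilde{Q}_{ij}(s)}-s\frac{\frac{\text{d}}{\text{d}s}\tilde{\phi}_i(s)}{\tilde{\phi}_i(s)}.
\end{align}
The second term does not depend on $j$. For the difference of the two kernels it therefore suffices to show that $s\tilde{Q}_{ij}'(s)/\tilde{Q}_{ij}(s)\to 0$ as $s\to 0$ for every edge. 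Since $\tilde{Q}_{ij}(s)\to p_{ij}>0$, this reduces to showing $s\tilde{Q}'_{ij}(s)=-\int t s e^{-st}Q_{ij}(dt)\to 0$. The integrand $st e^{-st}$ is bounded by $e^{-1}$ and tends to $0$ pointwise, so dominated convergence applies with respect to the finite measure $Q_{ij}$. This proves the first claim without any moment assumption. For absorbing $i$ there are no edges, so the claim is vacuous.

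For the second claim, assume the limit exists. By the first step it equals $\lim_{s\to0}\big(-s\tilde{\phi}_i'(s)/\tilde{\phi}_i(s)\big)$. Using $\tilde{\phi}_i=(1-\tilde{Q}_i)/s$, we get $-s\tilde{\phi}_i'=\tilde{\phi}_i+\tilde{Q}_i'$, which is the identity already used in Lemma~\ref{lemma:finite_K}. This gives $1+\tilde{Q}_i'/\tilde{\phi}_i$, as claimed. To see that the limit lies in $[0,1]$, note that $\tilde{Q}_i'\le 0$, so the value is at most $1$. In the other direction, $-\tilde{Q}_i'(s)=\int te^{-st}Q_i(dt)\le\int t\,\mathbb{1}_{\{t>\cdot\}}\cdots$; more cleanly,
\begin{align}
    \tilde{\phi}_i(s)=\int\frac{1-e^{-st}}{s}Q_i(dt)\ge\int te^{-st}Q_i(dt),
\end{align}
because $1-e^{-x}\ge xe^{-x}$. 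Hence $\tilde{Q}_i'/\tilde{\phi}_i\ge -1$, and the limit is at least $0$. For an absorbing state, $\tilde{\phi}_i=1/s$ and $\tilde{Q}_i=0$, which gives $\alpha_i=0$, consistent with Assumption~\ref{assumption:small_s_sojourn_times}.

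For the third claim, the plan is to reduce to the structure used in Theorem~\ref{theorem:SMCTT}. Existence of $\lim_{s\to 0}\frac{\text{d}}{\text{d}\ln s}\ln\tilde{K}_{ij}$ means that $\tilde{K}_{ij}$ is regularly varying at $0$ with index $1-\alpha_i$, i.e.\ $\tilde{K}_{ij}(s)=s^{1-\alpha_i}\ell_{ij}(s)$ with $\ell_{ij}$ slowly varying. Integrating the log-derivative gives this. Every forest weight $w(\mathcal{F})$ in the augmented graph is then regularly varying with index $|E(\mathcal{F})|-\sum_k\alpha_k+\sum_{\text{roots}}\alpha_{i_k}$, as in the proof of Theorem~\ref{theorem:SMCTT}. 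Next I would use Theorem~\ref{theorem:transient_semi_MCTT} together with uniqueness of $\pi$, which gives $\pi_n=\lim s\tilde{p}_n(s)$ by the final value theorem, since the time limit exists. The numerator for $n$ has index $N-\sum_k\alpha_k+\alpha_n$, after absorbing the extra factor $s$. The denominator, a sum over trees rooted at $\ast$, has leading index $N-\sum_k\alpha_k+\alpha^\ast$, where $\alpha^\ast$ is the minimal $\alpha$ among possible roots. Hence $s\tilde{p}_n(s)$ is regularly varying with index $\alpha_n-\alpha^\ast\ge 0$. If $\pi_n>0$, this index must be $0$, so $\alpha_n=\alpha^\ast$. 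The same holds for $m$, and therefore $\alpha_n=\alpha_m$.

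The main obstacle is this last step. Sums of regularly varying functions with equal index but different slowly varying factors could in principle cancel; they cannot here because all weights are positive. Still, a ratio of slowly varying functions need not converge. So I would carefully argue only the weaker statement that a function regularly varying with positive index tends to $0$, which is all that is needed. I must also check that $\pi_n>0$ forces $n$ to be the root of some spanning tree; otherwise the numerator vanishes identically and $\pi_n=0$ trivially.
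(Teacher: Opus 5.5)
\textbf{First two claims.} Your treatment matches the paper. Your dominated-convergence step is the paper's splitting argument with $xe^{-x}\le e^{-1}$. The bound $1-e^{-x}\ge xe^{-x}$ is exactly the paper's.

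\textbf{Third claim, states rooting a spanning tree.} Here you take a different route. The paper compares two trees pairwise. It integrates the log-derivative to get $\tilde K_{mi}/\tilde K_{nj}\to0$, then uses the exact limits $\tilde K_{ab}/\tilde K_{ac}=\tilde Q_{ab}/\tilde Q_{ac}\to p_{ab}/p_{ac}$. You instead count regular-variation indices against the global $\alpha^\ast$, reusing the bookkeeping of Theorem~\ref{theorem:SMCTT}. For a state $n$ that roots a spanning tree of $\graph$ this is sound. Positivity makes each forest sum regularly varying with the minimal index, and a ratio with positive index tends to $0$. This is the same mechanism as the paper's, and it additionally identifies the common exponent as $\alpha^\ast$.

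\textbf{The gap: states that root no spanning tree of $\graph$.} You claim the numerator then ``vanishes identically''. This is false.
\begin{itemize}
\item $\sum_{\mathcal F^{s,j\to n}_{[n,\ast]}}w(\mathcal F)$ runs over forests of the augmented graph, including those in which other vertices point to $\ast$. Only its $s^0$-coefficient, the trees of $\graph$ rooted at $n$, vanishes.
\item Example: take edges $(1,2),(2,3),(3,2)$ and $j=n=1$. The numerator is $s\big(s+\tilde K_{23}(s)+\tilde K_{32}(s)\big)$, and $s\tilde p_1(s)=s\tilde\phi_1(s)=1-\tilde Q_1(s)$.
\end{itemize}

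Index counting cannot repair this in general. It only gives index at least $\alpha_n$, and the lemma allows $\alpha_n=0$ for non-absorbing $n$.
\begin{itemize}
\item In the example, take $\phi_1(t)=1/\ln(e+t)$. Then $\alpha_1=0$ and $1-\tilde Q_1(s)\sim1/\ln(1/s)$, which is slowly varying.
\item With exponential sojourn times at $2,3$, all hypotheses hold and $\alpha_1=0\neq1=\alpha_2$.
\item So the lemma needs $\pi_1=0$, and index information alone does not deliver it.
\end{itemize}

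\textbf{The fix.} What is missing is the exact, not merely index-level, limit the paper invokes: $s/\tilde K_{nb}(s)=(1-\tilde Q_n(s))/\tilde Q_{nb}(s)\to0$ for non-absorbing $n$, together with the ratio limits above. The argument then runs as follows.
\begin{itemize}
\item Uniqueness of $\pi$ forces a unique closed class $C$, and $n\notin C$.
\item Each numerator forest restricts to a $\graph$-forest with roots $\{n\}\cup A$. The set $A$ must meet $C$, since paths from $C$ stay in $C$.
\item Since every vertex reaches $C$, a spanning forest of $\graph$ rooted at $A$ exists. It gives a denominator tree.
\item Including the extra factor $s$, the weight ratio is $\frac{s}{\tilde K_{n\sigma'(n)}}\prod_{a\notin A\cup\{n\}}\frac{\tilde K_{a\sigma(a)}}{\tilde K_{a\sigma'(a)}}\to0$.
\item There are finitely many such terms, so $\pi_n=0$.
\end{itemize}
A probabilistic argument also works: a transient state is visited finitely often, with a.s.\ finite sojourns.
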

\begin{proof}
    Observe that:
    \begin{align}
        \frac{\text{d}\ln{\tilde{K}_{ij}(s)}}{\text{d}\ln{s}}&=\frac{s}{\tilde{K}_{ij}(s)}\frac{\text{d}}{\text{d}s}\tilde{K}_{ij}(s)\\&=\frac{s}{\tilde{K}_{ij}(s)}\Big(\frac{\tilde{K}_{ij}(s)}{s}+\frac{s\frac{\text{d}}{\text{d}s}\Tilde{Q}_{ij}(s)}{1-\Tilde{Q}_i(s)}+\frac{\frac{\text{d}}{\text{d}s}\Tilde{Q}_{i}(s)}{1-\Tilde{Q}_i(s)}\tilde{K}_{ij}(s)\Big)\\ &=1+\frac{s\frac{\text{d}}{\text{d}s}\Tilde{Q}_{ij}(s)}{\Tilde{Q}_{ij}(s)}+\frac{\frac{\text{d}}{\text{d}s}\Tilde{Q}_{i}(s)}{\tilde{\phi}_i(s)}.
    \end{align}
    Consider the second term in the limit $s\to 0$. The denominator satisfies $\tilde{Q}_{ij}(0)=p_{ij}>0$ and for the numerator, we can use that $xe^{-x}\leq e^{-1}$ for all $x\geq 0$ to see that for all $y\geq 0$, we have
    \begin{align}
        s\frac{\text{d}}{\text{d}s}\Tilde{Q}_{ij}(s)&=-s\integ{0}{\infty}{te^{-st}Q_{ij}(}{t)}\\ &\geq-s\integ{0}{y}{te^{-st}Q_{ij}(}{t)}-e^{-1}\integ{y}{\infty}{Q_{ij}(}{t)},
    \end{align}
    and also $s\frac{\text{d}}{\text{d}s}\Tilde{Q}_{ij}(s)\leq 0$.
    
    Since $\integ{0}{\infty}{Q_{ij}(}{t)}=p_{ij}<\infty$, the second integral can be made arbitrarily small by choosing $y$ accordingly and the first term vanishes as $s\to 0$. Hence, we see that
    \begin{align}
        \lim_{s\to 0}s\frac{\text{d}}{\text{d}s}\Tilde{Q}_{ij}(s)&=0 
        \Rightarrow \lim_{s\to 0}\frac{s\frac{\text{d}}{\text{d}s}\Tilde{Q}_{ij}(s)}{\Tilde{Q}_{ij}(s)}=0,
    \end{align}
    which implies that:
    \begin{align}
        \lim_{s\to 0}\big(\frac{\text{d}}{\text{d}\ln s}\ln\tilde{K}_{ij}(s)-\frac{\text{d}}{\text{d}\ln s}\ln\tilde{K}_{il}(s)\big)=0,
    \end{align}
    for any $i\in V(\graph)$ and $(i,j),(i,l)\in E(\graph)$. If the limit $\lim_{s\to 0}\frac{\text{d}}{\text{d}\ln s}\ln\tilde{K}_{ij}(s)$ exists, it is therefore independent of $j$ and given by:
    \begin{align}
        \lim_{s\to 0}\frac{\text{d}}{\text{d}\ln s}\ln\tilde{K}_{ij}(s)=1+\lim_{s\to 0}\frac{\frac{\text{d}}{\text{d}s}\Tilde{Q}_i(s)}{\tilde{\phi}_i(s)},
    \end{align}
    so we can set $\alpha_i:=-\lim_{s\to 0}\frac{\frac{\text{d}}{\text{d}s}\Tilde{Q}_i(s)}{\tilde{\phi}_i(s)}$. To see that $\alpha_i\in [0,1]$, note that the following holds for $s>0$:
    \begin{align}
        \frac{\text{d}}{\text{d}s}\Tilde{Q}_j(s)&=-\integ{0}{\infty}{te^{-st}Q_{j}(}{t)}<0\\ \tilde{\phi}_j(s)&=\frac{1-\Tilde{Q}_j(s)}{s}>0,
    \end{align}
    so $\frac{\frac{\text{d}}{\text{d}s}\Tilde{Q}_j(s)}{\tilde{\phi}_j(s)}<0$. Moreover,
    \begin{align}
        \frac{\frac{\text{d}}{\text{d}s}\Tilde{Q}_j(s)}{\tilde{\phi}_j(s)}&=-\frac{s\integ{0}{\infty}{te^{-st}Q_{j}(}{t)}}{1-\Tilde{Q}_j(s)}\\ &=-\frac{\integ{0}{\infty}{ste^{-st}Q_{j}(}{t)}}{\integ{0}{\infty}{(1-e^{-st})Q_{j}(}{t)}}\geq -1.
    \end{align}
    using the standard bound $ste^{-st}\leq 1-e^{-st}$ for $st\geq 0$.

    Finally, we show that if the limit exists for all $i\in V(\graph)$ and the network has a unique limiting state distribution $\pi$, then:
    \begin{align}
        \alpha_n>\alpha_m~\land~\pi_m>0 \Rightarrow \pi_n=0.
    \end{align}
    To that end, consider $\alpha_m=1-\lim_{s\to 0}\frac{\text{d}}{\text{d}\ln{s}}\ln\tilde{K}_{mi}(s)$ and $\alpha_n=1-\lim_{s\to 0}\frac{\text{d}}{\text{d}\ln s}\ln\tilde{K}_{nj}(s)$ and assume that $\alpha_n>\alpha_m$. Then:
    \begin{align}
        \lim_{s\to 0}\frac{\text{d}}{\text{d}\ln s}\ln\frac{\tilde{K}_{mi}(s)}{\tilde{K}_{nj}(s)}=\lim_{s\to 0}\frac{\text{d}}{\text{d}\ln s}\ln\tilde{K}_{mi}(s)-\lim_{s\to 0}\frac{\text{d}}{\text{d}\ln s}\ln\tilde{K}_{nj}(s)=\alpha_n-\alpha_m>0.
    \end{align}
    Hence, there is a $\delta >0$ such that $\forall s\leq\delta$:
    \begin{align}
        \frac{\alpha_n-\alpha_m}{2}&\leq\frac{\text{d}}{\text{d}\ln s}\ln\frac{\tilde{K}_{mi}(s)}{\tilde{K}_{nj}(s)}\leq 3\frac{\alpha_n-\alpha_m}{2} \\ \Rightarrow\frac{\alpha_n-\alpha_m}{2}\ln{\frac{\delta}{s}}&\leq\ln\frac{\tilde{K}_{mi}(\delta)}{\tilde{K}_{nj}(\delta)}+\ln\frac{\tilde{K}_{nj}(s)}{\tilde{K}_{mi}(s)}\leq 3\frac{\alpha_n-\alpha_m}{2}\ln{\frac{\delta}{s}}, 
    \end{align}
    by integration from $s$ to $\delta$. Letting $s\to 0$, it follows that:
    \begin{align}
        \lim_{s\to 0}\ln\frac{\tilde{K}_{nj}(s)}{\tilde{K}_{mi}(s)}=\infty\Leftrightarrow \lim_{s\to 0}\frac{\tilde{K}_{mi}(s)}{\tilde{K}_{nj}(s)}=0.
    \end{align}
    Also note that $\lim_{s\to0}\frac{\tilde{K}_{ij}(s)}{\tilde{K}_{il}(s)}=\frac{p_{ij}}{p_{il}}\in (0,\infty)$.
    Thus, for the ratio of the weights of two spanning trees $\mathcal{T}_m$ and $\mathcal{T}_n$ rooted at $m$ and $n$, respectively, we have:
    \begin{align}
        \lim_{s\to 0}\frac{w(\mathcal{T}_n)}{w(\mathcal{T}_m)}&=\lim_{s\to 0}\frac{\prod_{(a,b)\in E(\mathcal{T}_n)}\tilde{K}_{ab}(s)}{\prod_{(c,d)\in E(\mathcal{T}_m)}\tilde{K}_{cd}(s)}\\&=\Big(\lim_{s\to 0}\frac{\tilde{K}_{mi}(s)}{\tilde{K}_{nj}(s)}\Big)\frac{p_{nj}}{p_{mi}}\frac{\prod_{(a,b)\in E(\mathcal{T}_n)}p_{ab}}{\prod_{(c,d)\in E(\mathcal{T}_m)}p_{cd}}\\&=0,
    \end{align}
    where $(m,i)\in E(\mathcal{T}_n)$ and $(n,j)\in E(\mathcal{T}_m)$. Note that this identity follows from the fact that for $\tilde K_{ij}(s)$, the leading-order scaling for $s\to0$ depends only on the vertex $i$, and a tree  contains one outgoing edge from every vertex except its root. Using Eq.~(\ref{eq:transient_semi_MCTT}), this implies that:
    \begin{align}
        \frac{\pi_n}{\pi_m}=\lim_{s\to 0}\frac{\sum_{\mathcal{T}_{[n]}}w(\mathcal{T})}{\sum_{\mathcal{T}_{[m]}}w(\mathcal{T})}=0,
    \end{align}
    since $\lim_{s\to 0}\frac{s}{\tilde{K}_{ij}(s)}=\lim_{s\to 0}\frac{1-\tilde{Q}_i(s)}{\tilde{Q}_{ij}(s)}=0$ and because the number of spanning trees is finite in a finite graph.
    Since $m,n\in V(\graph)$ were arbitrary, $\pi_n,\pi_m>0$ implies $\alpha_n=\alpha_m$.
\end{proof}

In the special case that Assumption \ref{assumption:small_s_sojourn_times} holds, $\alpha_i$ coincides with the previously defined exponent in the small-$s$ expansion of $\Tilde{Q}_i(s)$, but the requirement for Lemma \ref{lemma:memory_kernels_small_s} is less restrictive. 

We now come to the main theorem regarding weak ergodicity breaking:

\begin{theorem}\label{theorem:weak_ergodicity_breaking}
    Let $A(t)$ be a time-additive state observable on a semi-Markov network on $\graph=(V(\graph),E(\graph))$ with a unique limiting state distribution $\pi$ for all initial conditions. Let the limit $\lim_{s\to 0}\frac{\text{d}}{\text{d}\ln s}\ln\tilde{K}_{ij}(s)=1-\alpha_i$ exist for all $(i,j)\in E(\graph)$. Then:
    \begin{align}
        \lim_{t\to \infty}\text{Var}\big(\frac{A(t)}{t}\big)&=\big(1-\alpha\big)\Big(\sum_{j\in V(\graph)}\pi_jg_j^2-\big(\sum_{j\in V(\graph)}\pi_jg_j\big)^2\Big),\label{eq:convergence_variance}
    \end{align}
    where $\alpha =\alpha_j$ for all $j\in V(\graph)$ such that $\pi_j>0$.
    Moreover, $A(t)$ exhibits weak ergodicity breaking if there is at least one state $n\in V(\graph)$ with $\pi_n>0$ and $\alpha_n<1$ and $ m\in V(\graph)$ with $\pi_n,\pi_m>0$ such that $g_n\neq g_m$.
\end{theorem}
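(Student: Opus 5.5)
The strategy is to compute the second moment of $A(t)$ in Laplace space from the generalized Feynman--Kac identity, Eq.~(\ref{eq:generalized_Feynmann_Kac}), extract its small-$s$ behavior, and convert it into long-time behavior of $\mathbb{E}[A(t)^2]/t^2$ by a Tauberian argument. Since $\mathbb{E}[A(t)]/t\to\sum_j g_j\pi_j$ by Proposition~\ref{prop:time_average_mean_converges}, the variance limit follows once we show
\begin{align}
\lim_{t\to\infty}\frac{\mathbb{E}[A(t)^2]}{t^2}=(1-\alpha)\sum_j\pi_jg_j^2+\alpha\Big(\sum_j\pi_jg_j\Big)^2 .
\end{align}
We take $h=0$ throughout.

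The first step is to differentiate Eq.~(\ref{eq:generalized_Feynmann_Kac}) twice in $\lambda$ at $\lambda=0$, exactly as in the proof of Proposition~\ref{Prop:time_additive_observable_mean}. This gives $(\mathbb{1}-\tilde Q(s))\partial_\lambda^2\tilde G=\partial_\lambda^2\tilde\phi^{(\lambda)}+\partial_\lambda^2\tilde Q^{(\lambda)}\tilde G+2\,\partial_\lambda\tilde Q^{(\lambda)}\partial_\lambda\tilde G$. We use $\partial_\lambda^k\tilde Q^{(\lambda)}_{ij}=(-g_i)^k\tilde Q_{ij}^{(k)}(s)$ and $\partial_\lambda^k\tilde\phi_i^{(\lambda)}=(-g_i)^k\tilde\phi_i^{(k)}(s)$, and invert with Lemma~\ref{lemma:Q_W_identity} and Proposition~\ref{Prop:inverse_of_s-W}. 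The result is $\widetilde{\mathbb{E}[A^2]}(s)$ written as $\sum_j\tilde p_j(s)$-weighted expressions, together with a "two-visit" term in which $\partial_\lambda\tilde G$ appears. By the first-moment computation, $\partial_\lambda\tilde G_i$ is itself a resolvent sum.

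The second step is the small-$s$ asymptotics. We know $s\tilde p_j(s)\to\pi_j$. By Lemma~\ref{lemma:memory_kernels_small_s}, $s\tilde Q_i'(s)/\tilde\phi_i(s)\to-\alpha_i$, and on the support of $\pi$ all $\alpha_i$ equal the common value $\alpha$. The same lemma shows that states with $\pi_j=0$ give subleading contributions to the tree sums. The remaining quantity to control is the second derivative $s^2\tilde\phi_i''(s)/\tilde\phi_i(s)$. Here I would use regular variation: the existence of $\lim d\ln\tilde K/d\ln s$ makes $\tilde\phi_i$ regularly varying with index $\alpha_i-1$. Differentiating under a monotone-density argument, via the Tauberian theorem applied to $\int t\phi_i e^{-st}dt$, then gives $s^2\tilde\phi_i''/\tilde\phi_i\to(1-\alpha_i)(2-\alpha_i)$.

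Collecting these pieces should give $\widetilde{\mathbb{E}[A^2]}(s)\sim 2C/s^3$ with $C$ equal to the displayed limit. The diagonal terms ($g_j^2$) carry the factor $(1-\alpha)$, coming from the self-correlation within one long sojourn. The cross terms ($g_jg_l$) carry $\alpha$, coming from the resolvent factor $\partial_\lambda\tilde G$ evaluated at leading order, where $\pi$ appears twice. As a check, the case $\alpha=1$ must give the square of the mean (ergodic), and the case $\alpha\to0$ must give $\sum_j\pi_jg_j^2$ (the process is frozen in one state).

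The third step is the Tauberian passage. $\mathbb{E}[A(t)^2]$ is not monotone, but $\mathbb{E}[A(t)^2]+c\,t^2$ with $c=\max_j g_j^2$ can be handled through the bounded, nondecreasing-in-$t$ structure. I would use instead $\mathbb{E}[(A(t)-\min_j g_j\,t)^2]$, which is nondecreasing since $A(t)-\min_j g_j\,t$ is nondecreasing and nonnegative. Karamata's theorem then gives $\mathbb{E}[A^2]/t^2\to C$, and subtracting the squared mean yields Eq.~(\ref{eq:convergence_variance}).

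For weak ergodicity breaking, the mean condition holds by Proposition~\ref{prop:time_average_mean_converges}. If $A(t)/t$ converged a.s. to the constant $\sum_j g_j\pi_j$, bounded convergence (using $|A(t)/t|\le\max_j|g_j|$) would force the variance to $0$. Under the stated hypotheses, $1-\alpha>0$ and the variance of $g$ under $\pi$ is positive because $g_n\neq g_m$ with $\pi_n,\pi_m>0$. This is a contradiction.

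I expect the main obstacle to be the second step: rigorously extracting the leading-order $s^{-3}$ coefficient. This requires showing that the regular-variation hypothesis alone suffices to control $\tilde\phi''$ and the off-support states, and handling the cross term so that it yields exactly $\alpha(\sum\pi_jg_j)^2$.
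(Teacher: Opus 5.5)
Your proposal follows the paper's proof in every main step: the twice-differentiated Eq.~(\ref{eq:generalized_Feynmann_Kac}) at $\lambda=0$, inversion via Proposition~\ref{Prop:inverse_of_s-W}, and the resolvent asymptotics $\pi_l/s+o(s^{-1})$ that follow from uniqueness of $\pi$. The remaining steps also coincide: extraction of the $s^{-3}$ coefficient, Karamata applied to $\mathbb{E}[B(t)^2]$ with the nondecreasing $B(t)=A(t)-t\min_j g_j$, and bounded convergence for the weak-ergodicity-breaking conclusion.

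The only real difference is the step you single out as the main obstacle, and there the paper needs no regular-variation argument. Because $\tilde G(0,s)=\mathbf{1}/s$ and $\tilde Q_i=1-s\tilde\phi_i$, the two second-order terms combine exactly:
\begin{align*}
g_i^2\tilde\phi_i''(s)+\frac{g_i^2}{s}\tilde Q_i''(s)=-\frac{2g_i^2}{s}\tilde\phi_i'(s)=\frac{2g_i^2}{s^2}\big(\tilde\phi_i(s)+\tilde Q_i'(s)\big).
\end{align*}
So $\tilde\phi_i''$ and $\tilde Q_i''$ never enter. The first-derivative limit from Lemma~\ref{lemma:memory_kernels_small_s} then immediately gives the diagonal contribution $2g_i^2(1-\alpha_i)\tilde\phi_i/s^2$. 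Your route would reproduce this, since $s^2\tilde\phi_i''/\tilde\phi_i\to(1-\alpha_i)(2-\alpha_i)$ forces $\tilde Q_i''/s\sim\alpha_i(1-\alpha_i)\tilde\phi_i/s^2$, and $(1-\alpha)(2-\alpha)+\alpha(1-\alpha)=2(1-\alpha)$. But it buys nothing and costs a Tauberian/monotone-density argument, with extra care when $\tilde\phi_i$ is slowly varying ($\alpha_i=1$ with infinite mean).

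The cross term and the off-support states are equally routine. The leading resolvent term $\pi_l/s$ does not depend on the starting state $m$, so $\sum_m\tilde Q_{jm}'(s)$ multiplied by it collapses to $\tilde Q_j'(s)\pi_l/s$. All error terms, including those from states with $\pi_j=0$, are controlled by $|\tilde Q_{jm}'|/\tilde\phi_j\le|\tilde Q_j'|/\tilde\phi_j\le 1$, so no separate tree-weight comparison is needed.

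Finally, there is a slip in your second step. The lemma gives $\tilde Q_i'(s)/\tilde\phi_i(s)\to-\alpha_i$ (equivalently $s\tilde Q_i'(s)/(1-\tilde Q_i(s))\to-\alpha_i$), not $s\tilde Q_i'/\tilde\phi_i\to-\alpha_i$. With the extra factor $s$ the $s^{-3}$ bookkeeping would not close.
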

\begin{proof}
    If the limiting distribution is concentrated on an absorbing state, the statement is trivial. In the following, we therefore assume that the states are not absorbing. 

   Consider the long-time limit of the variance of the time average of $A(t)$, given by:
    \begin{align}
        \text{Var}\big(\frac{A(t)}{t}\big)=&\mathbb{E}\Big[\Big|\frac{A(t)}{t}-\mathbb{E}\Big[\frac{A(t)}{t}\Big]\Big|^2\Big]\\=&\frac{1}{t^2}\big(\mathbb{E}[A(t)^2]-\mathbb{E}[A(t)]^2\big),
    \end{align}
    so $\text{Var}\big(\frac{A(t)}{t}\big)\overset{t\to\infty}{\longrightarrow}0$ means that $\frac{A(t)}{t}$ converges to its mean in $L^2$. Since $\big|\frac{A(t)}{t}\big|\leq \max_{i\in V(\graph)}\{|g_i|\}$, i.e., $\frac{A(t)}{t}$ is uniformly bounded, almost-sure convergence to a constant would imply $L^2$-convergence by dominated convergence. Hence a strictly positive limiting variance rules out almost-sure, and indeed, convergence in probability to a constant, so we can prove weak ergodicity breaking from showing that the limit is different from zero.

    We calculate the second moment from the second $\lambda$-derivative of $\tilde{G}(\lambda,s)$ at $\lambda=0$. 
    Taking the second derivative with respect to $\lambda$ of Eq.~(\ref{eq:generalized_Feynmann_Kac}) yields:
    \begin{align}
        &\big(-\frac{\partial^2}{\partial\lambda^2}\tilde{Q}^{(\lambda)}(s)\big)\tilde{G}(\lambda,s)-2\big(\frac{\partial}{\partial\lambda}\tilde{Q}^{(\lambda)}(s)\big)\big(\frac{\partial}{\partial\lambda}\tilde{G}(\lambda,s)\big)+\big(\mathbb{1}-\tilde{Q}^{(\lambda)}(s)\big)\frac{\partial^2}{\partial\lambda^2}\tilde{G}(\lambda,s)=\frac{\partial^2}{\partial\lambda^2}\tilde{\phi}^{(\lambda)}(s)\\ \Rightarrow&\big(\mathbb{1}-\tilde{Q}^{(\lambda)}(s)\big)\frac{\partial^2}{\partial\lambda^2}\tilde{G}(\lambda,s)=\frac{\partial^2}{\partial\lambda^2}\tilde{\phi}^{(\lambda)}(s)+\big(\frac{\partial^2}{\partial\lambda^2}\tilde{Q}^{(\lambda)}(s)\big)\tilde{G}(\lambda,s)+2\big(\frac{\partial}{\partial\lambda}\tilde{Q}^{(\lambda)}(s)\big)\big(\frac{\partial}{\partial\lambda}\tilde{G}(\lambda,s)\big)    
    \end{align}
    Again, we evaluate this expression at $\lambda = 0$, using that $\tilde{G}_i(s,0)=\frac{1}{s}$ for all $i\in V(\graph)$ and express $(\mathbb{1}-\tilde{Q}^{(\lambda)}(s))\big|_{\lambda=0}=(\mathbb{1}-\tilde{Q}(s))$ in terms of $\tilde{W}(s)$:
    \begin{align}
        \diag{\tilde{\phi}(s)}\big(s\mathbb{1}-\tilde{W}^T(s)\big)\frac{\partial^2}{\partial\lambda^2}\tilde{G}(\lambda,s)\Big|_{\lambda=0}=&\frac{\partial^2}{\partial\lambda^2}\big(\tilde{\phi}^{(\lambda)}(s)+\frac{1}{s}\tilde{Q}^{(\lambda)}(s)\mathbf{1}\big)\Big|_{\lambda=0}\\&+2\big(\frac{\partial}{\partial\lambda}\tilde{Q}^{(\lambda)}(s)\big)\big(\frac{\partial}{\partial\lambda}\tilde{G}(\lambda,s)\big)\Big|_{\lambda=0}.
    \end{align}
    For $s>0$, $\diag{\tilde{\phi}(s)}\big(s\mathbb{1}-\tilde{W}^T(s)\big)$ can be inverted according to Proposition \ref{Prop:inverse_of_s-W}:
     \begin{align}
        \frac{\partial^2}{\partial\lambda^2}\tilde{G}(\lambda,s)\Big|_{\lambda=0}=&\big(s\mathbb{1}-\tilde{W}^T(s)\big)^{-1}\diag{\tilde{\phi}(s)}^{-1}\Big(\frac{\partial^2}{\partial\lambda^2}\big(\tilde{\phi}^{(\lambda)}(s)+\frac{1}{s}\tilde{Q}^{(\lambda)}(s)\mathbf{1}\big)\\&+2\big(\frac{\partial}{\partial\lambda}\tilde{Q}^{(\lambda)}(s)\big)\big(\frac{\partial}{\partial\lambda}\tilde{G}(\lambda,s)\big)\Big)\Big|_{\lambda=0}.
    \end{align}
    We can evaluate the right-hand side using that:
    \begin{align}
        \frac{\partial}{\partial\lambda}\tilde{Q}^{(\lambda)}_{ij}(s)\Big|_{\lambda=0}&=-g_i\frac{\text{d}}{\text{d}s}\Tilde{Q}_{ij}(s),\\\frac{\partial^2}{\partial\lambda^2}\tilde{Q}^{(\lambda)}_{ij}(s)\Big|_{\lambda=0}&=g_i^2\frac{\text{d}^2}{\text{d}s^2}\Tilde{Q}_{ij}(s),\\ \frac{\partial^2}{\partial\lambda^2}\tilde{\phi}_i^{(\lambda)}(s)\Big|_{\lambda=0}&=g_i^2\frac{\text{d}^2}{\text{d}s^2}\tilde{\phi}_i(s)\\ &=g_i^2\frac{\text{d}^2}{\text{d}s^2}\frac{1-\Tilde{Q}_i(s)}{s}\\ &=g_i^2\big(2\frac{\tilde{\phi}_i(s)}{s^2}+\frac{2}{s^2}\frac{\text{d}}{\text{d}s}\Tilde{Q}_i(s)-\frac{1}{s}\frac{\text{d}^2}{\text{d}s^2}\Tilde{Q}_i(s)\big),
    \end{align}
    so:
    \begin{align}
        \frac{\partial^2}{\partial\lambda^2}\big(\tilde{\phi}^{(\lambda)}(s)+\frac{1}{s}\tilde{Q}^{(\lambda)}(s)\mathbf{1}\big)_{i}\Big|_{\lambda=0}&=\frac{2g_i^2}{s^2}\big(\tilde{\phi}_i(s)+\frac{\text{d}}{\text{d}s}\Tilde{Q}_i(s)\big)\\&=-\frac{2g_i^2}{s}\frac{\text{d}}{\text{d}s}\tilde{\phi}_i(s)
    \end{align}
    To evaluate $\partial_\lambda\tilde{G}_i^{(\lambda)}(s)$, we use the expression from Proposition \ref{Prop:time_additive_observable_mean}:
    \begin{align}
        \partial_\lambda\tilde{G}_i^{(\lambda)}(s)\big|_{\lambda=0}&=\frac{1}{s}\sum_{l\in V(\graph)}g_l\frac{\sum_{\mathcal{F}^{s,i\to l}_{[l,\ast]}}w(\mathcal{F})}{\sum_{\mathcal{T}^{s}_{[\ast]}}w(\mathcal{T})},
    \end{align}
    so:
    \begin{align}
        2\big(\partial_\lambda\tilde{Q}^{(\lambda)}(s)\partial_\lambda\tilde{G}^{(\lambda)}(s)\big)_{i}\big|_{\lambda=0}&=-\frac{2g_i}{s}\sum_{j,l\in V(\graph)}g_l\big(\frac{\text{d}}{\text{d}s}\Tilde{Q}_{ij}(s)\big)\frac{\sum_{\mathcal{F}^{s,j\to l}_{[l,\ast]}}w(\mathcal{F})}{\sum_{\mathcal{T}^{s}_{[\ast]}}w(\mathcal{T})}.
    \end{align}
    Combining these terms and expressing the inverse of $s\mathbb{1}-\tilde{W}^T(s)$ as in Proposition \ref{Prop:inverse_of_s-W}, we obtain:
    \begin{align}
        \frac{\partial^2}{\partial\lambda^2}\tilde{G}_i(\lambda,s)\Big|_{\lambda=0}=&\frac{2}{s}\sum_{j\in V(\graph)}g_j\frac{\sum_{\mathcal{F}^{s,i\to j}_{[j,\ast]}}w(\mathcal{F})}{\sum_{\mathcal{T}^{s}_{[\ast]}}w(\mathcal{T})}\Big(\frac{g_j}{s}\big(1+\frac{\frac{\text{d}}{\text{d}s}\Tilde{Q}_j(s)}{\tilde{\phi}_j(s)}\big)\\&-\sum_{m,l\in V(\graph)}g_l\big(\frac{\frac{\text{d}}{\text{d}s}\Tilde{Q}_{jm}(s)}{\tilde{\phi}_j(s)}\big)\frac{\sum_{\mathcal{F}^{s,m\to l}_{[l,\ast]}}w(\mathcal{F})}{\sum_{\mathcal{T}^{s}_{[\ast]}}w(\mathcal{T})}\Big).
    \end{align}
    Eq.~(\ref{eq:transient_semi_MCTT}) implies that the ratios of the sums over the spanning forests and spanning trees can be interpreted as the Laplace transforms of the occupation probabilities under suitable initial conditions. For the $s\to 0$ limit, as the limiting state distribution was unique, this implies that:
    \begin{align}
        \frac{\sum_{\mathcal{F}^{s,i\to j}_{[j,\ast]}}w(\mathcal{F})}{\sum_{\mathcal{T}^{s}_{[\ast]}}w(\mathcal{T})}=\frac{\pi_j}{s}+o(s^{-1}).
    \end{align}

    Therefore, we get the following small-$s$ regime:
    \begin{align}
        \frac{\partial^2}{\partial\lambda^2}\tilde{G}_i(\lambda,s)\Big|_{\lambda=0}=&\frac{2}{s^3}\sum_{j\in V(\graph)}g_j\pi_j\Big(g_j\big(1-\alpha_j\big)+\sum_{l\in V(\graph)}g_l\pi_l\alpha_j\Big)+o(s^{-3}),
    \end{align}
    using that $\lim_{s\to 0}\frac{\text{d}}{\text{d}\ln s}\ln\tilde{K}_{ij}(s)=1-\alpha_i=1+\lim_{s\to 0}\frac{\frac{\text{d}}{\text{d}s}\Tilde{Q}_i(s)}{\tilde{\phi}_i(s)}$, by assumption and by Lemma \ref{lemma:memory_kernels_small_s}.

    Now define $g_b:=-\min_{i\in V(\mathcal{G})}g_i$ and consider  
    \begin{align}
        B(t):=\integ{0}{t}{(g(X_{t'})+g_b)}{t'}=A(t)+g_bt
    \end{align}
    Note that 
    \begin{align}
        \frac{B(t)}{t}=\frac{A(t)}{t}+g_b\Rightarrow \text{Var}\big(\frac{B(t)}{t}\big)=\text{Var}\big(\frac{A(t)}{t}\big).
    \end{align}
    Moreover, $B(t)^2$, and hence $\mathbb{E}[B(t)^2]$, is non-decreasing in $t$ since $g(X_{t})+g_b\geq 0$, so Karamata’s Tauberian theorem can be applied, yielding the following long-time scaling for the second moment of $B(t)$:
    \begin{align}
        \mathbb{E}[B(t)^2]=\sum_{j\in V(\graph)}\pi_j(g_j+g_b)\Big((g_j+g_b)\big(1-\alpha_j\big)+\sum_{l\in V(\graph)}(g_l+g_b)\pi_l\alpha_j\Big)t^2+o(t^2).
    \end{align}
    By Proposition \ref{prop:time_average_mean_converges}, we have:
    \begin{align}
        \mathbb{E}[B(t)]^2&=\Big(\sum_{j\in V(\graph)}\pi_j(g_j+g_b)\Big)^2t^2+o(t^2)\\ &=\big(\sum_{j,l\in V(\graph)}\pi_j\pi_l(g_j+g_b)(g_l+g_b)\big)t^2+o(t^2).
    \end{align}
    Thus, we obtain:
    \begin{align}
        \lim_{t\to \infty}\text{Var}\big(\frac{A(t)}{t}\big)&= \lim_{t\to \infty}\text{Var}\big(\frac{B(t)}{t}\big)\\&=\lim_{t\to \infty}\frac{\mathbb{E}[B(t)^2]-\mathbb{E}[B(t)]^2}{t^2}\\&=\sum_{j\in V(\graph)}\pi_j(g_j+g_b)\Big((g_j+g_b)\big(1-\alpha_j\big)-\sum_{l\in V(\graph)}(g_l+g_b)\pi_l(1-\alpha_j)\Big)\\ &=\sum_{j\in V(\graph)}\pi_j(g_j+g_b)\big(1-\alpha_j\big)\sum_{l\in V(\graph)}(g_j-g_l)\pi_l.
    \end{align}
   By Lemma \ref{lemma:memory_kernels_small_s}, there is an $\alpha\in[0,1]$ such that $\pi_j>0$ implies $\alpha_j=\alpha$, which yields:
   \begin{align}
        \lim_{t\to \infty}\text{Var}\big(\frac{A(t)}{t}\big)&=\big(1-\alpha\big)\Big(\sum_{j\in V(\graph)}\pi_jg_j^2-\big(\sum_{j\in V(\graph)}\pi_jg_j\big)^2\Big),
    \end{align}
    which is zero if and only if $\alpha=1$, as in the finite-mean sojourn time case, or if the variance of $g_j$ with respect to the limiting state distribution $\pi$ vanishes, i.e., $g_i=g_j$ wherever $\pi_i,\pi_j>0$. 

     Since positive limiting variance is enough to rule out convergence in probability, it also rules out almost-sure convergence to a constant, and since the mean of the time average always converges according to Proposition \ref{prop:time_average_mean_converges}, this implies that $A(t)$ exhibits weak ergodicity breaking.
\end{proof}

Theorem \ref{theorem:weak_ergodicity_breaking} shows that the variance of $\frac{A(t)}{t}$ vanishes for $t\to\infty$ in the case of finite-mean sojourn times ($\alpha=1$) but does not make a statement about almost-sure convergence. Ergodicity is a standard result in the finite-mean sojourn time case and can, e.g., be found in \cite{athreya1978limit1,athreya1978limit}.

\section{Example: two-state semi-Markov network}
Finally, we shall illustrate the results of this work for the simplest non-trivial example: a semi-Markov network on two states with back-and-forth transitions, i.e.,
\begin{align}
    \graph=(V(\graph),E(\graph))=\big(\{1,2\},\{(1,2),(2,1)\}\big).
\end{align}
Then $p_{12}=p_{21}=1$, $Q_{12}=Q_1$ and $Q_{21}=Q_2$, so the memory kernels read for \textit{a priori} arbitrary sojourn time distributions:
\begin{align}
    \tilde{K}_{12}(s)=\frac{s\tilde{Q}_{1}(s)}{1-\tilde{Q}_{1}(s)},\qquad \tilde{K}_{21}(s)=\frac{s\tilde{Q}_{2}(s)}{1-\tilde{Q}_{2}(s)}.
\end{align}
Eq.~(\ref{eq:generalized_ME_Laplace}) then yields:
\begin{align}
    \begin{pmatrix}p_{0,1}\\p_{0,2}\end{pmatrix}=(s\mathbb{1}-\tilde{W}(s))\begin{pmatrix}\tilde{p}_{1}(s)\\\tilde{p}_{2}(s)\end{pmatrix}=\begin{pmatrix}s+\tilde{K}_{12}(s) &-\tilde{K}_{21}(s)\\-\tilde{K}_{12}(s)& s+\tilde{K}_{21}(s)\end{pmatrix}\begin{pmatrix}\tilde{p}_{1}(s)\\\tilde{p}_{2}(s)\end{pmatrix}\\ \Rightarrow \begin{pmatrix}\tilde{p}_{1}(s)\\\tilde{p}_{2}(s)\end{pmatrix}=\frac{1}{s\big(s+\tilde{K}_{12}(s)+\tilde{K}_{21}(s)\big)}\begin{pmatrix}s+\tilde{K}_{21}(s) &\tilde{K}_{21}(s)\\\tilde{K}_{12}(s)& s+\tilde{K}_{12}(s)\end{pmatrix}\begin{pmatrix}p_{0,1}\\p_{0,2}\end{pmatrix}.\label{eq:two_state_occupation_probabilities}
\end{align}
The augmented graph of this network is shown in Fig.~\ref{fig:two_state_augmented_graph}(a) and its spanning trees and spanning forests arising in the combinatorial interpretation of the resolvent are depicted in Fig.~\ref{fig:two_state_augmented_graph}(b). Summing over their weights yields:
\begin{align}
    \sum_{\mathcal{T}^{s}_{[\ast]}}w(\mathcal{T})&=s^2+s\tilde{K}_{12}(s)+s\tilde{K}_{21}(s)=s(s+\tilde{K}_{12}(s)+\tilde{K}_{21}(s)),\\ &\sum_{\mathcal{F}^{s,1\to 1}_{[1,\ast]}}w(\mathcal{F})=s+\tilde{K}_{21}(s),~~\sum_{\mathcal{F}^{s,2\to 1}_{[1,\ast]}}w(\mathcal{F})=\tilde{K}_{21}(s)\\ &\sum_{\mathcal{F}^{s,1\to 2}_{[2,\ast]}}w(\mathcal{F})=\tilde{K}_{12}(s),~~\sum_{\mathcal{F}^{s,2\to 2}_{[2,\ast]}}w(\mathcal{F})=s+\tilde{K}_{12}(s)  
\end{align}
verifying Eq.~(\ref{eq:transient_semi_MCTT}).

\begin{figure}[t]
    \centering
    \includegraphics[width=0.9\linewidth]{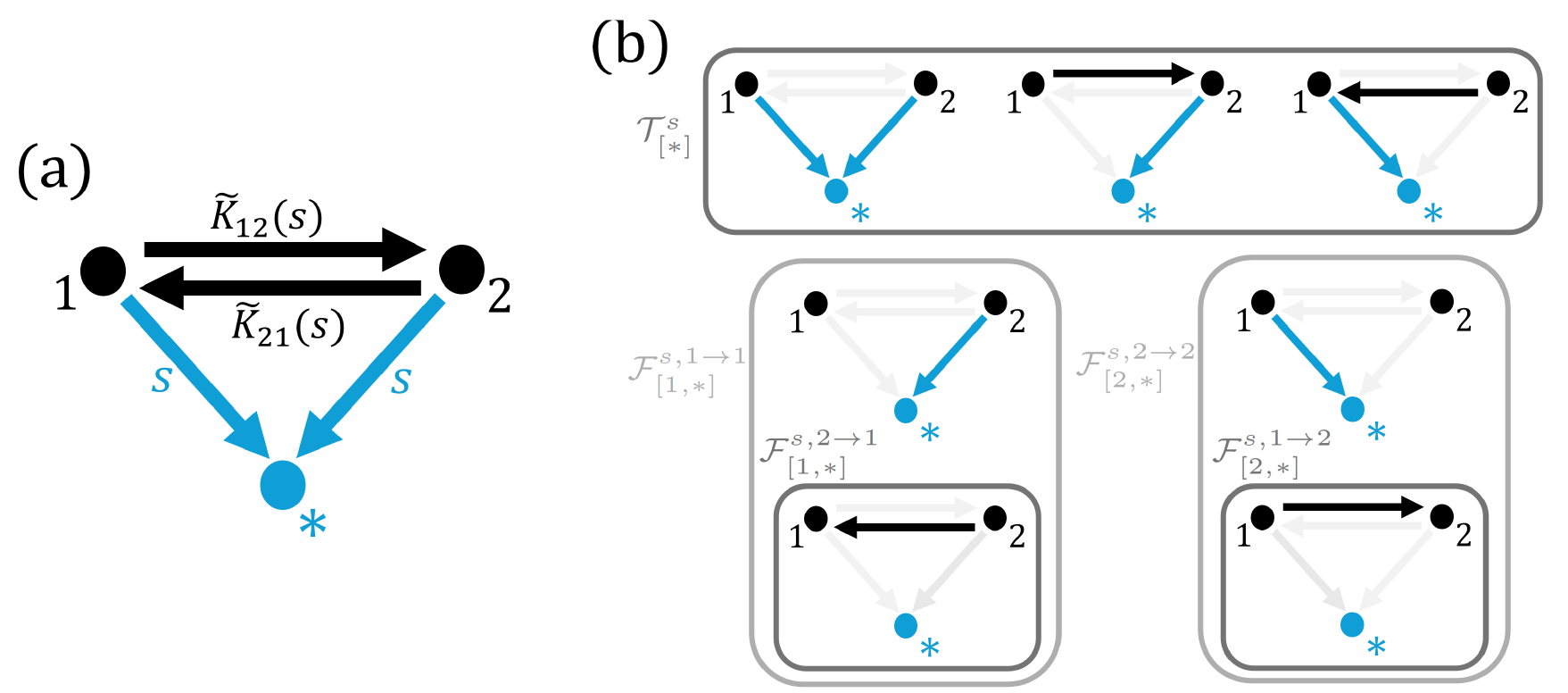}
    \caption{Illustration of the augmented graph of a semi-Markov network with two states. (a) The network on the augmented graph $\graph^\ast=(V(\graph^\ast),E(\graph^\ast))=(\{1,2,\ast\},\{(1,2),(2,1),(1,\ast),(2,\ast))\})$. The edges leading to $\ast$ are given a weight $s$. (b) The spanning trees of $\graph^\ast$ rooted at $\ast$ (upper panel) and the spanning forests with roots $1,\ast$ and $2,\ast$ (lower panel) which correspond to the combinatorial terms to explicitly express the resolvent.}
    \label{fig:two_state_augmented_graph}
\end{figure}

For finite-mean sojourn times, we have
\begin{align}
    \tilde{K}_{12}(s)=\mathbb{E}[\tau_1]^{-1}+o(1),~~\tilde{K}_{21}(s)=\mathbb{E}[\tau_2]^{-1}+o(1),
\end{align}
so, $\kappa_1=\mathbb{E}[\tau_1]^{-1}$, $\kappa_2=\mathbb{E}[\tau_2]^{-1}$ and the stationary limit of Eq.~(\ref{eq:two_state_occupation_probabilities}) becomes:
\begin{align}
    \begin{pmatrix}\pi_{1}\\\pi_{2}\end{pmatrix}&=\lim_{s\to 0}s\begin{pmatrix}\tilde{p}_{1}(s)\\\tilde{p}_{2}(s)\end{pmatrix}\\&=\frac{1}{\kappa_1+\kappa_2}\begin{pmatrix}\kappa_2 &\kappa_2\\\kappa_1& \kappa_1\end{pmatrix}\begin{pmatrix}p_{0,1}\\p_{0,2}\end{pmatrix}\\&=\frac{1}{\kappa_1+\kappa_2}\begin{pmatrix}\kappa_2\\\kappa_1\end{pmatrix}\\&=\frac{1}{\mathbb{E}[\tau_{1}]+\mathbb{E}[\tau_{2}]}\begin{pmatrix}\mathbb{E}[\tau_{1}] \\\mathbb{E}[\tau_{2}] \end{pmatrix},
\end{align}
using that $p_{0,1}+p_{0,2}=1$, in agreement with Eq.~(\ref{eq:stationary_solution}), as there are only two spanning trees, which yield:
\begin{align}
    \sum_{\mathcal{T}_{[1]}}w(\mathcal{T})=\kappa_2,~~\sum_{\mathcal{T}_{[2]}}w(\mathcal{T})=\kappa_1.
\end{align}

Similarly, for infinite-mean sojourn times with the small-$s$ scalings
\begin{align}
    \tilde{K}_{12}(s)=\kappa_1s^{1-\alpha_1}+o(s^{1-\alpha_1}),~~\tilde{K}_{21}(s)=\kappa_2s^{1-\alpha_2}+o(s^{1-\alpha_2}),
\end{align}
we have for $\alpha_1<\alpha_2$:
\begin{align}
    \begin{pmatrix}\pi_{1}\\\pi_{2}\end{pmatrix}&=\lim_{s\to 0}s\begin{pmatrix}\tilde{p}_{1}(s)\\\tilde{p}_{2}(s)\end{pmatrix}=\begin{pmatrix}1\\0\end{pmatrix},
\end{align} 
and for $\alpha_1=\alpha_2=\alpha$:
\begin{align}
    \begin{pmatrix}\pi_{1}\\\pi_{2}\end{pmatrix}&=\lim_{s\to 0}s\begin{pmatrix}\tilde{p}_{1}(s)\\\tilde{p}_{2}(s)\end{pmatrix}=\frac{1}{\kappa_1+\kappa_2}\begin{pmatrix}\kappa_2\\\kappa_1\end{pmatrix}.
\end{align}

Convergence to these limiting occupation distributions is illustrated in Fig.~\ref{fig:convergence}. For the finite-mean case, gamma-distributed sojourn times are chosen, i.e., they have probability densities of the shape:
\begin{align}
    \psi_i(t)=\frac{t^{r_i-1}}{\Gamma(r_i)\theta_i^{r_i}}e^{-\frac{t}{\theta_i}},
\end{align}
with mean sojourn time
\begin{align}
    \mathbb{E}[\tau_i]=r_i\theta_i,
\end{align}
for $i\in\{1,2\}$. For the infinite-mean case, the survival probabilities are chosen to be:
\begin{align}
    \phi_i(t)=(1+\frac{t}{\tau_i})^{-\alpha_i},
\end{align}
for $\alpha_i\in(0,1)$, corresponding to a Pareto type II distribution.
\begin{figure}
    \centering
    \includegraphics[width=0.8\linewidth]{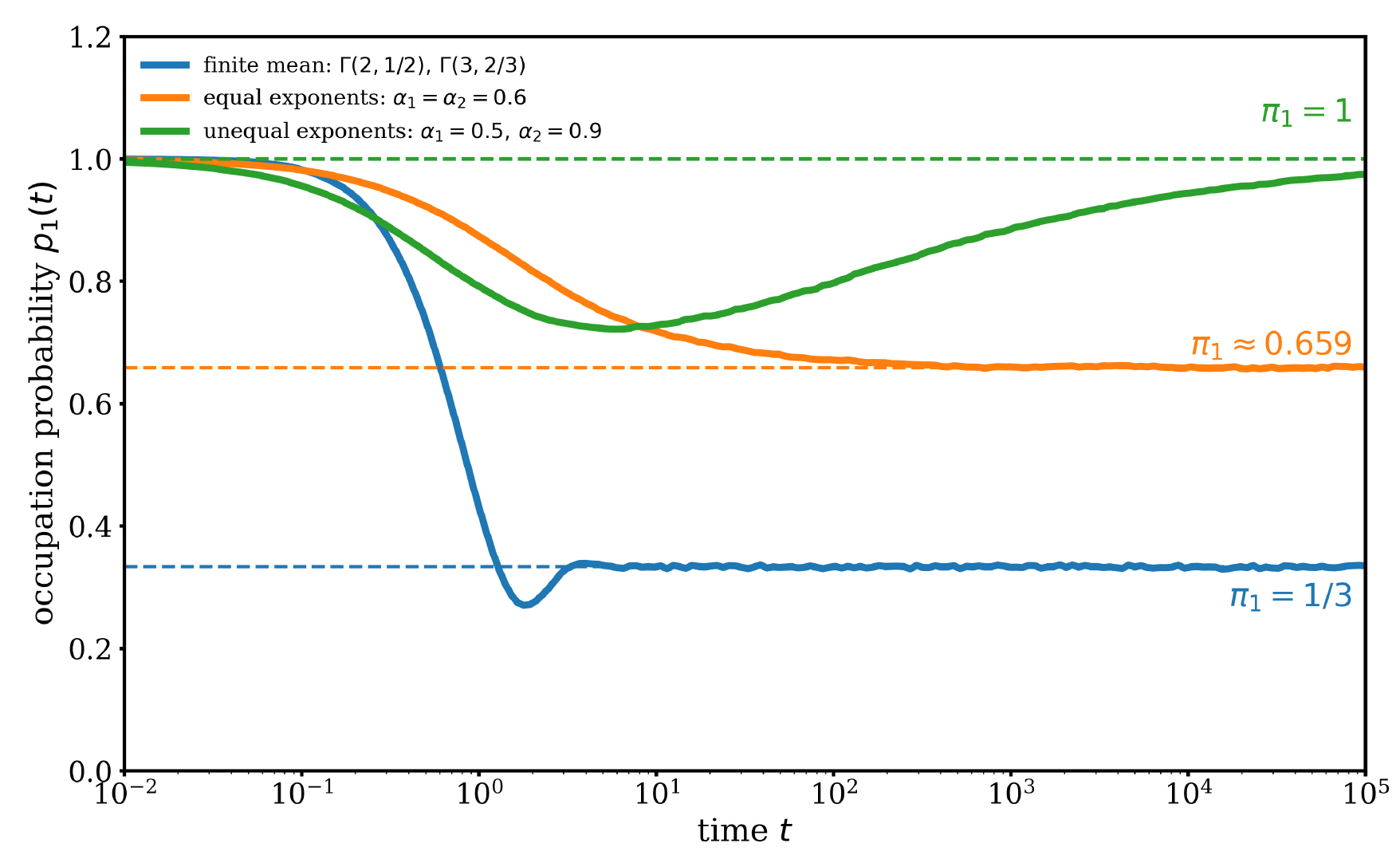}
    \caption{Illustration of the convergence to the limiting occupation distribution for two-state semi-Markov networks. The plot shows Monte Carlo estimates of the occupation probability of state $1$ $p_1(t)$ from $10^5$ trajectories and the dashed lines show the analytical solutions for the limiting occupation distribution. For the finite-mean case (blue curve), gamma distributions were chosen with $(r_1,\theta_1)=(2,\frac{1}{2})$ and $(r_2,\theta_2)=(3,\frac{2}{3})$ for $Q_1$ and $Q_2$, respectively. The mean sojourn times are therefore $\mathbb{E}[\tau_1]=1$ and $\mathbb{E}[\tau_2]=2$ and the stationary occupation probability reads $\pi_1=\frac{\mathbb{E}[\tau_1]}{\mathbb{E}[\tau_1]+\mathbb{E}[\tau_2]}=\frac{1}{3}$. For the heavy-tailed cases, the survival probabilities are $\phi_i(t)=(1+\frac{t}{\tau_i})^{-\alpha_i}$ for states $i\in\{1,2\}$. In the case of equal exponents with $\alpha=\alpha_1=\alpha_2=0.6$ and $(\tau_1,\tau_2)=(3,1)$ (orange curve), the limiting probability is $\pi_1=\frac{\tau_1^\alpha}{\tau_1^\alpha+\tau_2^\alpha}\approx 0.659$, while for unequal exponents, $\alpha_1=0.5$ and $\alpha_2=0.9$ and $(\tau_1,\tau_2)=(1,1)$ (green curve), the limiting probability is $\pi_1=1$.}
    \label{fig:convergence}
\end{figure}
The dashed lines indicate the exact asymptotic values from Theorem \ref{theorem:SMCTT}, while the solid curves are Monte Carlo estimates of the occupation probability $p_1(t)$ based on $10^5$ trajectories. For the case of infinite-mean sojourn times, we have $\pi_1,\pi_2>0$ for equal exponents for $\alpha_1=\alpha_2$, and $\pi_1=1,\pi_2=0$ for $\alpha_1<\alpha_2$.

We can also illustrate weak ergodicity breaking in the case of infinite-mean sojourn times. To this end, consider the empirical occupation fraction of state 1,
\begin{align}
    \rho_1(t):=\frac{1}{t}\integ{0}{t}{\mathbb{1}_{\{X_{t'}~=~1\}}}{t'},
\end{align}
which corresponds to an additive state observable $A(t)$ with $g_1=1,g_2=0$ divided by $t$.  
In the two-state case, $\rho(t)$ converges for $t\to\infty$ in distribution to a generalized arcsine distribution with density given by \cite{bel2005weak}
\begin{align}
    f_{\alpha,R}(p)=\frac{\sin{(\pi \alpha)}}{\pi}\frac{Rp^{\alpha-1}(1-p)^{\alpha-1}}{R^2(1-p)^{2\alpha}+p^{2\alpha}+2Rp^\alpha(1-p)^{\alpha}\cos{(\pi \alpha)}},
\end{align}
where $R=\frac{\kappa_2}{\kappa_1}$. Its first moment and variance read~\cite{akimoto2023statistics}:
\begin{align}
    \mathbb{E}[\rho]=\frac{\kappa_2}{\kappa_1+\kappa_2}=\pi_1,~~\text{Var}(\rho)=(1-\alpha)\frac{\kappa_1\kappa_2}{(\kappa_1+\kappa_2)^2}=(1-\alpha)\pi_1\pi_2,
\end{align}
in agreement with Eqs.~(\ref{eq:convergence_mean}) and (\ref{eq:convergence_variance}), respectively.

For the hitting-time distribution for going from $1$ to $2$ with $p_{0,1}=1$, we have $\tilde{f}(s)=\tilde{Q}_1(s)$ since the hitting time in this case simply coincides with the sojourn time in the initial state. Applying Eq.~(\ref{eq:FPT_distribution_graph_theory_1}), we also obtain:
\begin{align}
    \tilde{f}(s)=\frac{\Tilde{K}_{12}(s)\sum_{\mathcal{F}^{s,1\to 1}_{[1,2,\ast]}}w(\mathcal{F})}{\sum_{\mathcal{F}^{s}_{[2,\ast]}}w(\mathcal{F})}=\frac{\Tilde{K}_{12}(s)}{s+\Tilde{K}_{12}(s)}=\tilde{Q}_1(s),
\end{align}
using that $w(\{1,2\},\emptyset)=1$ and $\tilde{K}_{12}(s)=\frac{s\tilde{Q}_1(s)}{1-\tilde{Q}_1(s)}$.

\section{Conclusion}\label{sec:conclusion}

In this work, the graph-theoretic framework for Markov networks has been extended to finite semi-Markov networks, allowing us to express quantities such as occupation probabilities or hitting-time distributions in Laplace space and their moments as combinatorial sums over the weights of spanning trees and forests of an augmented graph. The special case of Markov networks follows for constant weights, as previously derived in~\cite{voits2026graph}. These explicit expressions are particularly useful for showing asymptotic results, such as the shape of the stationary occupation probabilities, the long-time scaling of the hitting-time distribution or weak ergodicity breaking. Moreover, these expressions may facilitate proofs or extend their scope to semi-Markov networks, as illustrated in the case of Kemeny's constant.

In Markov networks, the spectral properties of the generator and, in particular, the spectral gap play an important role~\cite{norris1998markov,levin2026markov}. Since the characteristic polynomial coincides with the determinant of the inverse resolvent, 
\begin{align}
    \chi_W(s)=\det{(s\mathbb{1}-W)}=\sum_{\mathcal{T}^s_{[\ast]}}w(\mathcal{T}),
\end{align}
where 
\begin{align}
    w(\mathcal{T})=\big(\prod_{\substack{(i,j)\in E(\mathcal{T}),~j\neq \ast}}k_{ij}\big)s^{|\{i\in V(\mathcal{T})~|~(i,\ast)\in E(\mathcal{T})\}|},
\end{align}
a natural generalization to semi-Markov networks would be to consider instead
\begin{align}
    \det{\big(s\mathbb{1}-\tilde{W}(s)\big)}=\sum_{\mathcal{T}^s_{[\ast]}}w(\mathcal{T}),
\end{align}
with
\begin{align}
    w(\mathcal{T})=\big(\prod_{\substack{(i,j)\in E(\mathcal{T}),~j\neq \ast}}\tilde{K}_{ij}(s)\big)s^{|\{i\in V(\mathcal{T})~|~(i,\ast)\in E(\mathcal{T})\}|}
\end{align}
$w(\mathcal{T})=\prod_{(i,j)\in E(\mathcal{T})}\tilde{K}_{ij}(s)$. 
It should be noted that this expression was introduced here for $s\in (0,\infty)$ and studying its zeros as a generalization of the roots of the characteristic polynomial would require extending the expression to the complex left half-plane. However, in the case of sojourn times with infinite moments, branch points arise due to the non-integer scaling of $\tilde{K}_{ij}(s)$ for $s\to 0$, which makes this extension more delicate and requires a more thorough treatment in the future.

In conclusion, the graph-theoretic framework appears to be a promising approach to the study of semi-Markov networks, which may contribute to a better understanding of the structural similarities of Markov and semi-Markov networks and how they depend on the sojourn-time distributions. Beyond extending mathematical results, it may also be useful for analyzing coarse-grained stochastic dynamics as they arise, e.g., in biomolecular and cellular state transitions, transport through disordered media or credit-rating dynamics.

\backmatter

\bmhead{Acknowledgements}

I would like to thank Prof. Ulrich Schwarz for support and for helpful comments. In addition, I gratefully acknowledge the support of the German Academic Scholarship Foundation (Studienstiftung des deutschen Volkes).

\begin{appendices}
\setcounter{theorem}{0}
\renewcommand{\thetheorem}{A.\arabic{theorem}}
\setcounter{equation}{0}
\renewcommand{\theequation}{A.\arabic{equation}}

\section*{Appendix: Laplace–Stieltjes transforms and asymptotic limits}\label{appendix_Laplace_Stieltjes}
In this appendix, we collect the relevant theory of Laplace–Stieltjes transforms and asymptotic results used throughout the paper.
\begin{definition}
    Let $\mathcal{B}([0,\infty))$ be the Borel-$\sigma$-algebra on $[0,\infty)$ and let $Q:\mathcal{B}([0,\infty))\to [0,\infty)$ be a finite measure. The \textit{Laplace–Stieltjes transform} of $Q$ is defined as
    \begin{align}
        \tilde{Q}(s)=\integ{[0,\infty)}{}{e^{-st}Q(}{t)},
    \end{align}
    for $s\in (0,\infty)$.
\end{definition}

If $Q$ admits a density, $Q(\text{d}t)=\psi(t)\text{d}t$,
this reduces to the standard Laplace transform of $\psi$:
\begin{align}
    \tilde{Q}(s)=\integ{[0,\infty)}{}{e^{-st}Q(}{t)}=\integ{0}{\infty}{\psi(t)e^{-st}}{t}.
\end{align}

The moments of the distribution can be calculated by evaluating the derivatives of the Laplace–Stieltjes transform in $s=0$:

\begin{lemma}
    Let $Q$ be a probability measure on $[0,\infty)$ and let $\tilde{Q}(s)$ be its Laplace–Stieltjes transform. Then the moments of $\tau\sim Q$ are given by
    \begin{align}
        \mathbb{E}[\tau^n]=(-1)^n\lim_{s\to0}\frac{\text{d}^n}{\text{d}s^n}\tilde{Q}(s),
    \end{align}
    for $n\in \mathbb{N}_0$.
\end{lemma}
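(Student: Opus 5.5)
The plan is to first establish, for every fixed $s>0$, the identity
\begin{align}
    (-1)^n\frac{\text{d}^n}{\text{d}s^n}\tilde{Q}(s)=\integ{[0,\infty)}{}{t^ne^{-st}Q(}{t)},
\end{align}
and then let $s\downarrow 0$ using monotone convergence. The case $n=0$ is immediate. There $\tilde{Q}(s)=\int e^{-st}Q(\text{d}t)\uparrow Q([0,\infty))=1=\mathbb{E}[\tau^0]$ as $s\downarrow0$, because $e^{-st}\uparrow 1$ pointwise.

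For the differentiation step I would argue by induction on $n$, differentiating under the integral sign. Fix $s_0>0$ and restrict to $s\in(s_0/2,\infty)$. On this range the candidate derivative integrand $t^{n+1}e^{-st}$ is dominated by $t^{n+1}e^{-s_0t/2}\leq C_{n,s_0}$. Here $C_{n,s_0}=\sup_{t\geq0}t^{n+1}e^{-s_0t/2}<\infty$, which is an integrable constant because $Q$ is a finite measure. The standard theorem on differentiation of parameter integrals (via the mean value theorem and dominated convergence) then gives $\frac{\text{d}}{\text{d}s}\int t^ne^{-st}Q(\text{d}t)=-\int t^{n+1}e^{-st}Q(\text{d}t)$ on a neighbourhood of $s_0$. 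Since $s_0>0$ was arbitrary, the identity above holds for all $s>0$ and all $n\in\mathbb{N}_0$. In particular $\tilde{Q}$ is smooth on $(0,\infty)$.

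For the limit, note that for fixed $t\geq0$ the map $s\mapsto t^ne^{-st}$ is non-negative and increases to $t^n$ as $s\downarrow0$. The monotone convergence theorem, applied along any sequence $s_k\downarrow0$, then yields
\begin{align}
    \lim_{s\to0}(-1)^n\frac{\text{d}^n}{\text{d}s^n}\tilde{Q}(s)=\integ{[0,\infty)}{}{t^nQ(}{t)}=\mathbb{E}[\tau^n].
\end{align}
Because the integral is monotone in $s$, the limit along sequences is the same as the one-sided limit. This holds in $[0,\infty]$, so the identity remains valid, with both sides equal to $+\infty$, when the $n$-th moment is infinite. That is the form in which the lemma is used earlier, for example in Lemma~\ref{lemma:finite_K}.

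I do not expect a real obstacle here. The only point needing care is the differentiation under the integral near $s=0$, where no uniform integrable bound exists if moments are infinite. The proof sidesteps this by differentiating only for $s>0$ with a locally uniform bound, and by handling the limit $s\to0$ separately through monotone rather than dominated convergence.
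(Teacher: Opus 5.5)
Your proposal is correct and follows the same route as the paper: identify $(-1)^n\frac{\text{d}^n}{\text{d}s^n}\tilde{Q}(s)$ with $\int_{[0,\infty)}t^ne^{-st}\,Q(\text{d}t)$ for $s>0$, then let $s\downarrow0$ by monotone convergence. You additionally justify the differentiation under the integral sign with a locally uniform bound and note that the identity holds in $[0,\infty]$, both of which the paper leaves implicit; keeping $(-1)^n$ on the left also avoids the sign slip in the paper's final display.
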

\begin{proof}
    Observe that for $n\in \mathbb{N}_0$,
    \begin{align}
       \frac{\text{d}^n}{\text{d}s^n}\tilde{Q}(s)=(-1)^n\integ{[0,\infty)}{}{t^ne^{-st}Q(}{t)}.    
    \end{align}
    Hence, by monotone convergence, we have
    \begin{align}
       \lim_{s\to0}\frac{\text{d}^n}{\text{d}s^n}\tilde{Q}(s)=(-1)^n\integ{[0,\infty)}{}{t^nQ(}{t)}=\mathbb{E}[\tau^n].    
    \end{align}
\end{proof}

We used the small $s$-asymptotics in Laplace space to obtain the long-time limits 
in time domain. This was based on the following standard results:

\begin{lemma}
    Let $f:[0,\infty)\to [0,\infty)$ be a locally integrable function for $t\geq 0$ and $\tilde{f}(s)$ its Laplace transform for $s>0$. If $\lim_{t\to \infty}f(t)$ exists and is finite, then
    \begin{align}
        \lim_{s\to 0}s\tilde{f}(s)=\lim_{t\to \infty}f(t).
    \end{align}
\end{lemma}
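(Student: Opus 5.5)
This is the classical final value theorem, and the plan is a direct $\epsilon$--$T$ splitting argument in Laplace space. Set $L:=\lim_{t\to\infty}f(t)\in[0,\infty)$. Since $\integ{0}{\infty}{se^{-st}}{t}=1$ for every $s>0$, we have
\begin{align}
    s\tilde{f}(s)-L=\integ{0}{\infty}{s e^{-st}\big(f(t)-L\big)}{t},
\end{align}
and it suffices to show that the right-hand side tends to zero as $s\to 0$. Before this identity can be used, $\tilde f(s)$ has to exist for every $s>0$. Convergence of $f(t)$ gives a $T_0$ such that $f$ is bounded by $L+1$ on $[T_0,\infty)$. On $[0,T_0]$, local integrability gives $\integ{0}{T_0}{f(t)}{t}<\infty$. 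Together these show that the Laplace integral converges absolutely for all $s>0$, and that the splitting below is legitimate. Nonnegativity of $f$ is not essential here; it only ensures that the integrals are unambiguous.

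Fix $\epsilon>0$ and choose $T_\epsilon\geq T_0$ such that $|f(t)-L|<\epsilon$ for all $t\geq T_\epsilon$. Split the integral at $T_\epsilon$. On the tail, bound $|f-L|$ by $\epsilon$ and use $\integ{T_\epsilon}{\infty}{se^{-st}}{t}\leq 1$, so the tail contributes at most $\epsilon$. On the compact part, use $e^{-st}\leq 1$ to obtain
\begin{align}
    \Big|\integ{0}{T_\epsilon}{se^{-st}\big(f(t)-L\big)}{t}\Big|\leq s\Big(\integ{0}{T_\epsilon}{|f(t)|}{t}+LT_\epsilon\Big).
\end{align}
The bracket is a finite constant independent of $s$ by local integrability, so this term tends to zero as $s\to0$. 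Hence
\begin{align}
    \limsup_{s\to0}\big|s\tilde f(s)-L\big|\leq\epsilon,
\end{align}
and since $\epsilon>0$ was arbitrary, $\lim_{s\to0}s\tilde f(s)=L$. This mirrors the argument of Proposition \ref{prop:time_average_mean_converges}, with the Cesàro kernel $\frac{1}{t}\mathbb{1}_{[0,t]}$ replaced by the Abel kernel $se^{-st}$.

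No step poses a real difficulty. The only point needing care is the existence of $\tilde f(s)$ for all $s>0$, which must be checked before the identity above is written down; this is handled by the boundedness of $f$ on $[T_0,\infty)$ together with integrability on $[0,T_0]$.
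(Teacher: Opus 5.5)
Your proof is correct and uses essentially the same $\epsilon$--$t_\epsilon$ splitting as the paper: the integral over $[0,t_\epsilon]$ vanishes as $s\to0$ by local integrability, and the tail is controlled by $(c\pm\epsilon)e^{-st_\epsilon}$. Subtracting $L$ via $\int_0^\infty s e^{-st}\,\text{d}t=1$ and working with $\limsup$ is slightly cleaner than the paper's version, which takes a limit of the tail integral before showing that this limit exists.
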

\begin{proof}
    Let $c:=\lim_{t\to \infty}f(t)<\infty$. Then, for every $\epsilon >0$, there exists $t_\epsilon\in (0,\infty)$ such that
    \begin{align}
        \big|f(t)-c\big|<\epsilon
    \end{align}
    for $t>t_\epsilon$. Consider,
    \begin{align}
        s\tilde{f}(s)=s\integ{0}{t_\epsilon}{f(t)e^{-st}}{t}+s\integ{t_\epsilon}{\infty}{f(t)e^{-st}}{t}.
    \end{align}
    Since $\integ{0}{t_\epsilon}{f(t)}{t}<\infty$, 
    \begin{align}
        \lim_{s\to 0}s\integ{0}{t_\epsilon}{f(t)e^{-st}}{t}=0.
    \end{align}
    For the second integral, the following bounds must hold:
    \begin{align}
        (c-\epsilon)e^{-st_\epsilon} \leq s\integ{t_\epsilon}{\infty}{f(t)e^{-st}}{t}\leq (c+\epsilon)e^{-st_\epsilon},
    \end{align}
    so in the limit $s\to 0$, we get
    \begin{align}
        c-\epsilon \leq \lim_{s\to 0}s\integ{t_\epsilon}{\infty}{f(t)e^{-st}}{t}\leq c+\epsilon.
    \end{align}
    Since $\epsilon>0$ was arbitrary, this implies
    \begin{align}
        \lim_{s\to 0}s\tilde{f}(s)=\lim_{s\to 0}s\integ{t_\epsilon}{\infty}{f(t)e^{-st}}{t}=c=\lim_{t\to \infty}f(t).
    \end{align}
\end{proof}

The lemma assumes the existence of $\lim_{t\to \infty}f(t)$ to deduce the limit $\lim_{s\to 0}s\tilde{f}(s)$, which is known as the Abelian direction of the final value theorem. The converse is not always true, i.e., the existence of $\lim_{s\to 0}s\tilde{f}(s)$ need not imply the existence of $\lim_{t\to \infty}f(t)$, in general, but additional conditions are required to assure it. Deducing the long-time asymptotics of $f(t)$ from the small-$s$ behavior of $\tilde{f}(s)$ is known as the Tauberian direction. There are different results known for this direction and the one we used is known as Karamata's Tauberian theorem:

\begin{theorem}[Karamata]
    Let $f:[0,\infty)\to [0,\infty)$ be a locally integrable function and ultimately monotone. Let $\tilde{f}(s)$ be its Laplace transform. Then the following holds:
    \begin{align}
        f(t)\sim c\frac{t^{\alpha-1}}{\Gamma(\alpha)}~~\text{for }t\to \infty \Leftrightarrow \tilde{f}(s)\sim cs^{-\alpha}~~\text{for }s\to 0,
    \end{align}
    for $c>0$ and $\alpha >0$.
\end{theorem}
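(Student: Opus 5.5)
The statement is Karamata's classical Tauberian theorem, combined with the monotone density theorem. I would prove it by passing to the integrated function $U(t):=\integ{0}{t}{f(t')}{t'}$. $U$ is non-decreasing, and its Laplace–Stieltjes transform is $\integ{[0,\infty)}{}{e^{-st}U(}{t)}=\tilde{f}(s)$. The argument then has three steps:
\begin{align}
f(t)\sim c\frac{t^{\alpha-1}}{\Gamma(\alpha)}\ \Rightarrow\ U(t)\sim c\frac{t^\alpha}{\Gamma(\alpha+1)}\ \Rightarrow\ \tilde{f}(s)\sim cs^{-\alpha}\ \Rightarrow\ U(t)\sim c\frac{t^\alpha}{\Gamma(\alpha+1)}\ \Rightarrow\ f(t)\sim c\frac{t^{\alpha-1}}{\Gamma(\alpha)}.
\end{align}

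\textbf{Abelian direction.} The first implication is integration of an asymptotic relation. For every $\epsilon>0$ the bounds $(1\pm\epsilon)ct^{\alpha-1}/\Gamma(\alpha)$ hold for $t\geq t_\epsilon$, and the contribution of $[0,t_\epsilon]$ is $O(1)=o(t^\alpha)$ because $\alpha>0$. For the second implication, substitute $t=x/s$:
\begin{align}
s^\alpha\tilde{f}(s)=\integ{0}{\infty}{e^{-x}\,s^{\alpha}U(x/s)}{x}.
\end{align}
Since $s^\alpha U(x/s)\to cx^\alpha/\Gamma(\alpha+1)$ pointwise, it remains to justify passing the limit under the integral. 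The difficulty is the region of small $x$, where $x/s$ need not be large. I would split at $x=st_\epsilon$ and use the bound $U(t)\leq C(1+t^\alpha)$. The limit integral is $c\,\Gamma(\alpha+1)/\Gamma(\alpha+1)=c$, which gives $\tilde{f}(s)\sim cs^{-\alpha}$.

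\textbf{Tauberian direction.} This is the main obstacle. I would use Karamata's approximation argument. Fix $t=1/s$ and define the measures $\mu_t(\text{d}x):=U(t\,\text{d}x)/U(t)$ on $[0,\infty)$; alternatively normalise by $c\,t^\alpha/\Gamma(\alpha+1)$. The hypothesis $\tilde f(s)\sim cs^{-\alpha}$ gives, for every $q>0$,
\begin{align}
\integ{[0,\infty)}{}{e^{-qx}\,\mu_t(}{x})\longrightarrow q^{-\alpha}=\integ{0}{\infty}{e^{-qx}\,\frac{x^{\alpha-1}}{\Gamma(\alpha)}}{x}.
\end{align}
The goal is to upgrade this to convergence of $\integ{}{}{h(x)\,e^{-x}\mu_t(}{x})$ for $h=e^{x}\mathbb{1}_{[0,1]}(x)$, because that integral equals $U(t)$ up to normalisation. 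The steps are:
\begin{itemize}
\item By linearity, convergence holds for every $h$ that is a polynomial in $e^{-x}$.
\item By Weierstrass approximation on $[0,1]$ in the variable $y=e^{-x}$, it extends to continuous $h$.
\item Finally, sandwich the discontinuous function $y\mapsto y^{-1}\mathbb{1}_{[e^{-1},1]}(y)$ between continuous functions whose integrals against the limit density differ by at most $\epsilon$.
\end{itemize}
The limit density is absolutely continuous, so the jump at $y=e^{-1}$ carries no mass in the limit. This yields $U(t)\sim ct^\alpha/\Gamma(\alpha+1)$.

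\textbf{Monotone density step.} The last step uses that $f$ is ultimately monotone. Suppose $f$ is eventually non-decreasing, and take $b>1$. Then
\begin{align}
(b-1)t\,f(t)\leq U(bt)-U(t)\leq (b-1)t\,f(bt).
\end{align}
Dividing by $t^\alpha$ and inserting the asymptotics of $U$ gives
\begin{align}
\limsup_{t\to\infty}\frac{f(t)}{t^{\alpha-1}}\leq \frac{c}{\Gamma(\alpha+1)}\,\frac{b^\alpha-1}{b-1}.
\end{align}
The same inequalities give the matching $\liminf$ bound. Letting $b\downarrow1$ turns the right-hand side into $c\alpha/\Gamma(\alpha+1)=c/\Gamma(\alpha)$, so $f(t)\sim c\,t^{\alpha-1}/\Gamma(\alpha)$. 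The eventually non-increasing case, which is the relevant one for $\alpha<1$ as in Theorem \ref{Theorem:FPT_asymptotics}, is symmetric with the inequalities reversed. The positivity $c>0$ is used in the normalisation of $\mu_t$.
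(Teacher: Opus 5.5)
Your proof is correct in substance, but it does not follow the paper's route. The paper gives no argument and only refers to Feller. Feller's Tauberian step rescales $U$ and invokes the extended continuity theorem for Laplace--Stieltjes transforms, which rests on Helly selection and uniqueness of the transform. This gives vague convergence of the rescaled measures to $x^{\alpha-1}\text{d}x/\Gamma(\alpha)$, and it covers general regularly varying asymptotics $t^{\alpha-1}L(t)$ with $L$ slowly varying. You instead run Karamata's original argument: exponential polynomials, Weierstrass approximation in $y=e^{-x}$, and a sandwich of the indicator that uses positivity of $\mu_t$ and absolute continuity of the limit. That argument is self-contained and needs no compactness or uniqueness theorem. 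As written it only handles pure power laws, but that is all the paper uses. Your closing $b\downarrow 1$ step is the same monotone-density argument Feller uses to pass from $U$ to $f$.

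One slip needs fixing in the Tauberian part. You cannot normalise $\mu_t$ by $U(t)$. Then $\int e^{-qx}\mu_t(\text{d}x)=\tilde f(q/t)/U(t)$, and its convergence presupposes exactly the asymptotics of $U$ you are trying to prove. With your alternative normalisation $ct^\alpha/\Gamma(\alpha+1)$, the limit is $\Gamma(\alpha+1)q^{-\alpha}$, not $q^{-\alpha}$. Normalise instead by $ct^\alpha$, or by $\tilde f(1/t)$ as Feller does. Then the limit is $q^{-\alpha}$, the limit measure is $x^{\alpha-1}\text{d}x/\Gamma(\alpha)$, and $\mu_t([0,1])=U(t)/(ct^\alpha)\to 1/\Gamma(\alpha+1)$, which is your stated conclusion.

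Separately, the split at $x=st_\epsilon$ in the Abelian step is unnecessary. The bound $U(t)\le C(1+t^\alpha)$ already gives the integrable majorant $C(1+x^\alpha)e^{-x}$ for $s\le 1$.
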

\begin{proof}
    See, e.g., \cite{feller1991introduction}.
\end{proof}

The monotonicity assumption is automatically satisfied when
$f$ is a survival probability.

\end{appendices}

\section*{Declarations}

\noindent\textbf{Conflict of interest.} 
The author has no competing interests to declare.

\noindent\textbf{Data availability}
The code and data used for the illustration in Fig.~\ref{fig:convergence} are available under \hyperlink{https://github.com/JVoits/Illustration-two-state-semi-Markov-network}{https://github.com/JVoits/Illustration-two-state-semi-Markov-network}.

\bibliography{sn-bibliography}

\end{document}